\crefname{theorem}{theorem}{Theorems}
\Crefname{theorem}{Theorem}{Theorems}
\newaliascnt{lemma}{theorem}
\crefname{lemma}{lemma}{lemmas}
\Crefname{lemma}{Lemma}{Lemmas}
\newaliascnt{corollary}{theorem}
\crefname{corollary}{corollary}{corollaries}
\Crefname{corollary}{Corollary}{Corollaries}
\newaliascnt{proposition}{theorem}
\newtheorem{proposition}[proposition]{Proposition}
\crefname{proposition}{proposition}{propositions}
\Crefname{proposition}{Proposition}{Propositions}
 \crefname{example}{example}{examples}
 \Crefname{Example}{Example}{Examples}
\crefname{figure}{figure}{figures}
\Crefname{Figure}{Figure}{Figures}
\Crefname{assumption}{\textbf{H}\hspace{-3pt}}{\textbf{H}\hspace{-3pt}}
\crefname{assumption}{\textbf{H}}{\textbf{H}}
\Crefname{assumptionG}{\textbf{G}\hspace{-3pt}}{\textbf{G}\hspace{-3pt}}
\crefname{assumptionG}{\textbf{G}}{\textbf{G}}
\Crefname{assumptionA}{\textbf{A}\hspace{-3pt}}{\textbf{A}\hspace{-3pt}}
\crefname{assumptionA}{\textbf{A}}{\textbf{A}}
\def\Ly{\Lt_y}
\def\xGT{x^0}
\def\thetaMSE{\theta_{\dagger}}
\def\EQ{\mathrm{EQ}}
\def\SUG{\mathrm{SUG}}
\def\czero{c_0}
\def\powerdelta{p}
\def\Frob{\mathrm{Frob}}
\def\dimf{d_{\mathtt{f}}}
\def\dimp{d_{\mathtt{p}}}
\def\dimm{d_{\mathtt{m}}}
\def\MAPname{$\mathrm{MAP}$}
\def\SUGARname{$\mathrm{SUGAR}$}
\def\MSEname{$\mathrm{MSE}$}
\def\SNRname{$\mathrm{SNR}$}
\def\PSNRname{$\mathrm{PSNR}$}
\def\SUGARname{$\mathrm{SUGAR}$}
\def\tvname{$\mathrm{TV}$}
\def\tgvname{$\mathrm{TGV}$}
\def\ttheta{\tilde{\theta}}
\def\tg{\tilde{g}}
\def\thetaavg{\bar{\theta}}
\def\thetaEB{\bar{\theta}_N}
\def\thetaStar{\theta_{\star}}
\def\MAP{\mathrm{MAP}}
\def\MMSE{\mathrm{MMSE}}
\def\tx{\tilde{x}}
\def\dim{d}
\def\dimy{d_y}
\def\dimtheta{d_{\Theta}}
\newcommand{\cball}[2]{\overline{\operatorname{B}}(#1,#2)}
\newcommand{\prox}{\operatorname{prox}}
\newcommand{\bRker}{\bar{\mathrm{R}}}
\newcommand{\Rker}{\mathrm{R}}
\newcommand{\Lt}{\mathtt{L}}
\newcommand{\btheta}{\theta}
\def\x{{x}}
\def\y{{y}}
\def\g{{g}}
\def\f{{f_y}}
\newcommandx{\norm}[2][1=]{\ifthenelse{\equal{#1}{}}{\left\Vert #2 \right\Vert}{\left\Vert #2 \right\Vert^{#1}}}
\newcommandx{\normLigne}[2][1=]{\ifthenelse{\equal{#1}{}}{\Vert #2 \Vert}{\Vert #2\Vert^{#1}}}
\def\msa{\mathsf{A}}
\def\rset{\mathbb{R}}
\def\cset{\mathbb{C}}
\def\nset{\mathbb{N}}
\def\nsets{\mathbb{N}^*}
\def\rmd{\mathrm{d}}
\def\rmZ{\mathrm{Z}}
\def\rmB{\mathrm{B}}
\def\rme{\mathrm{e}}
\def\rmw{\mathrm{w}}
\def\rmA{\mathrm{A}}
\newcommandx{\functionspace}[2][1=+]{\mathbb{F}_{#1}(#2)}
\newcommand{\argmax}{\operatorname*{arg\,max}}
\newcommand{\argmin}{\operatorname*{arg\,min}}
\newcommandx{\VarDeux}[3][3=]{\operatorname{Var}^{#3}_{#1}\left\{#2 \right\}}
\newcommand{\LeftEqNo}{\let\veqno\@@leqno}
\newcommand{\N}{\ensuremath{\mathbb{N}}}
\newcommand{\abs}[1]{\left\vert #1 \right\vert}
\newcommand{\absLigne}[1]{\vert #1 \vert}
\newcommandx{\Vnorm}[2][1=V]{\| #2 \|_{#1}}
\newcommandx{\VnormEq}[2][1=V]{\left\| #2 \right\|_{#1}}
\newcommand{\parentheseDeux}[1]{\left[ #1 \right]}
\newcommand{\defEns}[1]{\left\lbrace #1 \right\rbrace }
\newcommand{\defEnsLigne}[1]{\lbrace #1 \rbrace }
\newcommandx\probaMarkovTilde[2][2=]
\def\ie{\textit{i.e.}}
\def\eqsp{\;}
\newcommand{\coint}[1]{\left[#1\right)}
\newcommand{\ocint}[1]{\left(#1\right]}
\newcommand{\ooint}[1]{\left(#1\right)}
\newcommand{\ccint}[1]{\left[#1\right]}
\newcommandx{\weight}[2][2=n]{\omega_{#1,#2}^N}
\definecolor{header1}{cmyk}{.9,.5,0,.35}
\definecolor{blue1}{cmyk}{.9,.7,0,0}
\definecolor{blue2}{cmyk}{.93,.95,.2,.07}
\definecolor{maroon}{cmyk}{.4,1,.3,.2}
\definecolor{gold1}{cmyk}{.2,.2,1,.1}
\definecolor{gray}{cmyk}{0,0,0,.5}
\definecolor{green1}{cmyk}{1,0,1,0}
\definecolor{proofcolor}{cmyk}{1,0,1,0}
\definecolor{red1}{cmyk}{0,1,.8,0}
\definecolor{orange1}{cmyk}{0,.55,1,0}
\definecolor{strip}{cmyk}{.6,.1,.1,.1}
\newcommandx\sequence[3][2=,3=]
\newcommandx\sequenceD[3][2=,3=]
\newcommandx{\sequencen}[2][2=n\in\N]{\ensuremath{\{ #1_n, \eqsp #2 \}}}
\newcommandx\sequenceDouble[4][3=,4=]
\newcommandx{\sequencenDouble}[3][3=n\in\N]{\ensuremath{\{ (#1_{n},#2_{n}), \eqsp #3 \}}}
\def\eg{e.g.}
\newcommand{\opnorm}[1]{{\left\vert\kern-0.25ex\left\vert\kern-0.25ex\left\vert #1 
    \right\vert\kern-0.25ex\right\vert\kern-0.25ex\right\vert}}
\def\Id{\operatorname{Id}}
\newcommandx{\CPE}[3][1=]{{\mathbb E}_{#1}\left[#2 \left \vert #3 \right. \right]} %%%% esperance conditionnelle
\newcommandx{\CPVar}[3][1=]{\mathrm{Var}^{#3}_{#1}\left\{ #2 \right\}}
\newcommand{\CPP}[3][]
{\ifthenelse{\equal{#1}{}}{{\mathbb P}\left(\left. #2 \, \right| #3 \right)}{{\mathbb P}_{#1}\left(\left. #2 \, \right | #3 \right)}}
\newcommandx{\osc}[2][1=]{\mathrm{osc}_{#1}(#2)}
\def\Id{\operatorname{Id}}
\def\transpose{\operatorname{T}}
\def\y{y}
\def\bgamma{\bar{\gamma}}
\def\bX{\bar{X}}
\def\tx{\tilde{x}}
\def\bX{\bar{X}}
\newcommand{\ensembleLigne}[2]{\{#1\,:\eqsp #2\}}
\newcommand\coupling[2]{\Gamma(\mu,\nu)}
\renewcommand{\geq}{\geqslant}
\renewcommand{\leq}{\leqslant}
\def\interior{\mathrm{int}}
\def\vareps{\varepsilon}
\newcommandx{\KL}[2]{\text{KL}\left( #1 | #2 \right)}
\newcommandx{\KLbig}[2]{\text{KL}\left( #1 \middle| #2 \right)}
\title{Maximum likelihood estimation of regularisation parameters in high-dimensional inverse problems: an empirical Bayesian approach\\
Part I: Methodology and Experiments}
\author[1]{Ana F. Vidal \footnote{Email: af69@hw.ac.uk}}
\author[2]{Valentin De Bortoli \footnote{Email: debortoli@cmla.ens-cachan.fr}}
\author[1]{Marcelo Pereyra \footnote{Email: m.pereyra@hw.ac.uk} }
\author[2]{Alain Durmus \footnote{Email: durmus@cmla.ens-cachan.fr \newline \indent \  Part of this work has been presented at the 25th IEEE International Conference on Image Processing (ICIP)  \cite{vidal2018maximum}} }
\affil[1]{\small Maxwell Institute for Mathematical Sciences \& School of Mathematical and Computer Sciences, Heriot-Watt University, Edinburgh, EH14 4AS, United Kingdom.}
\affil[2]{\small CMLA - \'Ecole normale supérieure Paris-Saclay, CNRS, Université Paris-Saclay, 94235 Cachan, France.}
\begin{document}
\maketitle
\begin{abstract} Many imaging problems require solving an inverse
  problem that is ill-conditioned or ill-posed. Imaging methods
  typically address this difficulty by regularising the estimation
  problem to make it well- posed. This often requires setting the
  value of the so-called regularisation parameters that control the
  amount of regularisation enforced. These parameters are notoriously
  difficult to set a priori, and can have a dramatic impact on the
  recovered estimates. In this work, we propose a general empirical
  Bayesian method for setting regularisation parameters in imaging
  problems that are convex w.r.t. the unknown image. Our method
  calibrates regularisation parameters directly from the observed data
  by maximum marginal likelihood estimation, and can simultaneously
  estimate multiple regularisation parameters. Furthermore, the
  proposed algorithm uses the same basic operators as proximal
  optimisation algorithms, namely gradient and proximal operators, and
  it is therefore straightforward to apply to problems that are
  currently solved by using proximal optimisation techniques.  Our
  methodology is demonstrated with a range of experiments and
  comparisons with alternative approaches from the literature. The
  considered experiments include image denoising, non-blind image
  deconvolution, and hyperspectral unmixing, using synthesis and
  analysis priors involving the $\ell_1$, total-variation,
  total-variation and $\ell_1$, and total-generalised-variation
  pseudo-norms. A detailed theoretical analysis of the proposed method is presented in the companion
  paper \cite{vidal:et:al:2019b}.
	\end{abstract}
	%
%	\begin{keywords}
%		\emph{Image processing, inverse problems, statistical inference, empirical Bayes, stochastic optimisation, Markov chain Monte Carlo methods, proximal algorithms.}
%	\end{keywords}
	
%%%%%%%%%%%%%%%%%%%%%%%%%%%%%%%%%%%%%%%%%%%%%%%%%%%%%
%% INTRODUCTION
%%%%%%%%%%%%%%%%%%%%%%%%%%%%%%%%%%%%%%%%%%%%%%%%%%%%%
\section{Introduction}
\label{sec:intro}
Image estimation problems are ubiquitous in science and industry, and a central topic of research in imaging sciences. Canonical examples include, for instance, image denoising, image deblurring, compressive sensing, super-resolution, image inpainting, source separation, fusion, and phase retrieval 
\cite{monga2017handbookconvexopt}. Solving these problems has stimulated significant advances in imaging methods, models, theory, and algorithms \cite{kaipio2006statistical,pereyra2016survey,monga2017handbookconvexopt,chambolle2016introduction}.

Most image estimation problems are ill-conditioned or ill-posed \cite{kaipio2006statistical}, a difficulty that imaging methods typically address by regularising the estimation problem to make it well posed. This can be achieved in different ways. For example, in the variational framework, regularisation is introduced by using penalty functions that favour solutions with desired structural or regularity properties (e.g., smoothness, piecewise-regularity, sparsity, or constraints), see \cite{chambolle2016introduction}. In the Bayesian statistical framework, regularisation arises from the use of informative prior distributions  that also allow promoting solutions with expected structural or regularity properties \cite{kaipio2006statistical}. Moreover, regularisation can be explicitly specified, or learnt from data using modern machine learning techniques. We refer the reader to \cite{arridge2019solving} for an excellent introduction to variational, statistical, and machine learning regularisation approaches.

A main difficulty that arises when using any regularisation technique is deciding how much regularisation is appropriate,                                                                             as different imaging modalities, instrumental setups, scenes, and noise conditions often require using very different amounts of regularisation. The amount of regularisation is usually explicitly controlled by some of the parameters of the model. The difficulty resides in that setting  the value of these regularisation parameters a priori is notoriously difficult, particularly in problems that are ill-posed or ill-conditioned where the regularisation has a dramatic impact on the estimated solutions  (see \cite{molina1999bayesian,EUSIPCO,sugar2014stein} and the illustrative example in \Cref{fig:intro-impact-reg-param}). As a result, there is significant interest in methods for setting regularisation parameters in an automatic, robust, and adaptive way.  

Indeed, the developments of methods to automatically set regularisation parameters is a long-standing research topic in imaging sciences. Some methods such as  generalised cross-validation \cite{golub1979generalized}, the L-curve  \cite{lawson1995solving,hansen1993use,calvetti2000tikhonov}, the discrepancy principle  \cite{morozov2012methods, BENVENUTO20191} and residual whiteness measures  \cite{almeida2013parameter,lanza2013variational} operate by analysing the residual between the observed data and a prediction derived from the observation model. Such methods can perform well in certain imaging problems, but they are mainly limited to cases involving a single scalar regularisation parameter. Alternatively, methods based on Stein's unbiased risk estimator (SURE) have also received a lot of attention in the late \cite{giryes2011projected,eldar2009generalized, sugar2014stein}. These methods seek to select the value of the regularisation parameters by minimising SURE-based surrogates of the estimation mean squared error \cite{eldar2009generalized, pesquet2009sure, giryes2011projected}. SURE methods can perform remarkably well in mildly ill-posed or ill-conditioned problems, but they generally struggle with problems that are more severely ill-conditioned or ill-posed \cite{lucka2018risk}. Some recent works also consider learning regularisation parameters from a training dataset of clean images \cite{van2017learning}, or adopting a bilevel optimisation strategy \cite{calatroni2017bilevel,kunisch2013bilevel}.

Lastly, the Bayesian statistical framework provides two main strategies for addressing unknown regularisation parameters: the hierarchical and the empirical \cite{robert2007bayesian,molina1999bayesian}. So far, imaging methods have mainly adopted the hierarchical  strategy, where the unknown regularisation parameters are incorporated into the model to define an augmented posterior, and subsequently removed from the model by marginalisation or estimated jointly with the unknown image \cite{calvetti2008hypermodels,pereyra2016survey,EUSIPCO}. This is the strategy that is adopted by most Markov chain Monte Carlo and variational Bayesian approaches reported in the literature (see e.g., \cite{pereyra2013estimating,babacan2011variational}).

In this work we propose to adopt an empirical Bayesian approach to estimate the regularisation parameters directly from the observed data in a fully automatic and unsupervised way. We focus on imaging problems that are convex w.r.t. the unknown image and that can be efficiently solved by using modern convex optimisation techniques once the regularisation parameters have been set  \cite{monga2017handbookconvexopt,chambolle2016introduction}. In a manner akin to \cite{molina1999bayesian, atchade2006adaptive}, we set regularisation parameters directly from the observed data by maximum marginal likelihood estimation. A main novelty of our work is that this maximum marginal likelihood estimation problem is efficiently solved by using a stochastic proximal gradient algorithm that is powered by two proximal Markov chain Monte Carlo samplers, thus intimately combining the strengths of modern optimisation and sampling techniques. In addition to being highly efficient and delivering remarkably accurate solutions, the proposed method can be readily implemented with the same tools that are used to construct optimisation algorithms to estimate the unknown image by maximum-a-posteriori estimation, namely proximal and gradient operators.

The remainder of the paper is organised as follows. \Cref{sec:probStatement} defines the class of imaging problems we consider and introduces basic necessary concepts of Bayesian inference. In  \Cref{sec:proposedMethod} we present the proposed empirical Bayesian method to calibrate regularisation parameters, provide detailed implementation guidelines and discuss connections with hierarchical Bayesian approaches. \Cref{sec:experiments} first demonstrates the proposed methodology with a variety of non-blind image deblurring and imaging denoising problems involving scalar-valued regularisation parameters, including an experiment where we also estimate the noise variance. This is then followed by two challenging experiments that require setting vector-valued regularisation parameters, namely sparse hyperspectral image unmixing with the SUnSAL model \cite{iordache2012total}, and image denoising using a Total Generalised Variation regulariser \cite{bredies2010total}, where the parameters have strong dependencies making the estimation problem particularly difficult. We report comparisons with several alternative approaches from the literature, including the discrepancy principle \cite{morozov2012methods}, the SURE-based SUGAR method \cite{sugar2014stein}, and the hierarchical Bayesian method described in \cite{EUSIPCO}. Conclusions and perspectives for future work are finally reported in \Cref{sec:conc}. Additional practical guidelines are postponed to the appendix.  We refer the reader to the companion paper \cite{vidal:et:al:2019b} for a detailed theoretical analysis of our methodology. 
		
%%%%%%%%%%%%%%%%%%%%%%%%%%%%%%%%%%%%%%%%%%%%%%%%%%%%%
%% PROBLEM STATEMENT
%%%%%%%%%%%%%%%%%%%%%%%%%%%%%%%%%%%%%%%%%%%%%%%%%%%%%
\section{Problem Statement}\label{sec:probStatement}
Let $\dim, \dimy, \dimtheta \in \nset$ and let $\Theta \subset \ooint{0,+\infty}^{\dimtheta}$ be a convex compact set. We consider the estimation of an unknown image $ \x \in \rset^{\dim} $ from an observation $ \y \in \cset^{\dimy} $ related to $ \x $ by a statistical model with likelihood function
\begin{equation}
p(\y|\x) \propto \rme^{-\f(\x)} \eqsp , 
\end{equation}
where $\f$ is convex and continuously differentiable with $\Ly$-Lipschitz gradient, \ie \ for any $u,v \in \rset^{\dim}$, 
$\normLigne{ \nabla \f (u) - \nabla \f (v) } \leq \Ly \normLigne{ u - v }$ where $\Ly >0$.
This class includes important observation models, in particular Gaussian linear models of the form $\y=  \rmA \x + \rmw$ where $\rmA \in \mathbb{C}^{\dimy \times \dim}$ and $\rmw$ is a $\dimy$-dimensional Gaussian random variable with zero mean and covariance matrix $\sigma^2 \Id$ with $\sigma >0$. 
We adopt a Bayesian approach and seek to use prior knowledge about $\x$ to regularise the estimation problem and improve results. We consider prior distributions given for any $x \in \rset^{\dim}$ and $\theta \in \Theta$ by 
\begin{equation}
p(\x|\theta) = \rme^{-\theta^{{\transpose}}g(x)} / \rmZ(\theta) \eqsp ,
\label{EQ: prior}
\end{equation}
for some convex and Lipschitz continuous vector of statistics $\g : \rset^{\dim} \rightarrow \rset^{\dimtheta}$ and where we recall that the normalising constant of the prior distribution $p(\x|\theta)$ is given by
\begin{equation}
\rmZ(\theta) = \int_{\rset^{\dim}} \rme^{ -\theta^{\transpose}  \g(\tx)} \rmd \tx \eqsp .
\label{EQ: normalizingConst_prior}
\end{equation}
Note   that $\theta$ controls the amount of regularity enforced. The function $g$ is allowed to be non-differentiable in order to include popular models such as $\g (\x) = \|\rmB\x\|_\dagger$ for some dictionary $\rmB \in \rset^{\dim_1 \times \dim}$ with $d_1 \in \nset$ and norm $\normLigne{\cdot}_\dagger$, as well as  constraints on the solution space such as pixel-positivity.

Although rarely mentioned in the literature, these widely used prior distributions regularise the estimation problem by promoting solutions for which $\g(\x)$ is close to the expected value $\bar{g}_\theta = \int_{\rset^{\dim}}\g(\tx)p(\tx|\theta)~ \rmd\tx$, which depends on $\theta$. Formally, by differentiating \eqref{EQ: normalizingConst_prior} and using Leibniz integral rule \cite{pereyra2014computing} we obtain that for any $\theta \in \Theta$ 
\begin{equation}
\bar{g}_\theta = \int_{\rset^{\dim}}\g(\tx)p(\tx|\theta)~ \rmd \tx = -\nabla_{\theta} \log \rmZ(\theta).
\label{EQ: prior_expect_g}
\end{equation}
Additionally, because the prior distribution $x \mapsto p(x|\theta)$ is log-concave, using \cite[Theorem 1.2]{Bobkov2011} we have that for any $\varepsilon \in \ccint{0,2}$ 
\begin{equation}
\int_{C_{\theta, \vareps}}p(\tx | \theta) \rmd \tx \leq 3 \exp\parentheseDeux{-\varepsilon^2 \dim / 16} \eqsp ,
\label{EQ: prior_prob_deviate_mean}
\end{equation}
with $C_{\theta, \vareps} = \ensembleLigne{\tx \in \rset^{\dim}}{\dim^{-1} \absLigne{\theta^{\transpose}(g(\tx) - \bar{g}_{\theta})} \geq \vareps}$.
This result establishes that the prior distribution $x \mapsto p(x|\theta)$ strongly promotes solutions for which $g(x) \approx -\nabla_{\theta} \log \rmZ(\theta)$ with high probability when $\dim$ is large. 

Once the likelihood and prior $p(\y|\x)$ and $p(\x|\theta)$ are specified, we use Bayes' theorem \cite{robert2007bayesian} to derive the posterior for any $\theta \in \Theta$ and $x \in \rset^{\dim}$
\begin{equation}
p(\x|\y,\theta) = p(\y|\x)p(\x|\theta) / p(\y|\theta) = \left .\exp[-\f(\x)-\theta^{\transpose} g(\x)] \middle / \int_{\rset^{\dim}} \exp[-\f(\tx)-\theta^{\transpose} g(\tx)] \rmd \tx \right . \eqsp .
\label{EQ: posterior}
\end{equation}
This posterior distribution underpins all inferences about the image $x$ given {the} observed data $y$.  In particular, imaging methods typically use the maximum-a-posteriori (MAP) estimator, given for any $\theta \in \Theta$ by
\begin{equation}
\hat{\x}_{\theta, \MAP} \in \underset{\tx \in \rset^{\dim}}{\mathrm{argmin}} \defEnsLigne{\f(\tx) + \theta^{\transpose} g(\tx) } \eqsp .
\label{EQ: mapEstim}
\end{equation}
This Bayesian estimator has several favourable theoretical and computational properties (see \cite{pereyra2019revisiting} for a recent theoretical analysis of this estimator). From a computation viewpoint, since the posterior $x \mapsto p(\x|\y,\theta)$  is log-concave, the computation of $\hat{\x}_{\theta, \MAP}$ is a convex optimisation problem that can usually be efficiently solved using modern optimisation algorithms, see \cite{chambolle2016introduction}. Imaging MAP algorithms typically adopt a proximal splitting approach \cite{combettes2011proximalsplitting} involving the gradient $\nabla \f$ and the proximal operator of $\g$, $\prox_g^{\lambda} : \ \rset^{\dim} \to \rset^{\dimtheta}$, see \cite[Definition 12.23]{bauschke2017convex}. This operator is defined for any $\lambda > 0$ and $x \in \rset^{\dim}$ by 
\begin{equation}
\prox_{g}^{\lambda}(\x)=\underset{\tx \in \rset^{\dim}}{\mathrm{argmin}}~ \defEnsLigne{\g(\tx)+ \norm{\tx-\x}_{2}^{2}/(2\lambda) } \eqsp ,
\label{EQ: proxOperator}
\end{equation}
The smoothness parameter $\lambda >0$ controls the regularity properties of the proximal operator. 
As mentioned previously, the regularisation parameter $\theta \in \Theta$ controls the balance between observed and prior information and can significantly impact inferences about the unknown image $\x \in \rset^{\dim}$, especially in problems that are ill-posed or ill-conditioned. In \Cref{fig:intro-impact-reg-param}, 
we illustrate the dramatic effect that the value of $\theta \in \Theta$ may have on the recovered image for a deconvolution problem with a total-variation prior. As expected, when $\theta$ is too small the estimated image is very noisy due to lack of regularisation, and when $\theta$ is too large the resulting image is over-regularised.
\begin{figure}[!h]		
	\centering
	\centerline{\includegraphics[width=\linewidth,trim={0cm 0.4cm 0cm 0cm},clip]{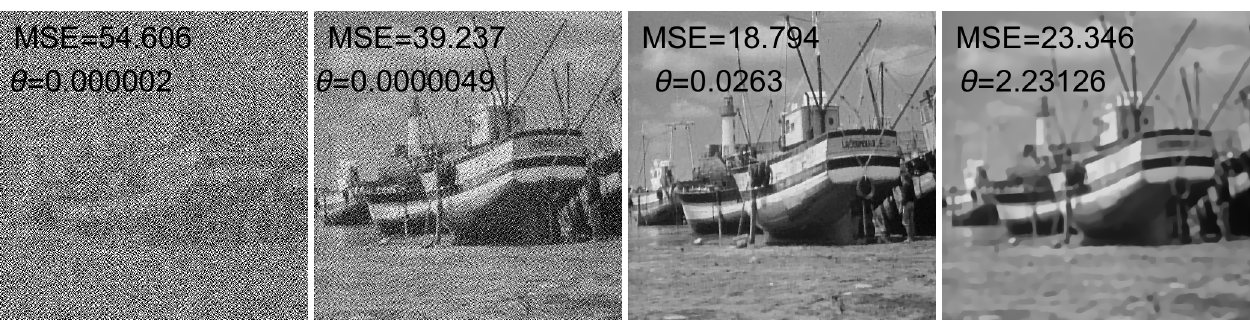}}	
	\caption{Deblurring of the \texttt{boat} image with total-variation prior  (\SNRname=40~$\mathrm{dB}$).  Maximum-a-posteriori estimators for different values of $\theta > 0$ illustrating the effect of regularisation (increasing from left to right).}
	\label{fig:intro-impact-reg-param}
	\normalsize
\end{figure}

Lastly, we want to point out that this work focuses on the estimation of $\theta$ and hence assumes that $\f(\x)$ is known, which in Gaussian observation models reduces to the knowledge of  the noise variance $\sigma^2$. This is a standard assumption in the literature (see, e.g., \cite{sugar2014stein,EUSIPCO}) that is sometimes difficult to verify in practice. To mitigate this issue, \Cref{sssec:experiments.TV-deblur-unknown-sigma} explains how to incorporate the estimation of $\sigma^2$ into the proposed scheme and illustrates this modification with an  image deconvolution experiment in which $\sigma^2$ and $\theta$ are estimated jointly. The theory that we present in our companion paper \cite{vidal:et:al:2019b} also assumes that $\f(\x)$ is known. Although a full generalisation is not possible, we believe that our theory can be extended to provide weaker convergence guarantees for some blind and semi-blind problems by using arguments similar to \cite{de2019efficient}; this is an important perspective for future work.

%%%%%%%%%%%%%%%%%%%%%%%%%%%%%%%%%%%%%%%%%%%%%%%%%%%%%
%% MODEL
%%%%%%%%%%%%%%%%%%%%%%%%%%%%%%%%%%%%%%%%%%%%%%%%%%%%%
\section{Proposed Empirical Bayes methodology}
\label{sec:proposedMethod}
\subsection{Empirical Bayes estimation}
\label{ssec:empirical-bayes}
Under an empirical Bayesian paradigm, the regularisation parameter $\theta \in \Theta$ is estimated directly from the observed data $\y$, for example by maximum marginal likelihood estimation. That is, we compute
\begin{equation}
\thetaStar \in \underset{\theta \in \Theta}{\mathrm{argmax}}~ p(\y|\theta)\,,
\label{EQ: EB_theta_hat}
\end{equation} 
where we recall that the marginal likelihood $p(y | \theta)$ is given for any $\theta \in \Theta$ by 
\begin{equation}
p(\y|\theta)= \int_{\rset^{\dim}}p(\y|\tx)p(\tx|\theta) \rmd \tx \eqsp .
\label{EQ: EB_marginal_likelihood}
\end{equation}

Given $\thetaStar$, empirical Bayesian approaches base inferences on the pseudo-posterior distribution $x \mapsto p(x | y, \thetaStar)$, \cite{carlin2000empirical}, given for any $x \in \rset^d$ by 
\begin{equation}
p(\x|\y,\thetaStar) = \left . \exp[-\f(\x)-\thetaStar^{\top} g(\x)] \middle/ \int_{\rset^{\dim}} \exp[-\f(\tx)-\thetaStar^{\top} g(\tx)]  \rmd \tx \right . \eqsp .
\label{EQ: EBposterior}
\end{equation}
Observe that this strategy is equivalent to Bayesian model selection on a continuous class of models parametrised by $\theta$, where $\thetaStar$ produces the model with the best fit-to-data (under some additional assumptions, $p(\y|\thetaStar)$ provides the best approximation of the true distribution of $y$ in a Kullback--Leibler divergence sense, see \cite[Section 2]{White1982} and references therein).

Empirical Bayesian approaches were first considered in the statistical methodology community (see \eg \ \cite{robbins1985empirical,carlin2000empirical}), which stimulated developments in computational statistics  \cite{Robert,Atchade2011,atchade2017perturbed} to enable empirical Bayesian inference for general statistical models.  This was recently followed by important theoretical works on the validity of the empirical approach and connections to the hierarchical Bayesian paradigm (see \eg \ \cite{petrone2014bayes, knapik2016bayes, rousseau2017asymptotic}). Unfortunately, this powerful inference strategy is difficult to apply in imaging problems \cite{sanders2020effective} because the marginal likelihood $\theta \mapsto p(\y|\theta)$ is computationally intractable as it involves two intractable $\dim$-dimensional integrals, namely \eqref{EQ: normalizingConst_prior} and \eqref{EQ: EB_marginal_likelihood}, thus making the optimisation problem \eqref{EQ: EB_theta_hat} very challenging. The aim of this paper is to enable empirical Bayesian inference in imaging inverse problems, with a focus on automatic selection of regularisation parameters for convex problems that would be typically solved by using proximal optimisation techniques. More precisely, inspired by \cite{Atchade2011,atchade2017perturbed}, we propose a stochastic gradient Markov chain Monte Carlo (MCMC) algorithm to efficiently solve \eqref{EQ: EB_theta_hat} for imaging models of the general form \eqref{EQ: posterior}, where two main novelties are that we use state-of-the-art proximal MCMC methods \cite{durmus2016efficient} to construct a stochastic optimisation scheme that scales efficiently to high dimensions, and that we provide easily verifiable theoretical conditions ensuring convergence {(the latter are studied in depth in  the companion paper \cite{vidal:et:al:2019b})}. 

The maximum likelihood estimation problem \eqref{EQ: EB_theta_hat} raises natural questions about the uniqueness of $\thetaStar$, and about the log-concavity of the marginal likelihood $\theta \mapsto p(y|\theta)$, which are important for the convergence of iterative algorithms to compute $\thetaStar$. In particular, $p(\y|\theta)$ could potentially admit more than one maximiser. However, we have not observed this in practice in any imaging problem. Indeed, because in our experiments $\dimy \gg \dimtheta$, we suspect that the marginal likelihood $\theta \mapsto p(\y|\theta)$ concentrates sharply around a single maximiser $\thetaStar$, and is strongly log-concave w.r.t. $\theta$ in the neighbourhood of $\thetaStar$. These favourable properties can be formally derived under simplifying assumptions (\eg \ that $p(y|\theta)$ is fully separable on $y$ \cite{van2000asymptotic}). Extending conditions for uniqueness of \eqref{EQ: EB_theta_hat} to more general imaging problems is an important perspective for future work.

Lastly, we note that empirical Bayesian methods have found many applications in machine learning, for example in the context of feature selection (see, e.g., \cite{neal2012bayesian,tipping2001sparse,tipping2003fast}). In this field, the challenges related to high-dimensionality have been mainly addressed by using conditional Gaussian models for which the high-dimensional integrals \eqref{EQ: normalizingConst_prior} and \eqref{EQ: EB_marginal_likelihood} become tractable, thus  enabling the use of specialised strategies to solve the optimisation problem \eqref{EQ: EB_theta_hat}.

\subsection{Stochastic gradient MCMC algorithm}
\label{sec:stoch-grad-mcmc}

We now present the proposed empirical Bayesian method to solve the marginal maximum likelihood estimation problem \eqref{EQ: EB_theta_hat} and set regularisation parameters. As mentioned previously, the main difficulty in solving \eqref{EQ: EB_theta_hat} is that the marginal likelihood function $\theta \mapsto p(\y|\theta)$ is computationally intractable. 

Suppose for now that $\theta \mapsto p(\y|\theta)$ was tractable and that we had access to the gradient mapping $\theta \mapsto \nabla_{\theta} \log ~p(\y|\theta)$. Recalling that $\Theta$ is a convex compact set, we could seek to iteratively solve \eqref{EQ: EB_theta_hat} by using the projected gradient algorithm \cite{combettes2011proximalsplitting} which is given by $(\theta_n)_{n \in \nset}$ with $\theta_0 \in \Theta$ and associated with the following recursion for any $n \in \nset$
%Projected gradient equation
\begin{equation}
\theta_{n+1} = \Pi_{\Theta} \left[\theta_n+\delta_n \nabla_{\theta} \log ~p(\y|\theta_n)\right] \eqsp ,
\label{EQ: projected_gradient}
\end{equation}
where $\Pi_{\Theta}$ is the projection onto $\Theta$ and $(\delta_n)_{n \in \nset}$ is a sequence of non-increasing step-sizes. As mentioned previously, because in imaging problems $\dimy \gg \dimtheta$, the marginal likelihood $\theta \mapsto p(y|\theta)$ typically exhibits a single maximiser $\thetaStar$ and is strongly log-concave w.r.t. $\theta$ in the neighbourhood of $\thetaStar$. Therefore, we expect that \eqref{EQ: projected_gradient} would quickly converge.

Since $\theta \mapsto \nabla_{\theta} \log ~p(\y|\theta)$ is not tractable, we cannot directly use \eqref{EQ: projected_gradient} to compute $\thetaStar$.
However, we can replace $\theta \mapsto \nabla_{\theta} \log ~p(\y|\theta)$ with a noisy estimate and consider a stochastic variant of the projected gradient algorithm. In particular, under mild regularity assumptions, using Fisher's identity  (see \Cref{prop:fisher}) and the fact that for any $x \in \rset^{\dim}$, $y \in \rset^{\dimy}$ and $\theta \in \Theta$, $p(x, y | \theta) = p(y | x) p(x | \theta)$, we obtain that for any $\theta \in \Theta$
\begin{equation}
\nabla_{\theta} \log p(\y|\theta)= \int_{\rset^{\dim}}p(\tx|y, \theta) \nabla_{\theta} \log p(\tx,\y|\theta)  \rmd \tx 
= - \int_{\rset^{\dim}} g(\tx) p(\tx|y, \theta)  \rmd \tx - \nabla_{\theta} \log(\rmZ(\theta)) \eqsp .
\label{EQ: fisherIdentity_vectorial}
\end{equation}
Hence, we can use Monte Carlo Markov chain methods  to approximate $\theta \mapsto \nabla_{\theta} \log p(\y | \theta)$ for any $\theta \in \Theta$.

We consider a stochastic approximation proximal gradient algorithm (SAPG), see \cite{fort2017stochastic}, where the expectation $\int_{\rset^{\dim}} g(\tx) p(\tx|\y,\theta) \rmd \tx $  is replaced by a Monte Carlo estimator leading to the following gradient estimate for any $\theta \in \Theta$
\begin{equation}
\label{EQ: MC-gradient-estimate}
\Delta_{m, \theta} = \frac{1}{m} \sum_{k=1}^{m} \nabla_{\theta} \log p(X_k ,y |\theta ) = -\nabla_{\theta}\log \rmZ(\theta) -\frac{1}{m} \sum_{k=1}^{m} \g(X_k)\eqsp ,
\end{equation}
where  $(X_k)_{k \in \lbrace 0, \dots, m \rbrace}$ is a sample of size $m \in \nsets$ generated by using a Markov Chain targeting $p(x|y,\theta) = p(x,y|\theta)/p(y|\theta)$, or a regularised approximation of this density. Therefore, to compute $\thetaStar$, we can build a new sequence $(\theta_n)_{n \in \nset}$ associated with the following recursion for any $n \in \nset$
\begin{equation}
\theta_{n+1} = \Pi_{\Theta}[\theta_n + \delta_{n+1} \Delta_{m_n, \theta_n}] \eqsp ,  \qquad \Delta_{m_n, \theta_n} = -\nabla_{\theta}\log \rmZ(\theta_n) -\frac{1}{m_n} \sum_{k=1}^{m_n} \g(X_k^n)\eqsp ,
\label{EQ: EB_gral_def_proposed_algo}
\end{equation}
starting from some $\theta_0 \in \Theta$, and where $(m_n)_{n \in \nset}$ is a sequence of non-decreasing sample sizes.

Under some assumptions on $(m_n)_{n \in \nset},\, (\delta_n)_{n \in \nset}$ and on the Markov kernels (see \Cref{ssec:selectionParamsMYULA}), the errors in the gradient estimates asymptotically average out and the algorithm converges to a maximiser of $\theta \mapsto p(y | \theta)$. More precisely, given $N \in \nset$, a sequence of non-increasing weights $(\omega_n)_{n \in \nset}$, and a sequence $(\theta_n)_{n = 0}^{N-1}$ generated using \eqref{EQ: EB_gral_def_proposed_algo}, an approximate solution of \eqref{EQ: EB_theta_hat} can be obtained by calculating, for example, the weighted average\footnote{\label{foot:rev2-comment2} Averaging iterates is standard in stochastic approximation algorithms. Most known convergence results concern the almost sure convergence of $(p(y|\thetaavg_N))_{N \in \nset}$ towards $\min_{\theta \in \Theta} p(y|\theta)$, or alternatively a weaker convergence in expectation (see, e.g., \cite{bachmoulines2011,rakhlin2011making,atchade2017perturbed}).}
\begin{equation}
\label{eq:thetaavg}
\thetaEB =  \left. \sum_{n=0}^{N-1} \omega_{n} \theta_n \middle/ \sum_{n=0}^{N-1} \omega_n \right.  \eqsp.
\end{equation}
which converges asymptotically to a solution of \eqref{EQ: EB_theta_hat} as $N \rightarrow \infty$ (see \cite{atchade2017perturbed} for details).

Applying this strategy to imaging problems is highly non-trivial for two reasons: i) it requires generating very high-dimensional Markov chains $\ensembleLigne{(X_k^n)_{k \in \lbrace 0, \dots, m_n \rbrace}}{n \in \nset}$ in a way that is computationally efficient;  ii) the Markov chains must satisfy a number of complex technical conditions to ensure the convergence of the optimisation scheme (these conditions are related to the stochastic properties of the sequence of errors of the gradient estimates computed from the Markov chains; they are discussed in detail in our companion paper \cite{vidal:et:al:2019b}).

In this work, we address these two major difficulties by
constructing an SAPG scheme with state-of-the-art unadjusted
proximal Markov kernels that are highly computationally efficient
and that automatically satisfy the required theoretical
conditions.  More importantly, we
show both theoretically and empirically that a single sample
($m_n=1$) per iteration is enough to guarantee the convergence of
the proposed SAPG scheme.
This allows  delivering accurate estimates of regularisation parameters in a
computationally scalable way and with theoretical
guarantees.

More precisely, to construct the SAPG optimisation scheme we use the Moreau-Yoshida Unadjusted Langevin Algorithm (MYULA) \cite{durmus2016efficient}, which is a state-of-the-art proximal Markov kernel specifically designed for high-dimensional distributions with terms that are log-concave but not smooth. Accordingly, to draw samples from the posterior $p(x|y,\theta) = p(x,y|\theta)/p(y|\theta)$, we define the Markov chain $(X_k)_{k \in  \nset}$, starting from $X_0 \in \rset^{\dim}$, given by the recursion
\begin{equation}
\label{EQ: MYULA_explicit_posterior} 
\Rker_{\gamma,\lambda,\theta}:\,\quad X_{k+1} = X_k - \gamma \nabla_{x}\f(X_k) - \gamma \left.\defEns{X_k - \prox_{\theta^{\top}g}^{\lambda}(X_k)}\middle/\lambda \right. + \sqrt{2\gamma} Z_{k+1} \eqsp ,
\end{equation}
where $\prox_{\theta^{\top}g}^{\lambda}$ is defined by
\eqref{EQ: proxOperator}, $\lambda \in \mathbb{R}^+$ is a
smoothing parameter, $\gamma \in \mathbb{R}^+$ is a
discretisation step-size, and $(Z_k)_{k \in \N^*}$ is a
sequence of i.i.d. $\dim$-dimensional zero-mean Gaussian
random variables with an identity covariance matrix. For any
$\gamma \in \mathbb{R}^+$ and $\theta \in \Theta$, we denote
by $\Rker_{\gamma,\lambda,\theta}$ the Markov kernel
associated with the recursion \eqref{EQ:
	MYULA_explicit_posterior}. We refer the reader to
\cite{durmus2016efficient} and to our companion paper
\cite{vidal:et:al:2019b} for more details about this Markov
kernel, particularly concerning its relationship to the
Langevin diffusion process and to proximal optimisation
methodology.

Lastly, observe that in order to use \eqref{EQ: EB_gral_def_proposed_algo} it is necessary to evaluate $\theta \mapsto \nabla_{\theta}\log \rmZ(\theta)$. For most models of interest, $\theta \mapsto \nabla_{\theta}\log \rmZ(\theta)$  cannot be computed exactly and needs to be approximated.  Hence,  we propose three different strategies to address this calculation depending on whether $g$ is a homogeneous function or not.

\subsubsection{Scalar-valued $\theta$ with $\alpha$ positively homogeneous regulariser}
\label{ssec:proposedMethod-scalarTheta-homogeneous}       
For scalar-valued $\theta$, \ie \ $\dimtheta = 1$, \eqref{EQ: fisherIdentity_vectorial} is given for any $\theta \in \Theta$ by
\begin{equation}
\frac{\textrm{d}}{\textrm{d}{\theta}}  \log p(\y|\theta) = - \int_{\rset^{\dim}} g(\tx) p(\tx|\y,\theta)~ \textrm{d}\tx \,-\frac{\textrm{d}}{\textrm{d}{\theta}} \log \rmZ(\theta)\,.
\label{EQ: fisherIdentityApplied-scalar}
\end{equation}
Assume that there exists $\alpha \in \rset \backslash \{0\}$ such that $\g$ is a $\alpha$ positively homogeneous function, \ie \ for any $x \in \rset^d$ and $t > 0$, $g(tx) = t^{\alpha} g(x)$, and recalling that $\Theta \subset \ooint{0,+\infty}$ we have for any $\theta \in \Theta$
\begin{equation}
\rmZ(\theta) = \int_{\rset^{\dim}} \rme^{- \theta g(\tx)} \rmd \tx = \int_{\rset^{\dim}} \rme^{-  g(\theta^{1/\alpha} \tx)} \rmd \tx = \theta^{-\dim/\alpha} \int_{\rset^d} \rme^{-g(\tx)} \rmd \tx \eqsp ,
\label{EQ: partFuncHomogenCase}
\end{equation} and therefore
\begin{equation}
\frac{\textrm{d}}{\textrm{d}{\theta}} \log \rmZ(\theta)=-\dim /(\alpha  \theta) \eqsp .
\label{EQ: partFuncDerivHomogenCase}
\end{equation}
Hence, \eqref{EQ: fisherIdentityApplied-scalar} becomes for any $\theta \in \Theta$
\begin{equation}
\frac{\textrm{d}}{\textrm{d}{\theta}}  \log p(\y|\theta) = \dim /(\alpha  \theta) - \int_{\rset^{\dim}} g(\tx) p(\tx|\y,\theta)\rmd \tx \eqsp , 
\label{EQ: gradientHomogCase}
\end{equation}  
which leads to  \Cref{algo:MCMC_single_chain} below. 	
We want to point out that many commonly used regularisers are positively homogeneous. For example, all norms such as $\ell_1$, $\ell_2$, total variation (\tvname ), nuclear or compositions of norms with linear operators (e.g., analysis terms of the form $\|\Psi \x\|_1$, where $\Psi \in \rset^{\dim_1} \times \rset^{\dim}$ with $\dim_1 \in \nset$) are $1$ positively homogeneous. Moreover, powers of norms with exponent $q > 0 $ are $q$ positively homogeneous, and all linear combinations of positively homogeneous functions with the same homogeneity constant $\alpha$, are also $\alpha$ positively homogeneous. 
\begin{algorithm}
	\caption{SAPG algorithm - Scalar $\theta$ and $\alpha$ positively homogeneous regulariser $\g$}
	\label{algo:MCMC_single_chain}
	\begin{algorithmic}[1]		
		\STATE Input: initial $\{\theta_0, X_0^0\}$, $(\delta_n,\omega_n,m_n)_{n \in \nset}$, $ \Theta $,  kernel parameters $\gamma,\lambda$, iterations $N$.
		\FOR{$n = 0$ to $N-1$}
		\IF{$n>0$}
		\STATE Set
		$X_0^n = X_{m_{n-1}}^{n-1}$, 
		\ENDIF
		\FOR{$k = 0$ to $m_n-1$}
		\STATE Sample $X^n_{k+1} \sim \Rker_{\gamma,\lambda,\theta_n}(X^n_{k}, \cdot)$,
		\ENDFOR
		\STATE Set $\theta_{n+1} = \Pi_{\Theta}\left[\theta_n + \frac{\delta_{n+1}}{m_n} \sum_{k=1}^{m_n} \defEns{ \frac{\dim}{\alpha \theta_n} - \g(X_k^{n})}\right]$.	
		\ENDFOR				
		\STATE Output: $\thetaEB$ computed with \eqref{eq:thetaavg}.
	\end{algorithmic}
\end{algorithm}

\subsubsection{Separably homogeneous regulariser}
\label{ssec:separable_vectorial_theta}        
For the special case of separably homogeneous regularisers, \Cref{algo:MCMC_single_chain} can be adapted for multivariate $\theta$. This is because in this class of regulariser, each component of $\theta$ affects independent subsets of the components of $\x$. More precisely, assume that $g$ is separably homogeneous in the following sense: there exist  $(\tg_i)_{i \in \{1, \dots, \dimtheta\}}$, $(\msa_i)_{i \in \{1, \dots, \dimtheta\}}$ pairwise disjoint subsets of $\{1, \dots, \dim\}$ and $(\alpha_i)_{i \in \{1, \dots, \dimtheta\}} $ such that for any $i \in \{1, \dots, \dimtheta \}$, $\tg_i : \ \rset^{d_i} \to \rset$ is $\alpha_i$-positively homogeneous with $\alpha_i > 0$ and for any $x \in \rset^{\dim}$, $g(x) = (\tg_i(x_{[\msa_i]}))_{i \in \{1, \dots \dimtheta\}}$ where for any $\msa = \{i_1, \dots, i_{\ell}\} \subset \{1, \dots, \dim\}$, $x_{[\msa]} = (x_{i_1}, \dots, x_{i_{\ell}})$.        
In this case we have for any $\theta \in \Theta$
\begin{align}
	\rmZ(\theta) &= \int_{\rset^{\dim}} \exp[-\theta^{\top} \g(\tx) ] \rmd \tx 
	=  \int_{\rset^d} \exp\parentheseDeux{- \sum_{i=1}^{\dimtheta} \theta^i \tg_i(\tx_{[\msa_i]})} \rmd \tx \\
	&= \prod_{i=1}^{\dimtheta} \int_{\rset^{\abs{\msa_i}}} \exp[- \theta^i \tg_i(\tx_{[\msa_i]})] \rmd \tx \eqsp .
\end{align}
Therefore, for any $i \in \{1, \dots, \dimtheta\}$ and $\theta \in \Theta$ we get that
$$
[\partial \log \rmZ / \partial \theta^i](\theta) = -\abs{\msa_i} / (\alpha_i \theta^i).
$$
Using this property we obtain \Cref{algo:MCMC_single_chain_separable},  
where for any $i \in \{1, \dots, \dimtheta \}, \  {\theta^i \in \Theta^i  \subset \ooint{0,+\infty}}$ and $ \Pi_{\Theta^i}$ is the projection onto $\Theta^i$.
\begin{algorithm}
	\caption{SAPG algorithm - Multivariate $\theta$ and separably homogeneous regulariser}
	\label{algo:MCMC_single_chain_separable}
	\begin{algorithmic}[1]		
		\STATE Input: initial $\{\theta_0, X_0^0\}$, $(\delta_n,\omega_n,m_n)_{n \in \nset}$, $ \Theta $,  kernel parameters $\gamma,\lambda$, iterations $N$.
		\FOR{$n = 0$ to $N-1$}
		\IF{$n>0$}
		\STATE Set
		$X_0^n = X_{m_{n-1}}^{n-1}$, 
		\ENDIF
		\FOR{$k = 0$ to $m_n-1$}
		\STATE Sample $X^n_{k+1} \sim \Rker_{\gamma,\lambda,\theta_n}(X^n_{k}, \cdot)$,
		\ENDFOR
		\FOR{$i = 1$ to $\dimtheta$}
		\STATE Set $\theta^i_{n+1} = \Pi_{\Theta^i}\left[\theta^i_n + \frac{\delta_{n+1}}{m_n} \sum_{k=1}^{m_n} \defEns{ \frac{\abs{\msa_i}}{\alpha_i \theta_n^i}- \tg_i\left({X^n_{k}}_{[\msa_i]}\right)}\right]$.	
		\ENDFOR		
		\ENDFOR					
		\STATE Output: $\thetaEB$ computed with \eqref{eq:thetaavg}.
	\end{algorithmic}
\end{algorithm}

For example, many works in the imaging literature adopt a so-called synthesis formulation where $x$ represents the unknown image on some orthonormal wavelet basis $\Psi \in \mathbb{R}^{d\times d}$ with $J \in \nset $ levels\footnote{In synthesis formulations $x \in \mathbb{R}^d$ represents the unknown image on some basis $\Psi \in \mathbb{R}^{d\times d}$; the solution in the pixel domain is given by $\Psi^\top x$.}, and consider level-adapted $\ell_1$ regularisations of the form
$$
\theta^\top g(x) = \sum_{j = 1}^J \theta_j \|x_{[\msa_j]}\|_1
$$
where $x_{[\msa_j]}$ are the elements of $x$ associated with the $J$th level and $\theta \in \mathbb{R}^J$. Here, $g$ is a separably homogeneous functional as it can be expressed as $g = (\tg_1, \ldots,\tg_J)$ where, for any $j \in \{1,\ldots,J\}$, $ \tg_j $ is 1-positively homogeneous and $d_j=\abs{\msa_j}$. Notice that the domain in which $x$ is represented is not relevant here; \Cref{algo:MCMC_single_chain_separable} can be directly applied to any model where $g$ is homogenous separable via a change of basis because the same expression for $\rmZ(\theta)$ holds.

\subsubsection{General case: inhomogeneous regulariser}
\label{ssec:proposedMethod-vectorialTheta-GeneralCase}
When g is neither homogeneous nor separably homogeneous, we address the evaluation of $\theta \mapsto \nabla_\theta \log \rmZ(\theta)$ numerically by stochastic simulation. Using that $\y$ is conditionally independent of $\theta$ given $\x$, and using identity \eqref{EQ: prior_expect_g}, we express $\theta \mapsto \nabla_{\theta} \log ~ p(\y|\theta)$ as the difference between two expectations, \ie \ for any $\theta \in \Theta$
\begin{equation}
\nabla_{\theta} \log ~ p(\y|\theta) = \int_{\rset^{\dim}}\g(\tx)p(\tx|\theta)~ \textrm{d}\tx -\int_{\rset^{\dim}}\g(\tx)p(\tx|\y,\theta)~ \textrm{d}\tx \eqsp .
\label{EQ: simplifiedGradientEstimate}
\end{equation}
We then use two families of Markov kernels $\ensembleLigne{\Rker_{\gamma,\lambda,\theta}, \bRker_{\gamma',\lambda',\theta}}{\gamma,\gamma' \in \ocint{0, \bgamma}, \  {\lambda,\lambda' \in \mathbb{R}^+},\ {\theta \in \Theta}}$ that  target the posterior $p(x|y, \theta)$ and the prior $p(x|\theta)$, respectively, within the SAPG \Cref{algo:MCMC_double_chain} below.

\begin{algorithm}
	\caption{SAPG algorithm - General form}
	\label{algo:MCMC_double_chain}
	\begin{algorithmic}[1]
		\STATE Input: initial $\{\theta_0, X_0^0, \bX_0^0\}$, $(\delta_n,\omega_n,m_n)_{n \in \nset}$, $ \Theta $,  kernel parameters $\gamma,\gamma',\lambda,\lambda'$, iterations $N$.		
		\FOR{$n = 0$ to $N-1$}
		\IF{$n>0$}
		\STATE Set
		$X_0^n = X_{m_{n-1}}^{n-1}$, 
		\STATE Set $\bar{X}_0^n = \bar{X}^{n-1}_{m_{n-1}}$, 
		\ENDIF
		\FOR{$k = 0$ to $m_n-1$}
		\STATE Sample $X^n_{k+1} \sim \Rker_{\gamma,\lambda,\theta_n}(X^n_k, \cdot)$,
		\STATE Sample $\bX^n_{k+1} \sim \bRker_{\gamma',\lambda',\theta_n}(\bX^n_k, \cdot)$,
		\ENDFOR
		\STATE Set $\theta_{n+1} = \Pi_{\Theta}\left[\theta_n + \frac{\delta_{n+1}}{m_n} \sum_{k=1}^{m_n} \defEns{g(\bX^n_{k}) - \g(X^n_{k})}\right]$.	
		\ENDFOR			
		\STATE Output: $\thetaEB$ computed with \eqref{eq:thetaavg}.
	\end{algorithmic}
\end{algorithm}

For any $\theta \in \Theta$ and $\gamma, \gamma' \in \mathbb{R}^+$, the Markov kernel $\Rker_{\gamma,\lambda,\theta}$ is as defined previously in \eqref{EQ: MYULA_explicit_posterior}, and the additional Markov kernel $\bRker_{\gamma',\lambda',\theta}$ is  associated
with the MYULA algorithm targeting $p(x|\theta)$, which
defines $(\bX_k)_{k \in \nset}$, starting from $\bX_0 \in \rset^{\dim}$, given by the  recursion 
\begin{equation}
\label{EQ: MYULA_explicit_prior}
\bRker_{\gamma',\lambda',\theta}:\,\quad \bX_{k+1} = \bX_k - \gamma'\left. \defEns{\bX_k - \prox_{\theta^{\top}g}^{\lambda'}(\bX_k)}\middle /\lambda'
+ \sqrt{2\gamma'} Z_{k+1} \right. \eqsp ,
\end{equation}		
where $\lambda^\prime, \gamma^\prime>0$ are the smoothing parameter and step-size parameter, respectively.

\subsection{Implementation guidelines}
\label{ssec:selectionParamsMYULA}
We now discuss suitable ranges and recommended values for the parameters of \Cref{algo:MCMC_single_chain}, \Cref{algo:MCMC_single_chain_separable} and \Cref{algo:MCMC_double_chain}. Rather than optimal values for specific models, our recommendations seek to provide general rules that are simple and robust. We also discuss some other considerations related to the implementation of the methods. Please see \Cref{append:sec:practicalImplementGuidelines} for implementation and troubleshooting guidelines.
\subsubsection{Setting the algorithm parameters}
\label{append:ssec:setting-algo-parameters}
{   
	\paragraph{Selecting $\gamma$} Our theoretical convergence analysis \cite{vidal:et:al:2019b} requires setting $0 < \gamma < (\Ly +1/\lambda)^{-1}$; this is related to the numerical stability of the Markov chains and stems from the fact that $\Ly +1/\lambda$ bounds the Lipschitz constant of $\nabla \f + \left(x -
	\prox_{\theta^{\top} \g}^{\lambda}(x)\right)/\lambda $. Within this stability range, $\gamma$ controls a trade-off between computational efficiency and accuracy, with larger values of $\gamma$ leading to higher efficiency but also to a larger asymptotic bias. Given the dimensionality involved, and that in our experiments we did not observe any significant bias issues, we recommend using a large $\gamma$, e.g., $\gamma = 0.98 (\Ly +1/\lambda)^{-1}$.
	
	\paragraph{Selecting $\lambda$}This parameter controls the regularity of the smooth approximation of $g$ within MYULA and hence another trade-off between bias and convergence speed \cite{durmus2016efficient}. We have empirically observed that in order to prevent a significant bias it is necessary to set $\lambda \in (0,2)$. Within this range, we prefer larger values of $\lambda$ to improve convergence speed, at the expense of some bias. We recommend using $\lambda = \min(\Ly^{-1}, 2)$, as setting $\lambda \gg \Ly^{-1}$ increases asymptotic bias without improving convergence speed because of the effect of $\Ly$ on $\gamma$.
	
	\paragraph{Selecting $\gamma'$ and $\lambda'$} Since $\Ly$ does not affect the  kernel $\bRker_{\gamma',\lambda',\theta}$ targeting the prior, the stability range for $\gamma'$ is $0 < \gamma'<\lambda'$. In our experiments we set $\gamma' = 0.98 \lambda^\prime$. We usually set $\lambda' = \lambda$ to have the same level of smoothing in both chains, however one can also use $\lambda' \gg \lambda$ if $\bRker_{\gamma',\lambda',\theta}$ is much slower than $\Rker_{\gamma',\lambda',\theta}$ (see \Cref{append:ssec:tips-for-2-chains} and \Cref{append:ssec:convergence-speed} for more details).

	\paragraph{Selecting $(\delta_n,m_n)_{n \in \nset}$} For simplicity and computational efficiency, we recommend using
	a single ($m_n=1$) Monte Carlo sample per iteration. A single sample is sufficient to construct a convergent SAPG scheme see our companion paper
	\cite{vidal:et:al:2019b}. 
	We recommend setting $\delta_n = \czero n^{- \powerdelta}$ with $\powerdelta \in [0.6, 0.9]$, and use $\delta_n =  \czero ~n^{- 0.8}$ in our experiments, which is a standard choice in the literature \cite{bottou2012stochastic}. For $\czero$ we recommend, for the case where $\theta$ is scalar, starting with $\czero =(\theta_0 \dim)^{-1}$ and then adjust if necessary. Although the choice of $\czero$ is asymptotically irrelevant {(see \Cref{fig:res-num-synth-delta-c0} (b))}, if the initial step-size is too large the iterate $\theta_n$ will be bouncing on the limits of the interval for a long transient regime, whereas convergence will be slow if $\czero$ is too small. For this reason, we recommend adjusting $\czero$ so that the step-size is of the order of the projection interval $\Theta$. When $\theta$ is not scalar, one can use different scales for each component of $\theta$. More details are provided in the  \Cref{append:ssec:vectorial-theta}.}

\paragraph{Selecting $(\omega_n)_{n \in \nset}$ and $N$}
While it is possible to construct other estimates, we recommend using the average
\begin{equation}
\thetaEB =  \left. \sum_{n=0}^{N-1} \omega_{n} \theta_n \middle/ \sum_{n=0}^{N-1} \omega_n \right.  \eqsp,
\end{equation}
with $(\omega_n)_{n \in \nset}$ given by       		
\begin{equation}
\omega_n =
\begin{cases}
0 \eqsp, & \text{if } n < N_0  \eqsp, \\
1 \eqsp, & \text{if } N_0 \leq n \leq N_1  \eqsp, \\
\delta_n  & \text{ otherwise} \eqsp,
\end{cases}
\end{equation} 
\noindent where $N_0, N_1 \in \nset$, $N_0 <
N_1$. This choice of $(\omega_n)_{n \in \nset}$,
defines three distinct phases: i) a burn-in phase
where the first $N_0$ iterations of the algorithm are
discarded to reduce the non-asymptotic bias (this is
particularly important when using a small number of
iterations); ii) a uniform averaging phase
$N_0 \leq n \leq N_1$ where the smoothing effect
associated with averaging improves convergence speed
and reduces estimation variance; iii) a refinement
phase where we use decreasing weights to improve the
precision of the estimator (see
\cite{vidal:et:al:2019b} for accuracy guarantees).

We have empirically observed that imaging problems do not usually require highly accurate estimates of $\theta$. Therefore, in the interest of computational efficiency, in our experiments we omit the third phase and stop when $N_1=N$.          
Moreover, rather than using the theoretical accuracy guarantees of \cite{vidal:et:al:2019b} to set $N$, we monitor $\absLigne{\thetaavg_{N+1} - \thetaavg_{N}}/\thetaavg_N$ and stop the algorithms when $\absLigne{\thetaavg_{N+1} - \thetaavg_{N}}/\thetaavg_N <\tau$ for a prescribed tolerance $\tau > 0$ (e.g., $\tau = 10^{-3}$).         
\paragraph{Selecting $\Theta$}When selecting the projection
interval, the lower bound should be as small as necessary but
not zero, as this may render the algorithm unstable (the
gradient  depends on $\theta^{-1}$ and diverges as
$\theta_n \rightarrow 0$). If possible, use tight bounds
to improve convergence speed.
\paragraph{Selecting $\theta_0$ } The choice of $\theta_0 \in \Theta$ is theoretically asymptotically irrelevant (see, e.g., \Cref{fig:res-num-synth-delta-c0} (a)). However, in some cases a very bad initialisation can prevent the algorithm from converging, e.g., by introducing large numerical errors in the computation of proximal operators. We observed this in the TGV denoising experiment \Cref{sssec:experiments.tgv-denois} when using the extreme initialisation $\theta^1_0=\theta^2_0=100$ (not reported in the paper).
       
\subsubsection{Other implementation considerations}\label{sssec:other-imp-guidelines}
\paragraph{Implementation in logarithmic scale} The proposed algorithms to estimate $\theta$ often exhibit better numerical convergence properties when they are implemented in a logarithmic scale, which is a standard strategy for scale parameters \cite{Atchade2011}. Accordingly, we introduce the change of variables $\eta = \log(\theta)$, obtain an estimate $\hat{\eta}$ by using one of the proposed algorithms to maximise the marginal likelihood $p(y|\eta)$, and then set $\hat{\theta} = \rme^{\hat{\eta}}$. This is equivalent to maximising $p(y|\theta)$ because of the invariance to re-parametrisation property of the maximum likelihood estimator. This change of variables requires a minor modification in the computation of the gradients, which have to be multiplied by $\rme^\eta_n$ to satisfy the chain rule. For example, step 9 in \Cref{algo:MCMC_single_chain} becomes $\eta_{n+1} = \Pi_{\Theta^\eta}\left[\eta_n + \rme^{\eta_n}\frac{\delta_{n+1}}{m_n} \sum_{k=1}^{m_n} \defEns{ \frac{\dim\rme^{-\eta_n}}{\alpha} - \g(X_k^{n})}\right]$, where $\Theta^\eta = \{\log(\btheta): \btheta \in \Theta\}$ denotes the range of admissible values of $\eta$ taking the logarithm component-wise. Similarly, step 11 of \Cref{algo:MCMC_double_chain} becomes $\eta_{n+1} = \Pi_{\Theta^\eta}\left[\eta_n + \rme^{\eta_n}\frac{\delta_{n+1}}{m_n} \sum_{k=1}^{m_n} \defEns{g(\bX^n_{k}) - \g(X^n_{k})}\right]$.

\paragraph{Initialisation of the Markov kernels} We strongly recommend warm-starting the Markov chains by running $T_0 \in \nset$ iterations with fixed $\theta=\theta_0$ before starting to update the value of $\theta$; an appropriate value for $T_0$ can be easily determined by monitoring the statistic $g(X_k)$. 

\paragraph{Alternative implementations of MYULA}
We want to point out that \eqref{EQ: MYULA_explicit_posterior} is not the only possible way of implementing MYULA to sample from $p(x|y,\theta) = p(x,y|\theta)/p(y|\theta)$. If some of the functions $g_i$ are Lipschitz differentiable, it might be convenient to incorporate them through their gradient, and reserve proximal operators for the non-differentiable terms in $g$. Moreover, the above implementation requires computing the proximal operator of the global function $\theta^{\top} \g$. In some cases it might be easier to use the proximal operators of each individual $g_i$ independently (see \cite{durmus2016efficient}). Which implementation of MYULA is the most convenient, will mostly depend on the tools available to the practitioner. Many optimisation algorithms for MAP estimation \eqref{EQ: mapEstim} also use the operators $\nabla \f$ and either  $\prox_{\theta^{\top} \g}^{\lambda}$ or $ \prox_{\theta^i \g_i}^{\lambda} $ \cite{combettes2011proximalsplitting,green2015bayesian}, making the implementation of our methods straightforward for problems currently solved with such tools. Our theoretical analysis \cite{vidal:et:al:2019b} also considers implementations of \Cref{algo:MCMC_single_chain}, \Cref{algo:MCMC_single_chain_separable} with other proximal Markov kernels. Moreover, although for simplicity here we use constant values for $\gamma, \lambda, \gamma^\prime$, and $\lambda^\prime$, the theory presented in \cite{vidal:et:al:2019b} allows implementing the algorithms with iteration-dependent values $(\gamma_n, \lambda_n, \gamma^\prime_n, \lambda^\prime_n)_{n \in \nset}$.

\paragraph{Estimation bias}Lastly, as mentioned previously, \Cref{algo:MCMC_single_chain}, \Cref{algo:MCMC_single_chain_separable}, and \Cref{algo:MCMC_double_chain} can exhibit some asymptotic estimation bias. This error arises from the fact that the MYULA kernels used do not target the posterior or prior distributions exactly but rather an approximation of these distributions \cite{durmus2016efficient}. This error is controlled by $\lambda, \lambda', \gamma$ and $\gamma'$, and can be made arbitrarily small at the expense of additional computing time, see \cite{vidal:et:al:2019b}. The bias can be completely removed by using iteration-dependent smoothing and step-size parameters and letting them vanish as $n \rightarrow \infty$, but this can dramatically deteriorate the non-asymptotic convergence properties. The bias can also be completely removed by combining MYULA with Metropolis-Hastings (MH) steps, as discussed in detail in \cite{pereyra2016proximal}. However, 
it is difficult to calibrate high-dimensional MH steps within a SAPG scheme to achieve the required acceptance rates, as the target densities change at each iteration, so we do not explore this any further. A high-dimensional MH correction can also deteriorate the non-asymptotic convergence properties of the algorithms and significantly increase computing times \cite{durmus2016efficient}.

\subsection{Connections to hierarchical Bayesian approaches}
\label{ssec:connectBayesApproach}
As we mentioned earlier, the Bayesian framework provides two main paradigms to select $\theta$ automatically: the empirical (already discussed in \Cref{ssec:empirical-bayes}) and the hierarchical, which is currently the predominant Bayesian approach in data science (see \cite{pereyra2013estimating,EUSIPCO} for examples in imaging sciences). We now discuss connections between the two paradigms and stress advantages and disadvantages.

In the hierarchical Bayesian paradigm, $\theta$ is modelled as an additional unknown quantity and it is assigned a prior distribution $p(\theta)$. This leads to an augmented posterior given for any $\theta \in \Theta$ and $x \in \rset^{\dim}$ by 
\begin{equation}
p(\x,\theta|\y)=p(\y|\x,\theta)p(\x|\theta)p(\theta) /p(\y) \eqsp .
\label{EQ: HB_augmented_posterior}
\end{equation}
There are two main ways of employing this augmented posterior. The first, and most popular, is to remove $\theta$ from the model by marginalisation, followed by inference on $\x|\y$ with the marginal posterior given for any $x \in \rset^{\dim}$ by 
\begin{equation}
p(\x|\y) = \int_{\rset^{\dimtheta}} p(\x,\ttheta|\y) \textrm{d}\ttheta \eqsp .
\label{EQ: HB_marginal_posterior_x}
\end{equation}
The marginal posterior is then often used to perform minimum mean squared error (MMSE) estimation by computing 
\begin{equation}
\hat{x}_{\MMSE} = \int_{\rset^{\dim}} \tx \, p(\tx|\y)~ \textrm{d}\tx.
\label{EQ: HB_marginal_MMSE}
\end{equation}
This can be achieved with a standard MCMC algorithm when $\rmZ(\theta)$ is tractable, \eg \ Gibbs sampling,   or with specialised algorithm that allows circumventing the evaluation of $\rmZ(\theta)$ at the expense of significant additional computational cost (see \cite{pereyra2013estimating} for details). For some specific models it is also possible to compute an approximate marginal MMSE solution by using a deterministic variational Bayesian algorithm (e.g., see \cite{babacan2008parameter,marnissi2017variational}),  but such algorithms have not yet been widely adopted because their implementation and performance remains very problem-specific.  

Alternatively, for the class of models considered in \Cref{ssec:proposedMethod-scalarTheta-homogeneous}, one can also efficiently compute the marginal MAP estimator
\begin{equation}
\hat{\x}_{\MAP} \in \underset{\x \in \rset^{\dim}}{\mathrm{argmin}}~ p(\x|\y) \eqsp ,
\label{EQ: mapEstim_marginalisation}
\end{equation}
by using the majorisation-minimisation algorithm proposed in \cite{EUSIPCO}. In some of our experiments we report comparisons with this the hierarchical Bayesian method, as it can be broadly applied to same models as \cref{algo:MCMC_single_chain} and \cref{algo:MCMC_single_chain_separable}.

In order to understand the connection between this hierarchical Bayesian approach and the empirical Bayesian strategy used in this paper it is useful to express $p(\x|\y)$ as follows
\begin{equation}
p(\x|\y) = \int_{\Theta} p(\x|\y,\ttheta) p(\ttheta|\y) \textrm{d}\ttheta \eqsp ,
\end{equation}
where we observe that $x \mapsto p(\x|\y)$ is effectively a weighted average of all the posteriors $x \mapsto p(\x|\y,\theta)$ parametrised by $\theta \in \rset^{\dimtheta}$, with weights given by the marginal posterior $p(\theta|\y)$, which represents the uncertainty in $\theta$ given the observed data $\y$. If instead of $p(\theta|\y)$ we perform the integration of $\theta \mapsto p(\x|\y, \theta)$ with respect to the Dirac distribution $\updelta_{\thetaStar}$, we obtain the empirical Bayesian pseudo-posterior $x \mapsto p(\x|\y,\thetaStar)$ considered in this paper. 

Note that in imaging problems the marginal posterior $p(\theta|\y) \propto p(\y|\theta)p(\theta)$ will be dominated by the marginal likelihood $p(\y|\theta)$ because of the dimensionality of $\y$. Therefore most of the mass of $p(\theta|y)$ will be close to $\thetaStar$. As a result, we expect that both the hierarchical and the empirical approaches will deliver broadly similar results. For models that are correctly specified both strategies should perform well, and hierarchical Bayes should moderately outperform empirical Bayes as it is decision-theoretically optimal \cite{robert2007bayesian}. 

However, most imaging models are over-simplistic and hence somewhat misspecified. Our experiments suggest that in this case the empirical Bayesian approach can outperform the hierarchical one. More precisely, we practically observe that the marginal posterior $p(\theta|\y)$ typically has its maximum at a good value for $\theta$, but struggles to concentrate and spreads its mass across a much wider range of values of $\theta$. Consequently, $\theta \mapsto p(\theta|\y)$ fails to sufficiently penalise poor models, which are given too much weight in $x \mapsto p(\x|\y)$ as a result. In this situation, the pseudo-posterior $x \mapsto p(\x|\y,\thetaStar)$ often delivers better inferences than the marginal posterior $x \mapsto p(\x|\y)$. In the context of inverse problems, this phenomenon is particularly clear in problems that are poorly conditioned and where the misspecification of the prior has a stronger effect on the inferences. This behaviour is observed in all the imaging problems reported in \Cref{sec:experiments} and is particularly clear in the hyperspectral unmixing problem.

It is also worth mentioning at this point that there is another hierarchical Bayesian approach where $\x$ and $\theta$ are estimated jointly from $\y$, without marginalisation \cite{zibetti2008determining,EUSIPCO}. For example, one can perform joint MAP estimation
\begin{equation}
(\hat{x}_{\star},\hat{\theta}_{\star}) = \underset{\x \in \rset^{\dim}, \ \theta \in \Theta }{\mathrm{argmax}}~ p(\x,\theta|\y) \eqsp .
\label{EQ: HB_joint_MAP}
\end{equation}

This strategy has been studied in detail in \cite{calvetti2019hierachical,calvetti2020sparse} in the context of hierarchical Bayesian sparse regularisation  models. More precisely, these works cleverly exploit the conditional structure of certain hierarchical Gaussian models with random prior covariance matrices to propose a simple iterative alternating scheme to compute the joint MAP estimator of $x$ and the prior covariance. This kind of scheme yields good results for the class of imaging models in those works, both in terms of accuracy and computational complexity. The generalisation of the ideas of \cite{calvetti2019hierachical,calvetti2020sparse} to other imaging models, particularly the class of models considered in this paper, is very interesting but highly non-trivial. 
%%%%%%%%%%%%%%%%%%%%%%%%%%%%%%%%%%%%%%%%%%%%%%%%%%%%%
%% NUMERICAL EXPERIMENTS
%%%%%%%%%%%%%%%%%%%%%%%%%%%%%%%%%%%%%%%%%%%%%%%%%%%%%
\section{Numerical experiments}
\label{sec:experiments}
In this section we validate the proposed methodology with a range of imaging inverse problems, which we have selected to illustrate a variety of observation models and regularisation functions. The first experiment, presented in \Cref{ssec:experiments.synthetic}, is carried out using synthetic images for which the exact generative statistical model is known, as this allows assessing the performance of the proposed method in a case where the regularisation parameter has a true value, and where there is no model misspecification. We also use this experiment to explore the robustness of the method towards mild likelihood misspecification (e.g., when there is a mismatch in the statistical properties of the noise).

Following on from this, in  \Cref{ssec:experiments.nat-img-deblur} we demonstrate the method by estimating a scalar-valued regularisation parameter in a non-blind image deconvolution model with different kinds of prior distributions, such as total variation and $\ell_1$-wavelet priors. This allows comparing our method to some state-of-the-art approaches that are limited to scalar-valued regularisation parameters. We also use one of these experiments to explain how to address problems in which the noise variance is unknown by jointly estimating $\theta$ and the variance of the noise by marginal MLE. 

This is then followed by two challenging problems involving multivariate regularisation parameters. In \Cref{ssec:experiments.hyperunmix} we apply our method to a sparse hyperspectral unmixing problem combining an $\ell_1$ and a total variation regularisation, and where we report comparisons with the hierarchical Bayesian approach of \cite{EUSIPCO}. Lastly, in \Cref{sssec:experiments.tgv-denois} we apply our method to a total generalised variation denoising model that has two unknown regularisation parameters exhibiting strong dependencies, and which requires using \Cref{algo:MCMC_double_chain} with two parallel Markov chains. 

In all the experiments we first compute $\thetaEB$, see \eqref{eq:thetaavg}, and then calculate a MAP estimator using the empirical Bayesian posterior $x \mapsto p(\x|y,\thetaEB)$ by convex optimisation (solver details are provided in each experiment). All experiments were carried out on an Intel  ${\textrm{i9-8950HK@2.90GHz}}$ workstation running MATLAB R2018a. In all experiments, $\theta$ was estimated on a logarithmic scale by using the change of variables discussed in Section \Cref{sssec:other-imp-guidelines}, except for the first experiment in \Cref{ssec:experiments.synthetic} where we used a linear scale.

\subsection{Performance on synthetic images - denoising}
\label{ssec:experiments.synthetic}

We first demonstrate the performance of the algorithm on a very simple image denoising problem, where we work with synthetic test images to have access to the true value of the regularisation parameter.  We consider a wavelet-based image denoising under a synthesis formulation where we assume that the coefficients $x$ of the true image  in an orthogonal 4-level Haar basis $\Psi$ follow a Laplace distribution. That is the model \eqref{EQ: posterior} is given for any $x \in \rset^{\dim}$ by $ f_\y(x)= \norm{\y-\Psi x}^2_2/(2\sigma^{2}) $ and $\g(x) = \Vnorm[1]{x}$.
In our experiments, $y$ has dimension $\dimy = 256 \times 256$ pixels, and we set $ \theta=1 $ to generate the synthetic test images. The variance of the added noise $\sigma^2$ is chosen for every case such that the signal-to-noise-ratio (\SNRname) is ${20 ~\text{$\mathrm{dB}$}}$, ${30 ~\text{$\mathrm{dB}$}}$, or ${40 ~\text{$\mathrm{dB}$}}$. In all cases we compute the empirical Bayes estimator $\thetaEB$  by implementing \Cref{algo:MCMC_single_chain} with the MYULA kernel \eqref{EQ: MYULA_explicit_posterior}.

To study the statistical behaviour of the method, we repeat each experiment 500 times by generating 500 random observations $\y$,  each one coming from a different random $\x$; then, for every observation $\y$,  we estimate $\thetaEB$. 
\def\widthfig{\linewidth}
\def\widthminipage{0.33333\linewidth}
\begin{figure}[!h]			
	\begin{minipage}[t]{\widthminipage}
		\centering
		\centerline{	\includegraphics[width=0.95\widthfig]{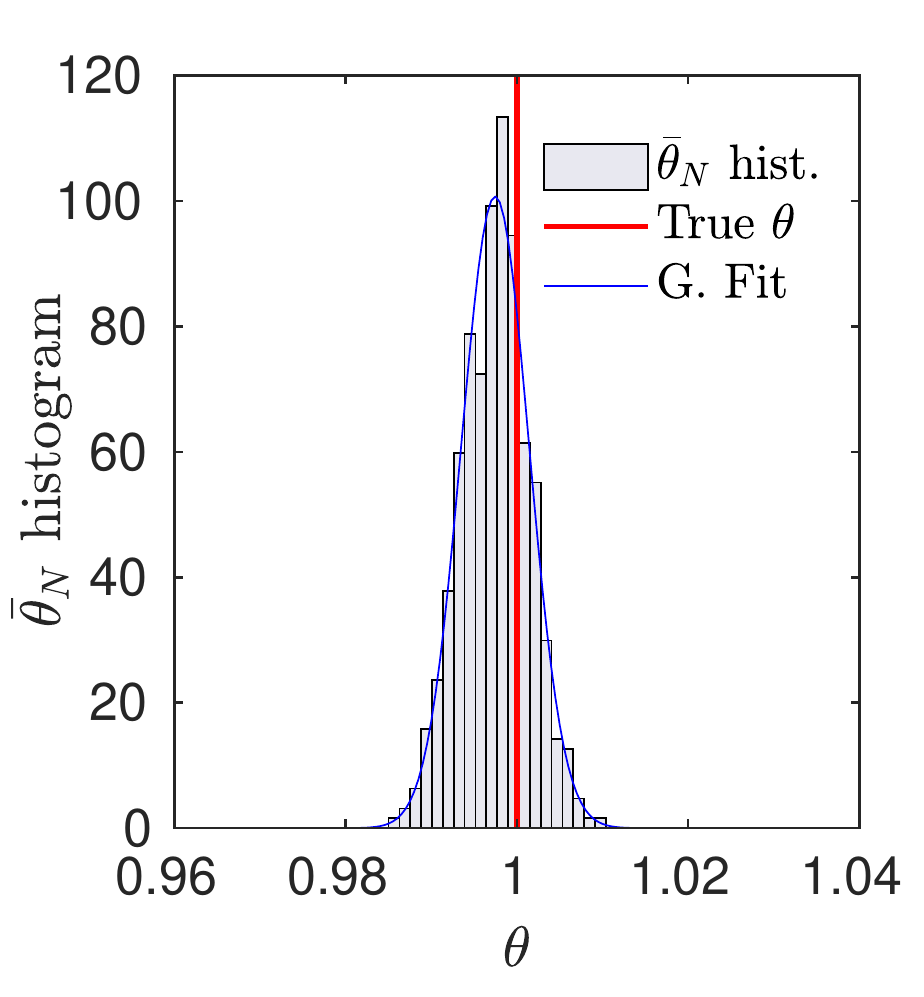}}
		\centerline{
			\includegraphics[width=\widthfig]{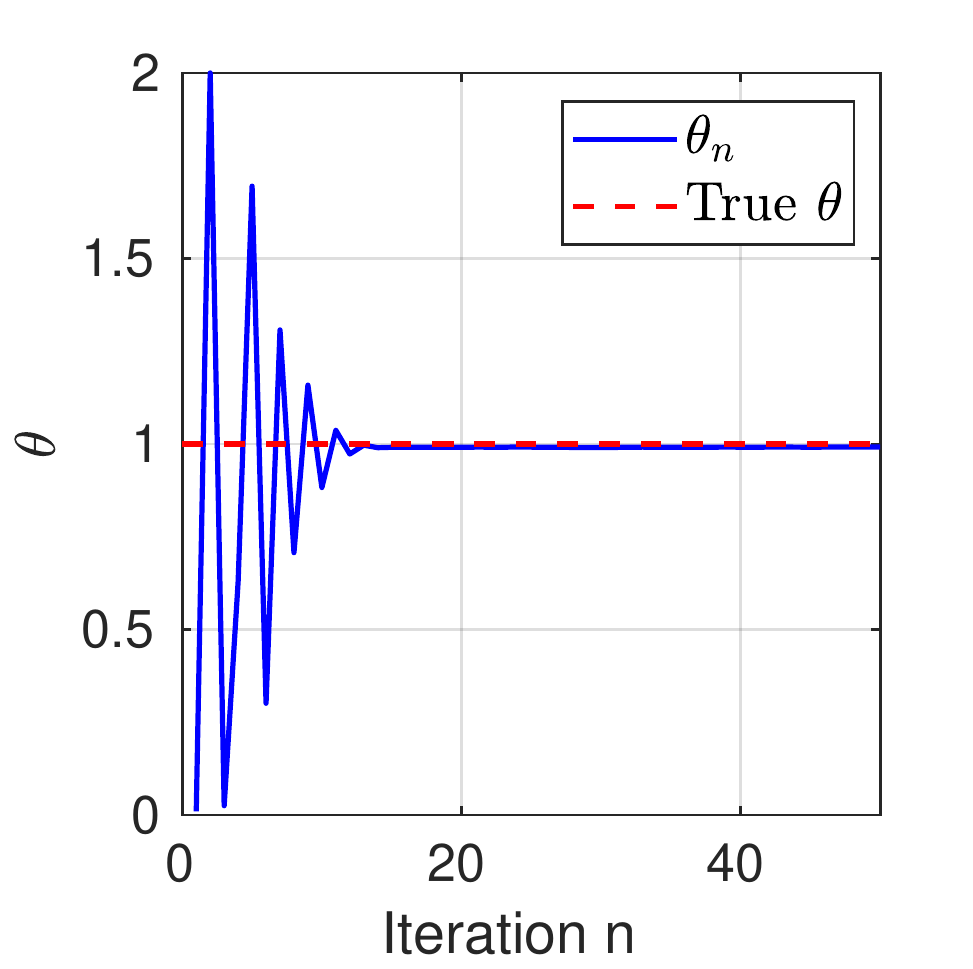}}
		\centerline{(a) \SNRname=20 $\mathrm{dB}$}
	\end{minipage}\hfill
	\begin{minipage}[t]{\widthminipage}
		\centering	
		\centerline{	
			\includegraphics[width=\widthfig]{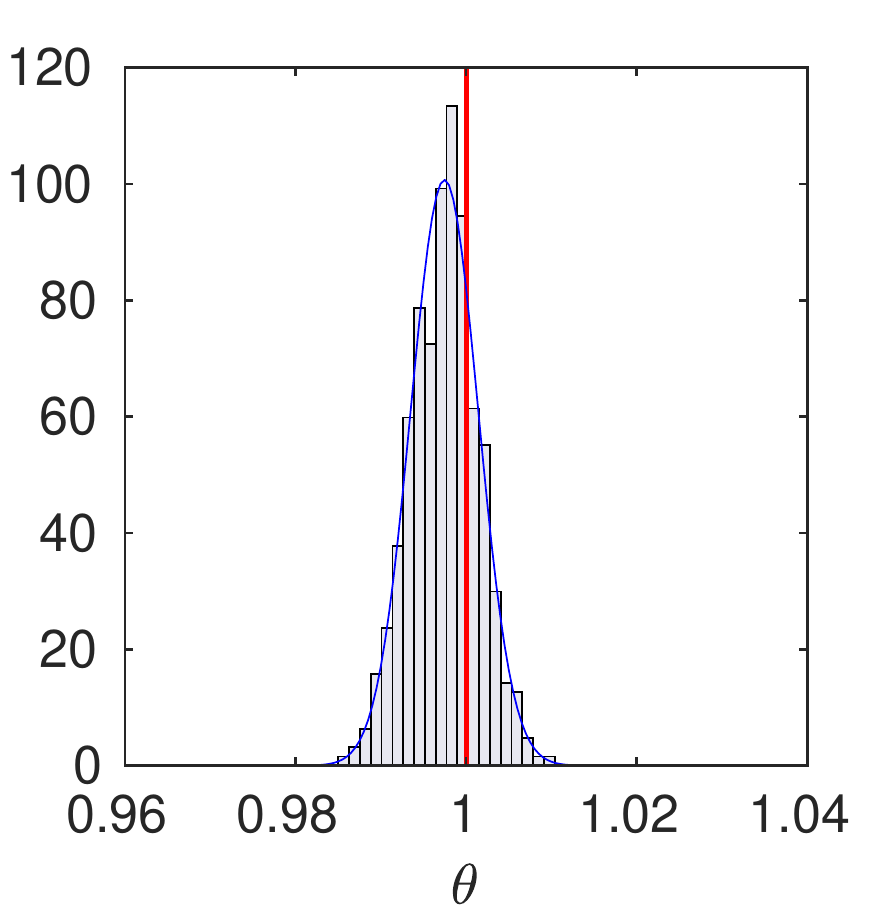}}
		\centerline{
			\includegraphics[width=\widthfig]{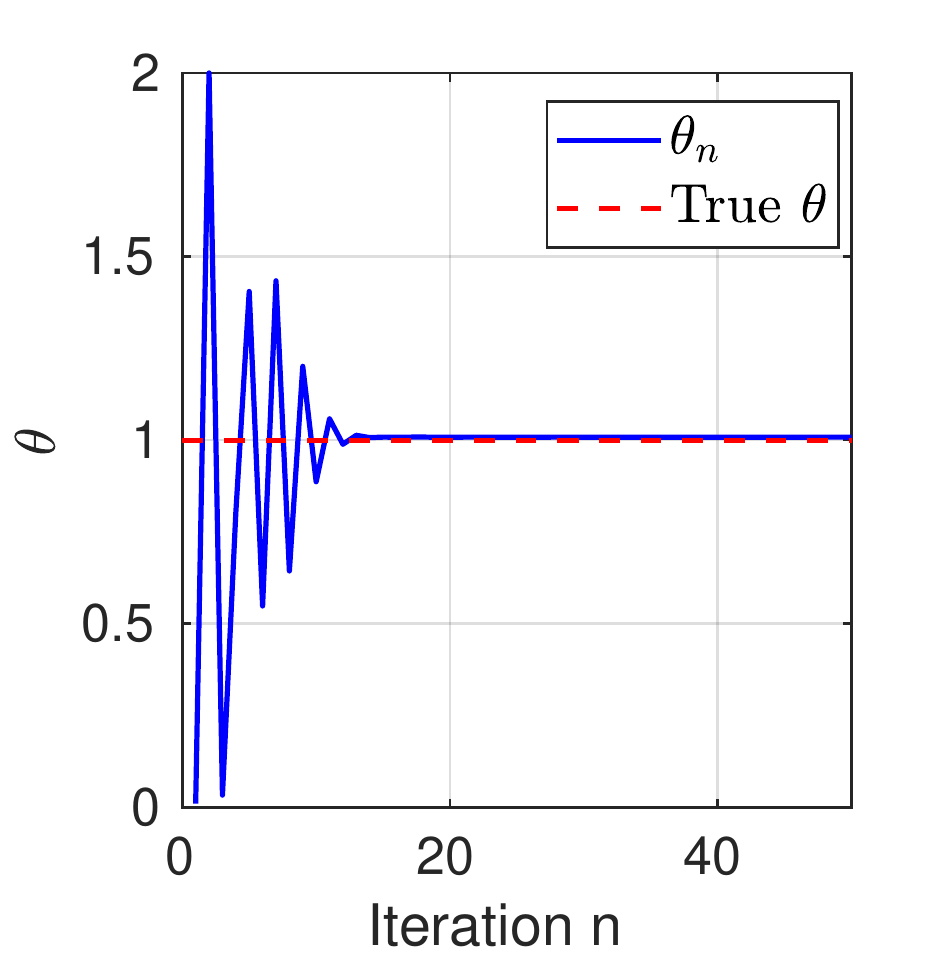}}
		\centerline{(b) \SNRname=30 $\mathrm{dB}$}
	\end{minipage}\hfill
	\begin{minipage}[t]{\widthminipage}
		\centering
		\centerline{	\includegraphics[width=\widthfig]{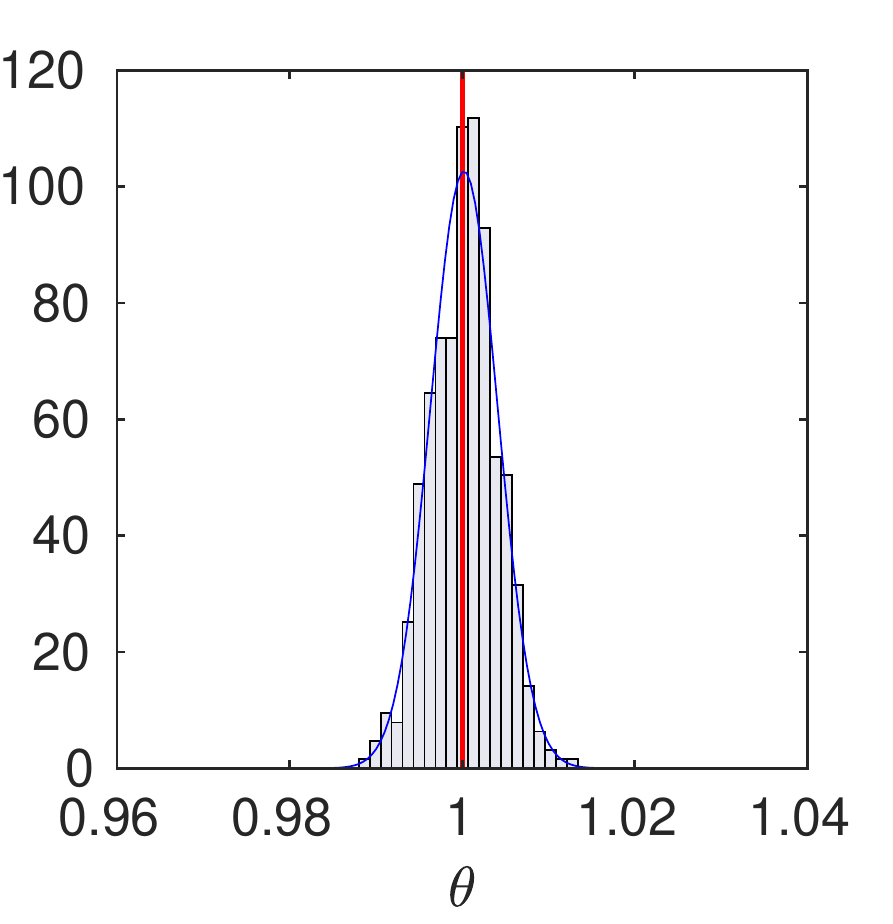}}
		\centerline{
			\includegraphics[width=\widthfig]{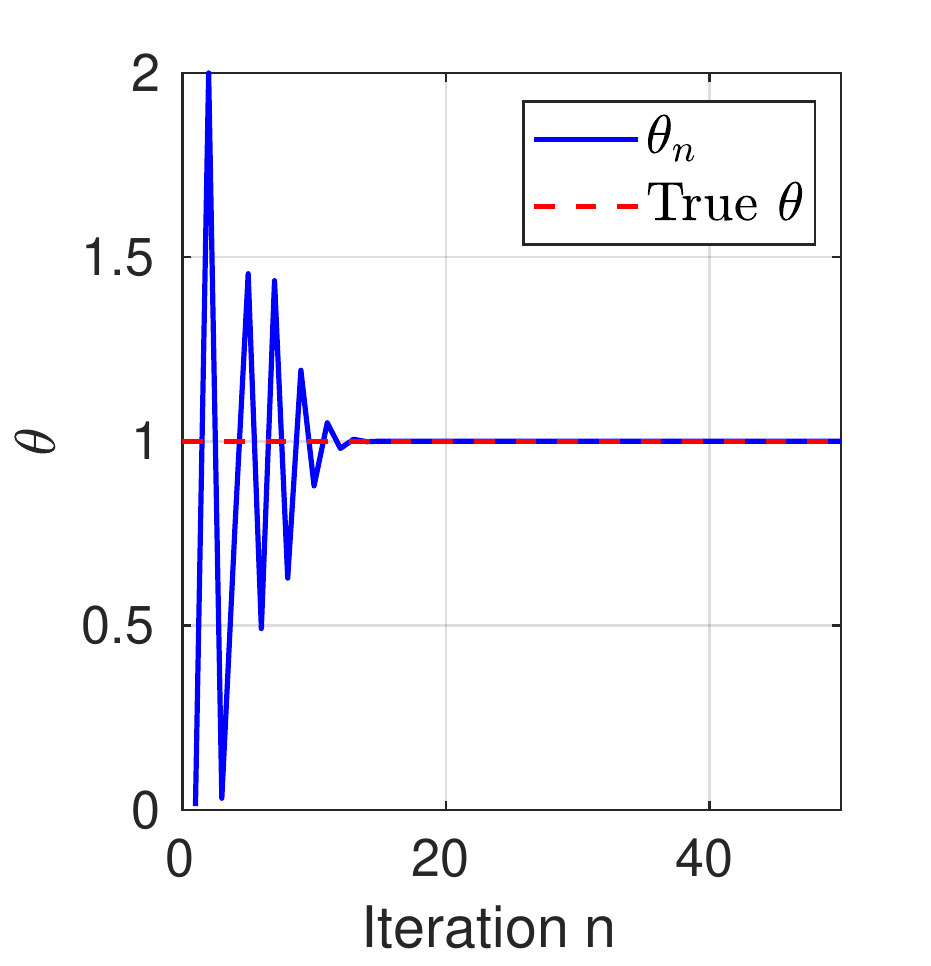}}
		\centerline{(c) \SNRname=40 $\mathrm{dB}$}
	\end{minipage}\hfill
	\caption{	\small		
		Denoising with synthesis-$\ell_1$ prior. Histograms of the estimated $\thetaEB$ for 500 repetitions and evolution of the iterates $(\theta_n)_{n \in \nset}$. Results for \SNRname \ of ${20 ~\text{$\mathrm{dB}$}}$, ${30 ~\text{$\mathrm{dB}$}}$ and ${40 ~\text{$\mathrm{dB}$}}$.\normalsize}
	\label{fig:res-num-synth-histograms}
\end{figure}

\Cref{fig:res-num-synth-histograms} shows the histograms obtained from the 500 estimated $\thetaEB$ values for each experiment (${20 ~\text{$\mathrm{dB}$}}$, ${30 ~\text{$\mathrm{dB}$}}$, and ${40 ~\text{$\mathrm{dB}$}}$). For completeness, we also present in \Cref{fig:res-num-synth-histograms} one example of a generated sequence of iterates $(\theta_n)_{n \in \nset}$ for each experiment. Observe that the estimation error is close to Gaussian and close to the true value of the regularisation parameter, as expected for a maximum likelihood estimator. The algorithm converges in approximately 15 iterations, possibly with some very small bias of the order of $0.1\%$.

To illustrate the robustness of the methods w.r.t. the
initialisation parameters $\theta_0$ and $\delta_0 = c_0$, we show
in \Cref{fig:res-num-synth-delta-c0} (a) and (b) the evolution of
the iterate $\theta_n$ for different values of $\theta_0$ and $c_0$,
respectively. Observe that all sequences converge to the same
solution, and that with an appropriate choice of $\Theta$ the scheme
is robust to the choice of $\theta_0$. Also observe that a poor
choice of $c_0$ can dramatically reduce convergence speed.
\def\widthfig{\linewidth} \def\widthminipage{0.48\linewidth}
\begin{figure}[!h]	
	\begin{minipage}[t]{\widthminipage}	
		\centering
		\centerline{\includegraphics[width=\linewidth,trim={0cm 0cm 0cm 0cm},clip]{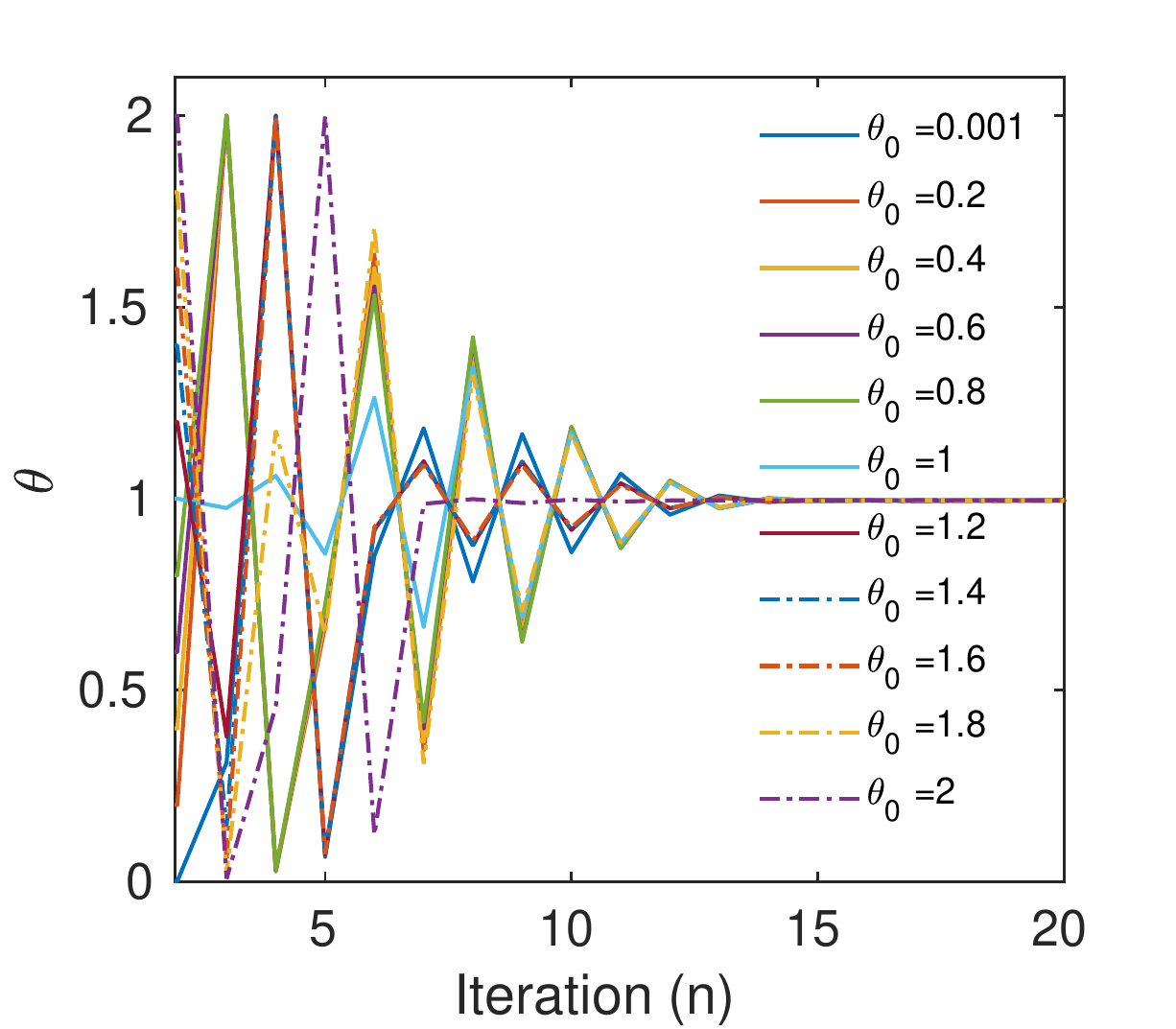}}
		\centerline{(a)}
	\end{minipage}\hfill
	\begin{minipage}[t]{\widthminipage}	
		\centerline{\includegraphics[width=\linewidth,trim={0cm 0cm 0cm 0cm},clip]{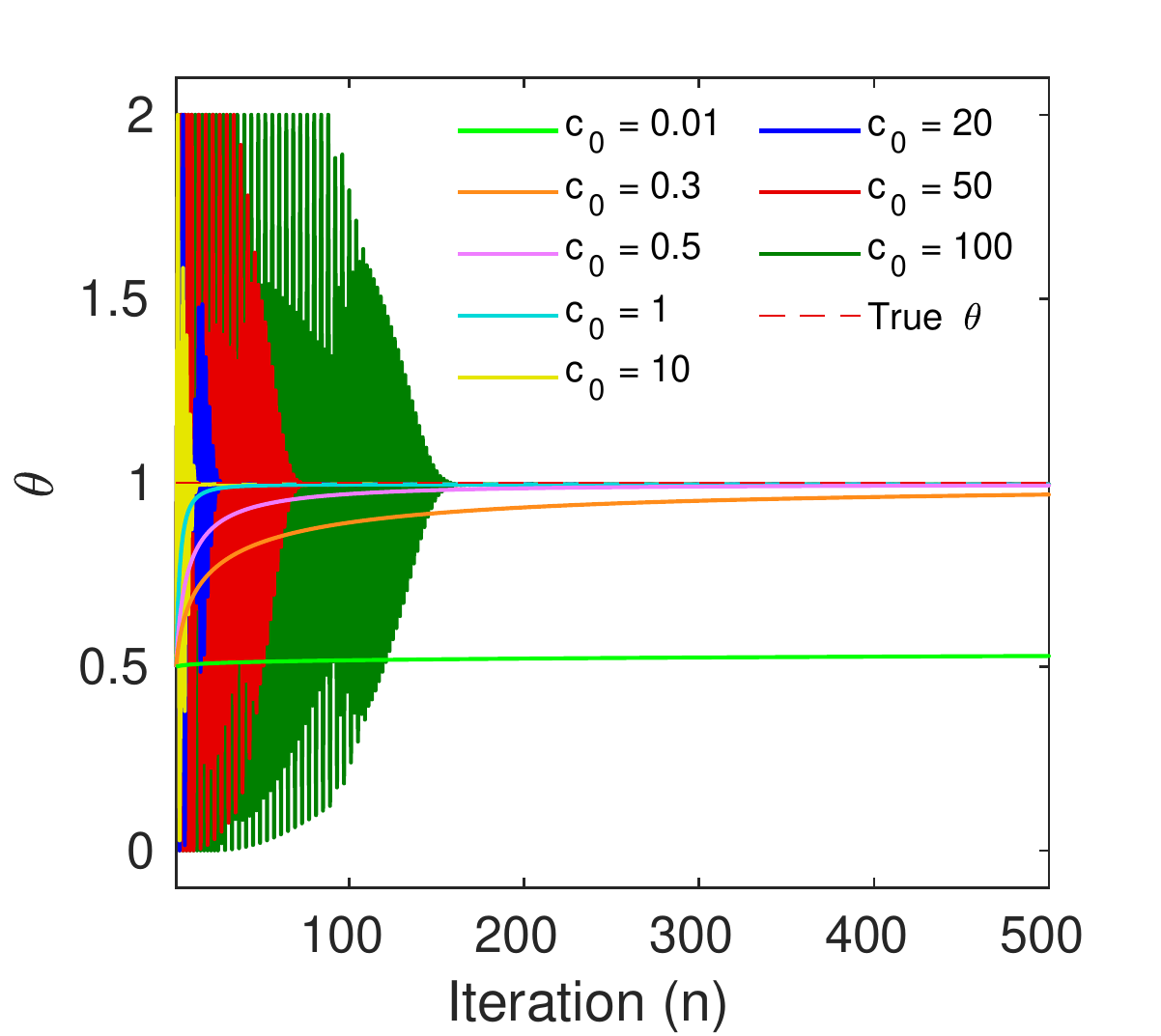}}
		\centerline{(b)}
	\end{minipage}\hfill
	\caption{	\small		
		Denoising with synthesis-$\ell_1$ prior for \SNRname=$20~\mathrm{dB}$. Evolution of the iterates $(\theta_n)_{n \in \nset}$ for different initialisation parameters: (a) $\theta_0$, (b) $\delta_0 = c_0$. The algorithms are robust to different initialisations, but a poor choice of $c_0$ can deteriorate the convergence speed.\normalsize}
	\label{fig:res-num-synth-delta-c0}
	\normalsize
\end{figure}

Moreover, to explore the behaviour of the method with other noise distributions, we repeat the previous experiment using Laplace noise instead of Gaussian noise. Since the Laplace distribution involves a non-smooth $\ell_1$ term, we adopt a proximal MCMC approach and implement the algorithms using the gradient of its $\lambda$-Moreau-Yosida envelope. The results are reported in \Cref{fig:res-num-synth-histograms-laplacenoise} (a).

\def\widthfig{\linewidth}
\def\widthminipage{0.31\linewidth}
\begin{figure}[!h]
	\begin{minipage}[t]{0.022\linewidth}			
		\begin{turn}{90}
			\hspace{-4.4cm}\small{	(b) Gaussian noise model	\hspace{1.1cm} (a) Laplace noise model}
		\end{turn}					
	\end{minipage}	\hfill	
	\begin{minipage}[t]{\widthminipage}
		\centerline{	\includegraphics[width=\widthfig,trim={0cm 0cm 0.29cm 0cm},clip]{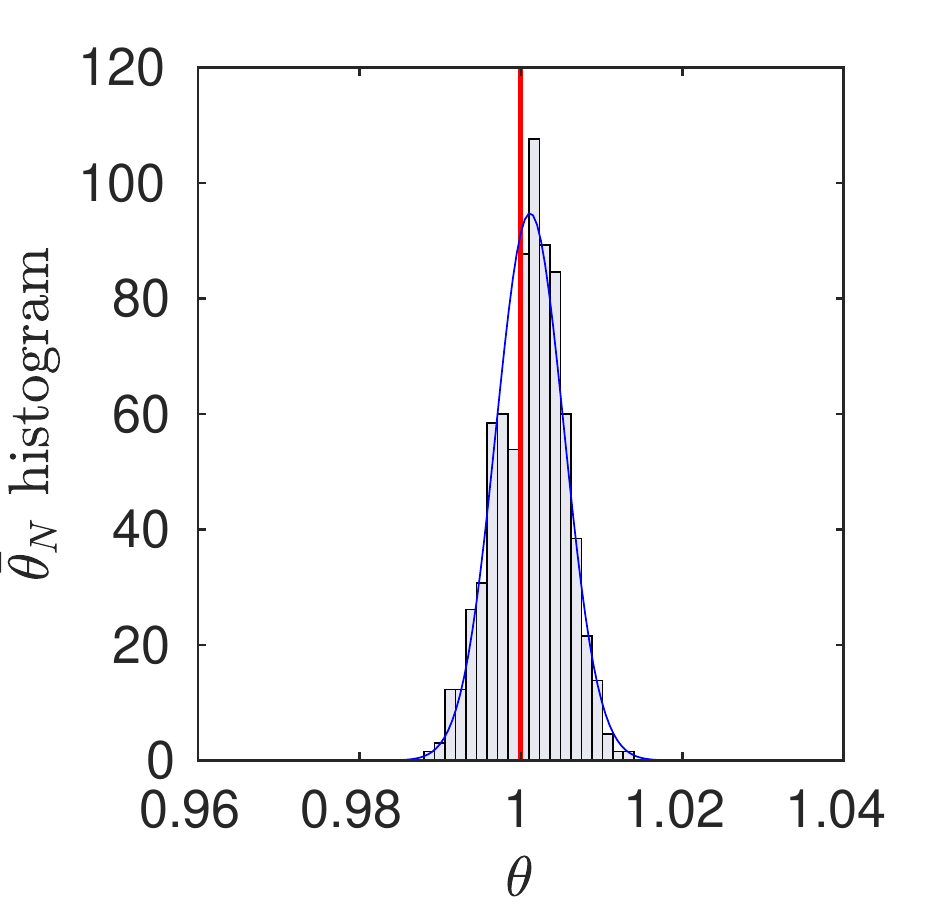}}
		\centerline{
			\includegraphics[width=\widthfig,trim={0cm 0cm 0.29cm 0cm},clip]{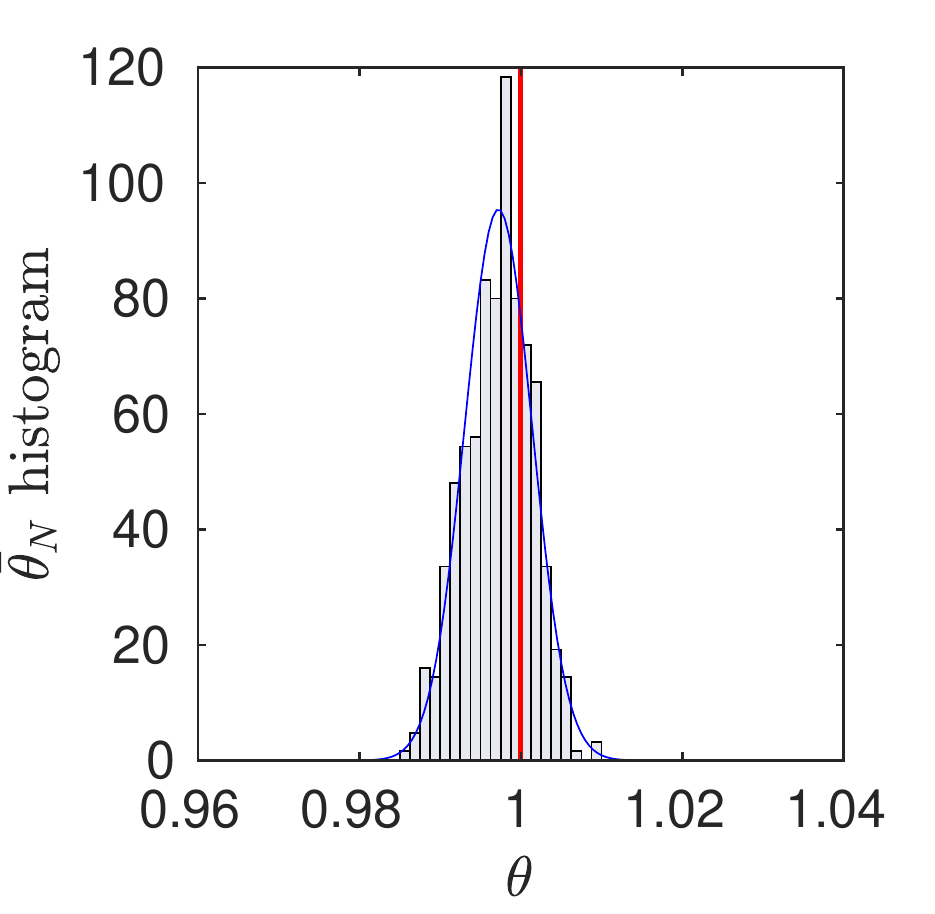}}
		\centerline{ \SNRname=20 $\mathrm{dB}$}		
	\end{minipage}\hfill
	\begin{minipage}[t]{\widthminipage}
		\centering	
		\centerline{	
			\includegraphics[width=\widthfig,trim={0cm 0cm 0.29cm 0cm},clip]{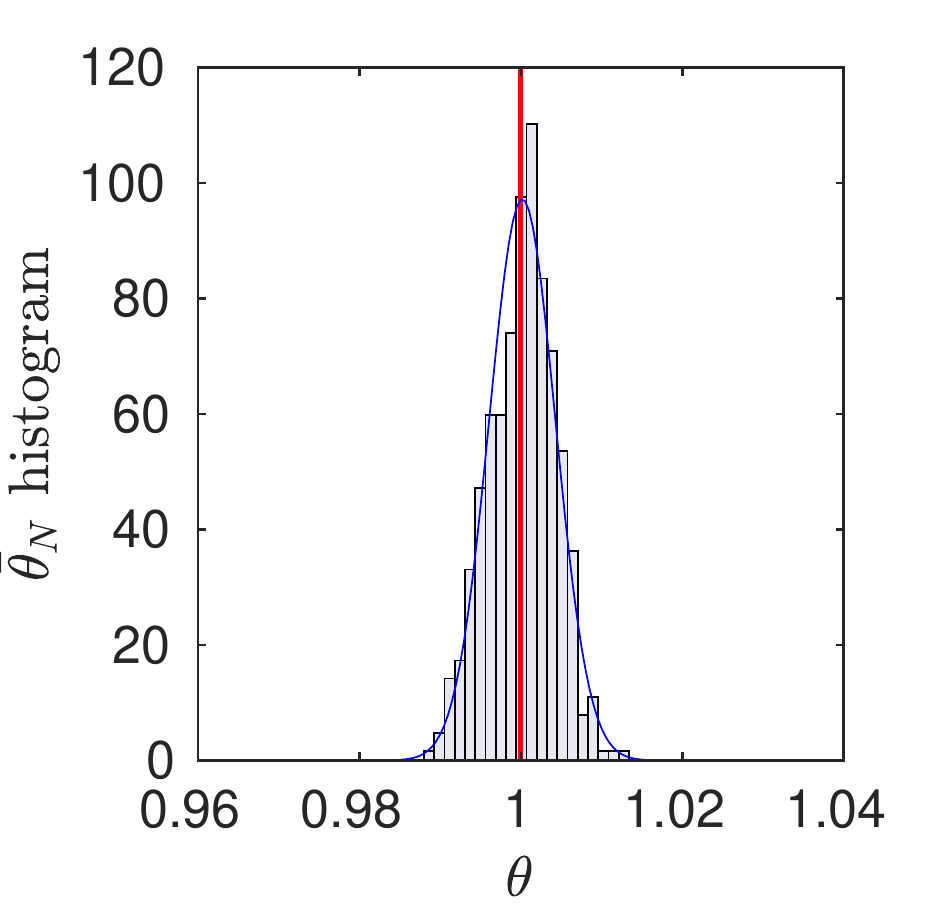}}
		\centerline{
			\includegraphics[width=\widthfig,trim={0cm 0cm 0.29cm 0cm},clip]{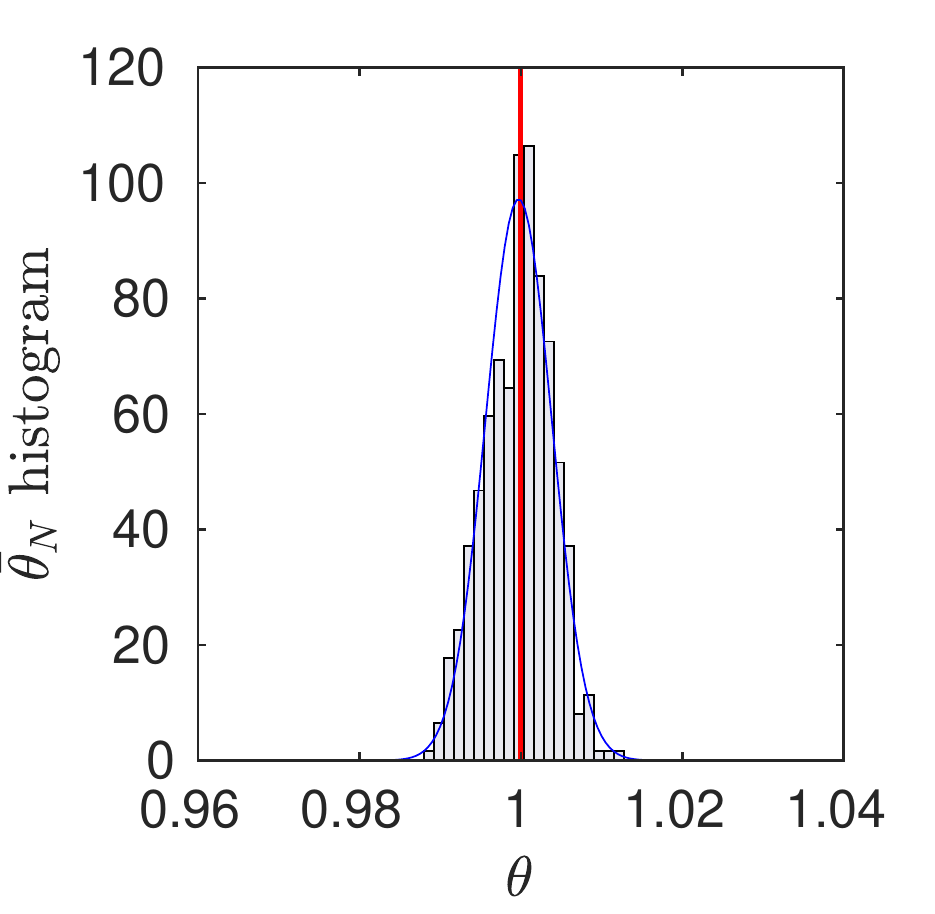}}		
		\centerline{ \SNRname=30 $\mathrm{dB}$}
	\end{minipage}\hfill
	\begin{minipage}[t]{\widthminipage}
		\centering
		\centerline{	\includegraphics[width=\widthfig,trim={0cm 0cm 0.29cm 0cm},clip]{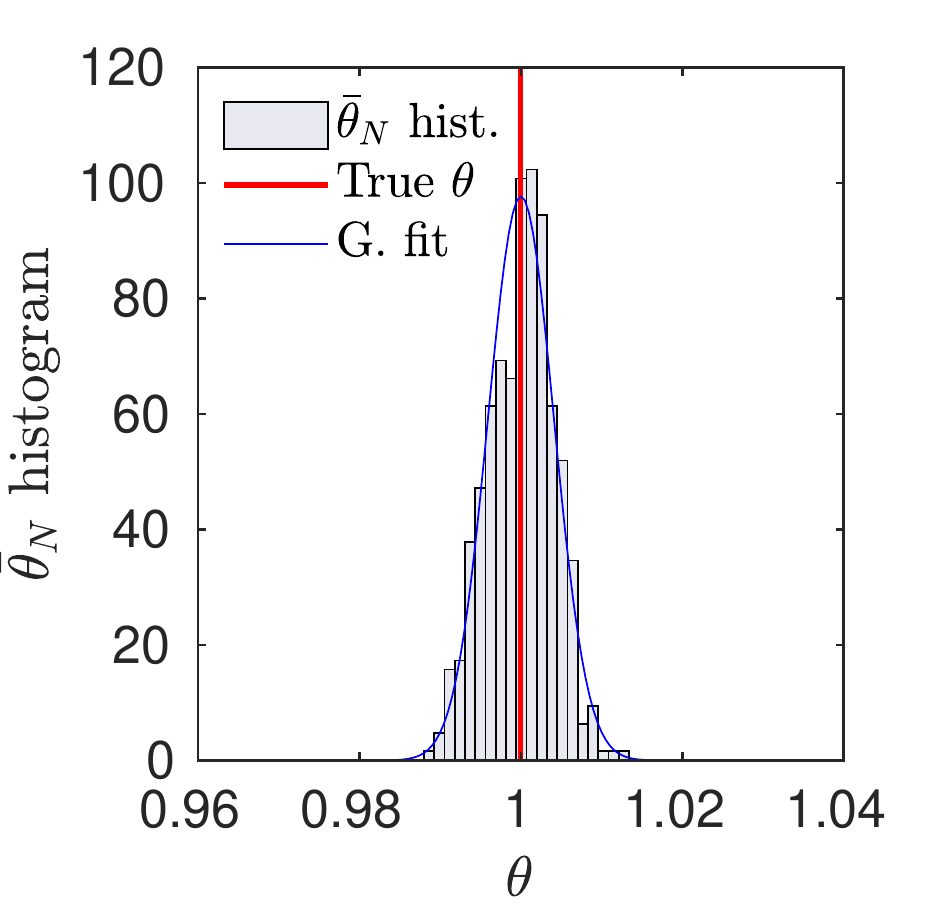}}
		\centerline{
			\includegraphics[width=\widthfig,trim={0cm 0cm 0.29cm 0cm},clip]{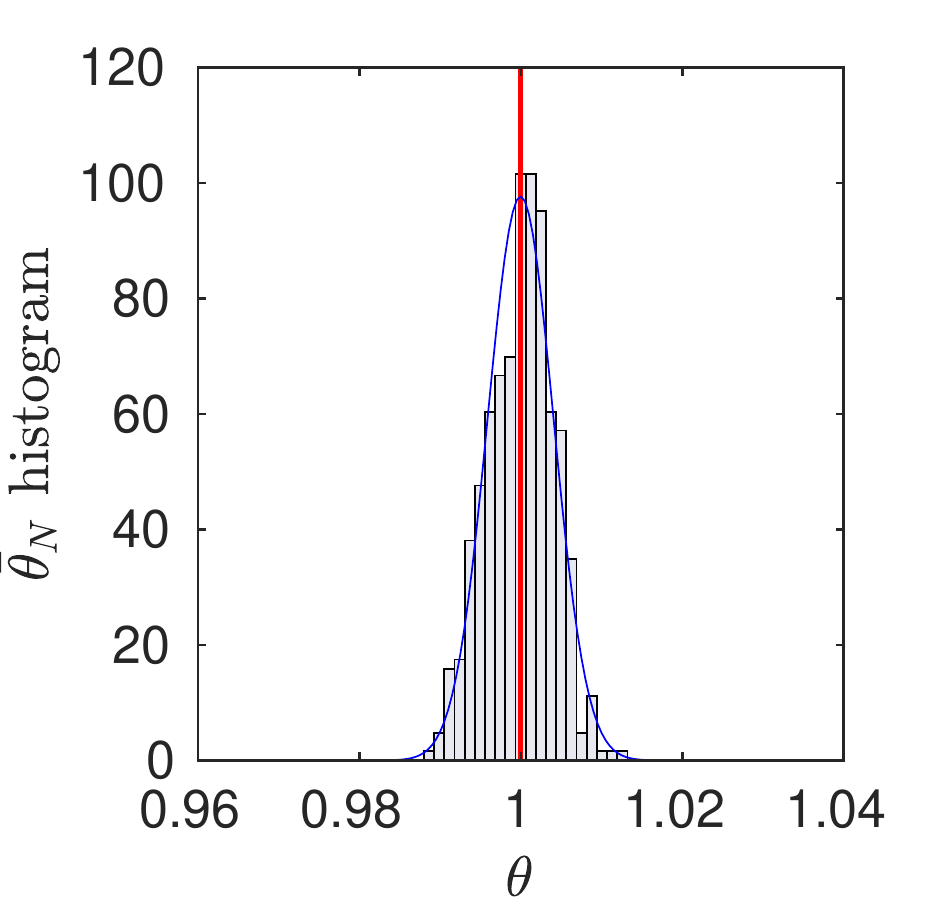}}
		\centerline{ \SNRname=40 $\mathrm{dB}$}
	\end{minipage}\hfill
	\caption{	\small		
		Denoising with synthesis-$\ell_1$ prior with Laplace noise. Histograms of the estimated $\thetaEB$ for 500 repetitions of the empirical Bayes algorithm using (a) correct Laplace noise model and (b)~incorrect Gaussian noise model. Results for \SNRname \ of ${20 ~\text{$\mathrm{dB}$}}$, ${30 ~\text{$\mathrm{dB}$}}$ and ${40 ~\text{$\mathrm{dB}$}}$.\normalsize}
	\label{fig:res-num-synth-histograms-laplacenoise}
\end{figure}

Lastly, to explore the robustness of the method towards mild misspecification of the likelihood, we have also repeated the same experiment with Laplace noise but using an incorrectly specified Gaussian noise model. These results are reported in \Cref{fig:res-num-synth-histograms-laplacenoise} (b), where we see that the method is robust to mild likelihood misspecification, as the differences between using a correctly specified likelihood or an incorrectly Gaussian likelihood only become noticeable for the lowest SNR of 20~dB.

\subsection{Non-blind natural image deconvolution}
\label{ssec:experiments.nat-img-deblur}
We now illustrate the proposed methodology with an application to image deblurring using two different kinds of prior distributions: the total variation (\tvname) prior and a wavelet-based synthesis-$\ell_1$ prior. For comparison, we also report the results obtained with \SUGARname \ \cite{sugar2014stein} (only when using a \tvname \ prior), joint MAP estimation \cite{EUSIPCO}, discrepancy principle \cite{engl1996regularization,morozov2012methods}, and by using the oracle value $\thetaMSE$ that minimises the estimation mean squared error (\MSEname), \ie \
\begin{equation}
\thetaMSE = \argmin_{\theta \in \Theta} \defEns{\norm{\xGT - \argmax_{x \in \rset^{\dim}} p(x | y, \theta)}_2} \eqsp ,
\end{equation}
where $\xGT$ is the ground truth.
We want to highlight that carrying out such a comparison is not a trivial task because some algorithms are solver-dependent while some others are completely independent of the solver used to compute the MAP estimator. For this reason, the comparison was done with extreme care, and we include a detailed explanation of how we compare the results in \Cref{append:method-comparison}.

In non-blind image deblurring, the aim is to recover an unknown image $ \x \in \rset^d $ from a blurred and noisy observation $ \y=\rmA\x+\rmw $, where $\rmA \in \rset^d \times \rset^d$ is a blur matrix, and $\rmw$ is a {$\dim\textrm{-dimensional}$} Gaussian random variable with zero mean and covariance matrix $\sigma \Id$ with ${\sigma >0}$. In our experiments, $x$ and $y$ are of size $\dim = 512 \times 512$ pixels, $\rmA$ is a known circulant uniform blur of size $9 \times 9$ pixels, and $\sigma^2$ is chosen such that the blurred signal-to-noise-ratio (\SNRname) is ${20 ~\text{$\mathrm{dB}$}}$, ${30 ~\text{$\mathrm{dB}$}}$, or ${40 ~\text{$\mathrm{dB}$}}$.  We perform all experiments on ten standard test images (\small{\texttt{barbara, boat, bridge,  flintstones, goldhill, lake, lena, man, mandrill and wheel}}). \normalsize

For each image, noise level, and $\theta$ selection method, we first obtain an estimate for $\theta$ and then use it to compute the MAP estimator $\hat{\x}_{\MAP}$ (given by \eqref{EQ: mapEstim}). In the case of the joint MAP method \cite{EUSIPCO}, we carry out joint MAP estimation of $\theta$ and $\hat{\x}_{\MAP}$. We compute the MAP estimator by using a highly efficient proximal convex optimisation algorithm, SALSA \cite{salsa2010fast}, which is an instance of Alternative Direction Method of Multipliers (ADMM). We then assess the resulting performance by computing the \MSEname \ between the \MAPname \ estimator and the ground truth. 

\subsubsection{Deconvolution with Total Variation prior}
\label{sssec:experiments.TV-deblur}

In this experiment we use model \eqref{EQ: posterior} where for any $x \in \rset^d$ we have $ \f(\x) =  \norm{\y-\rmA\x}^2_2/2\sigma^{2}$,  ${\g(\x) = \mathrm{TV}(\x)}$, and follow the previously explained procedure.   Here $\mathrm{TV}(\x)$  is the isotropic total variation pseudo-norm  given by  $\mathrm{TV}(\x)=\sum_i\sqrt{(\Delta_i^h \x)^2+(\Delta_i^v \x)^2}$ where $\Delta_i^v$ and $\Delta_i^h$ denote horizontal and vertical first-order local difference operators. 
To compute $\thetaEB$ we use \Cref{algo:MCMC_single_chain}. The prior associated with the total variation pseudo-norm is not proper% (i.e., $\int p(\x)\textrm{d}\x = \infty$)
, so the effective dimension is $\dim-1$. We evaluated the proximal operator of $\mathrm{TV}(\x)$ using the  primal-dual algorithm from \cite{chambolle2011first}  with 25 iterations.  

The algorithm parameters are chosen following the recommendations provided in \Cref{append:ssec:setting-algo-parameters}; we consider $300$ warm-up iterations and set  $\btheta_0=0.01$, $X_0^0=y$, $m_n=1$ and $\delta_n = 0.1 \times n^{- 0.8}/\dim$ for any $n \in \nsets$;  we set  $\lambda = \min\left( 5 \Ly^{-1}, \lambda_{\max}\right)$ with $ \lambda_{\max}=2 $ and $\Ly=(0.99/\sigma)^2$, and ${\gamma = 0.98 \times (\Ly +1/\lambda)^{-1}}$.  As suggested in \Cref{append:ssec:setting-algo-parameters}, we set $(\omega_n)_{n \in \nset}$ to have  $N_0 = 25$ burn-in iterations and compute $\thetaavg_N$ using  \eqref{eq:thetaavg}.

In addition, instead of setting a fixed number of iterations, we stop the algorithm when the relative change $\absLigne{\thetaavg_{N+1} - \thetaavg_{N}}/\thetaavg_N$ is smaller than $10^{-3}$.  It would be possible to use a tolerance of $10^{-5}$ and get a slight improvement of the \MSEname \ ($<0.02$ $\mathrm{dB}$), but this would lead to computing times that are five times longer. We use SALSA with the following parameters: ${\texttt{inneriters} = 1}$, $\texttt{outeriters}=500$, $\texttt{tol}=10^{-5}$ and  ${\texttt{mu}=\thetaEB/10}$.

\newcommand{\snrTV}{30}
\def\widthfig{\linewidth}
\def\widthminipage{0.24\linewidth}
\newcommand\incGoldhillFull[1]{\includegraphics[width=\widthfig,trim={1pt 1pt 1pt 1pt},clip]{#1}}
\newcommand\incManFull[1]{\includegraphics[width=\widthfig,trim={1pt 1pt 1pt 6pt},clip]{#1}}
\begin{figure}[!htb]
	\centering	
	\begin{minipage}[t]{\widthminipage}
		\centering		
		\centerline{\incManFull{img/num-res-deblurTV/snr\snrTV/EB/man_degraded}\vspace{0.1cm}}
		
		\centerline{(a) Degraded}
	\end{minipage}	
	\begin{minipage}[t]{\widthminipage}
		\centering
		\centerline{\incManFull{img/num-res-deblurTV/snr\snrTV/EB/man_estim}\vspace{0.1cm}}
		
		\centerline{(b) EB}
	\end{minipage}
	\begin{minipage}[t]{\widthminipage}
		\centering			
		\centerline{\includegraphics[width=\widthfig,trim={0pt 0pt 0pt 5pt},clip]{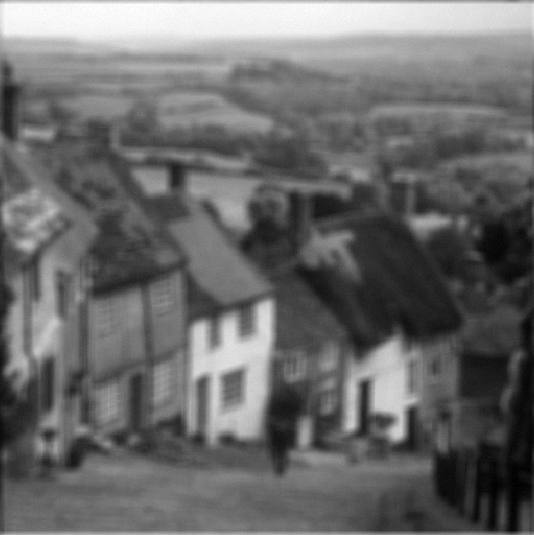}}
		\centerline{(a) Degraded}
	\end{minipage}	
	\begin{minipage}[t]{\widthminipage}
		\centering		
		\centerline{\incGoldhillFull{img/num-res-deblurTV/snr\snrTV/EB/goldhill_estim_clamp}}
		\centerline{(b) EB}
	\end{minipage}	
	\caption{	\small		
		Deblurring with \tvname \ prior for \texttt{man} and \texttt{goldhill} test images: (a) blurred and noisy (\SNRname=\snrTV~$\mathrm{dB}$) observation $ \y $, (b) \MAPname \ estimator obtained using $\thetaavg_N $ computed with empirical Bayes.\normalsize}
	\label{fig:num-res-TV-xMAP-full}
\end{figure}

For illustration, \Cref{fig:num-res-TV-xMAP-full} shows the results obtained for two of the test images (\texttt{man}  and \texttt{goldhill}) using the proposed method. The displayed images correspond to the ${30 ~\text{$\mathrm{dB}$}}$ \SNRname \ setup. In \Cref{fig:num-res-TV-xMAP} we compare the \MAPname \  estimates obtained by using each of the  considered methods. In this case we display a close-up on \texttt{man}  and \texttt{goldhill} selecting a region that contains fine details and sharp edges.   In   \Cref{fig:num-res-TV-MSEplot} and \Cref{fig:num-res-TV-theta}  we provide further details for the same two images, showing a plot of the \MSEname \ obtained with each method and the evolution of the iterates $ (\theta_n)_{n \in \nset} $ for the empirical Bayesian method.

%PLOT MAP IMAGES 2x6: MAN AND GOLDHILL - SNR=30 TV DEBLURRING	
\def\widthfig{0.99\linewidth}
\def\widthminipage{0.165\linewidth}	
\begin{figure}[!h]\centering		
	\newcommand\incGoldhill[1]{\includegraphics[width=\widthfig,trim={2cm 2cm 5cm 5cm},clip]{#1}\vspace{0.1cm}}
	\newcommand\incnes[1]{\includegraphics[width=\widthfig,trim={7cm 11.2cm 5cm 0.8cm},clip]{#1}}
	\newcommand\incMan[1]{\includegraphics[width=\widthfig,trim={5cm 11.2cm 7cm 0.8cm},clip]{#1}}
	\begin{minipage}[t]{\widthminipage}
		\centering
		\centerline{\incMan{img/num-res-deblurTV/snr\snrTV/EB/man_orig}}
		\centerline{\incGoldhill{img/num-res-deblurTV/snr\snrTV/EB/goldhill_orig}}
		\centerline{(a) Original}
	\end{minipage}\hfill
	\begin{minipage}[t]{\widthminipage}
		\centering		
		\centerline{\incMan{img/num-res-deblurTV/snr\snrTV/EB/man_degraded}}
		\centerline{\incGoldhill{img/num-res-deblurTV/snr\snrTV/EB/goldhill_degraded}}		
		\centerline{(b) Degraded}
	\end{minipage}\hfill
	\begin{minipage}[t]{\widthminipage}
		\centering
		\centerline{\incMan{img/num-res-deblurTV/snr\snrTV/EB/man_estim}}
		\centerline{\incGoldhill{img/num-res-deblurTV/snr\snrTV/EB/goldhill_estim_clamp}}
		\centerline{(c) EB}
	\end{minipage}\hfill
	\begin{minipage}[t]{\widthminipage}
		\centering
		\centerline{\incMan{img/num-res-deblurTV/snr\snrTV/HB/man_estim}}
		\centerline{\incGoldhill{img/num-res-deblurTV/snr\snrTV/HB/goldhill_estim}}
		\centerline{(d) HB}
	\end{minipage}\hfill
	\begin{minipage}[t]{\widthminipage}
		\centering
		\centerline{\incMan{img/num-res-deblurTV/snr\snrTV/discPrinciple/man_estim}}
		\centerline{\incGoldhill{img/num-res-deblurTV/snr\snrTV/discPrinciple/goldhill_estim}}
		\centerline{(e) DP}
	\end{minipage}\hfill
	\begin{minipage}[t]{\widthminipage}
		\centering
		\centerline{\incMan{img/num-res-deblurTV/snr\snrTV/SUGAR/man_estim}}
		\centerline{\incGoldhill{img/num-res-deblurTV/snr\snrTV/SUGAR/goldhill_estim}}
		\centerline{(f) SUGAR}
	\end{minipage}			
	\caption{	\small		
		Deblurring with \tvname \ prior. Close-up on \texttt{man} and \texttt{goldhill} test images: (a) True image $ \x $, (b) blurred and noisy (\SNRname=\snrTV~$\mathrm{dB}$) observation $ \y $, (c)-(f) \MAPname \ estimators obtained through empirical Bayes, hierarchical Bayes, discrepancy principle and \SUGARname \ methods,  respectively.\normalsize}
	\label{fig:num-res-TV-xMAP}
\end{figure}

%MSE TV - Man and Goldhilll SNR=30
\begin{figure}[!h]	\centering		
	\begin{minipage}[l1]{.48\linewidth}
		\centering
		\centerline{\includegraphics[width=\textwidth]{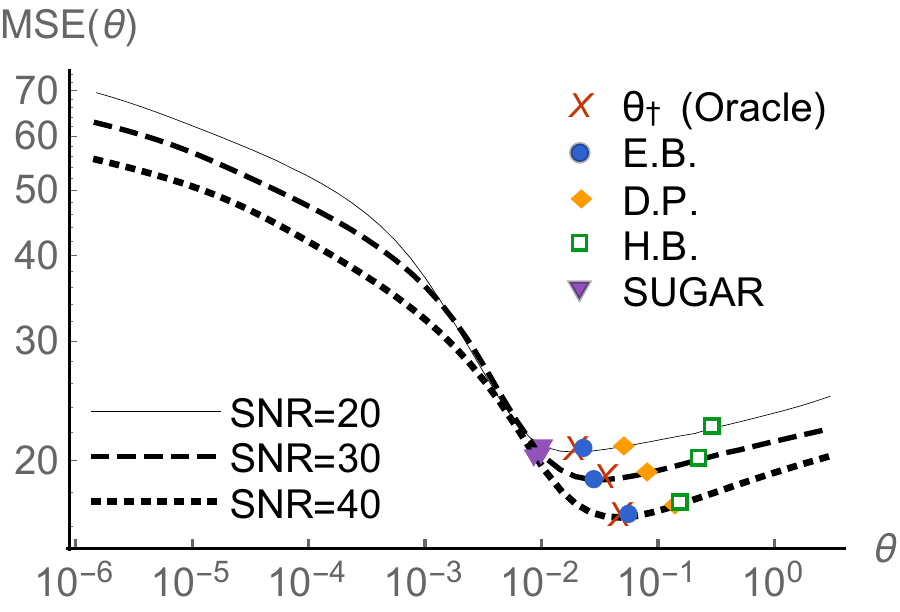}}		
		\centerline{(a) \texttt{man}}
	\end{minipage}
	\begin{minipage}[l2]{.48\linewidth}
		\centering
		\centerline{\includegraphics[width=\textwidth]{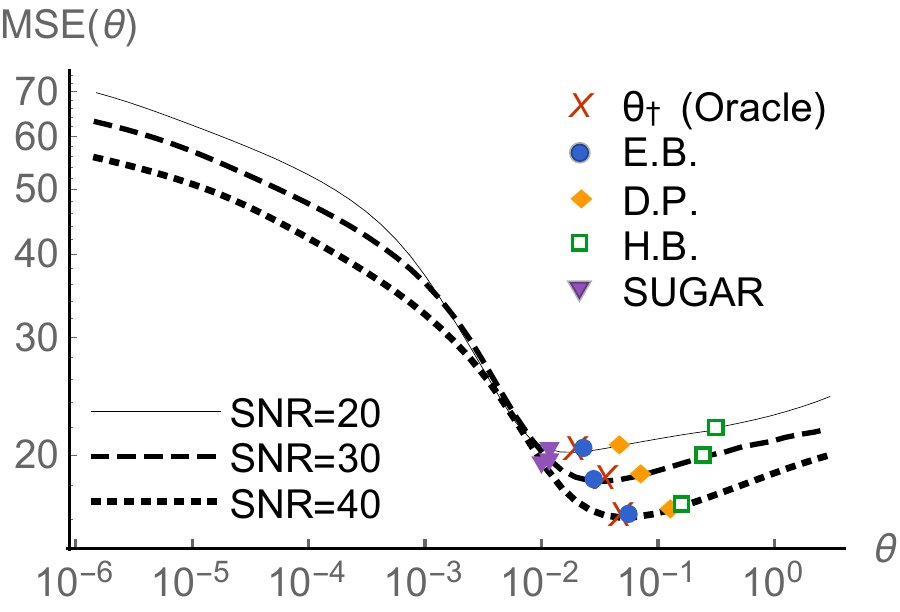}}		
		\centerline{(b) \texttt{goldhill}}	
	\end{minipage}
	\caption{\small Deblurring with \tvname \ prior. Mean squared error (\MSEname) obtained for (a)~\texttt{man}  and (b)~\texttt{goldhill} for different values of $\theta$. We compare the values obtained with empirical Bayes, discrepancy principle, hierarchical Bayes, \SUGARname, and  the optimal value $\thetaMSE$ that minimises the \MSEname.  }
	\label{fig:num-res-TV-MSEplot}
	\normalsize
\end{figure}

%EVOLUTION THETA TV - Man and Goldhilll SNR=30
\begin{figure}[!h]			
	\begin{minipage}[l1]{.49\linewidth}
		\centering
		\centerline{\includegraphics[width=0.9\textwidth]{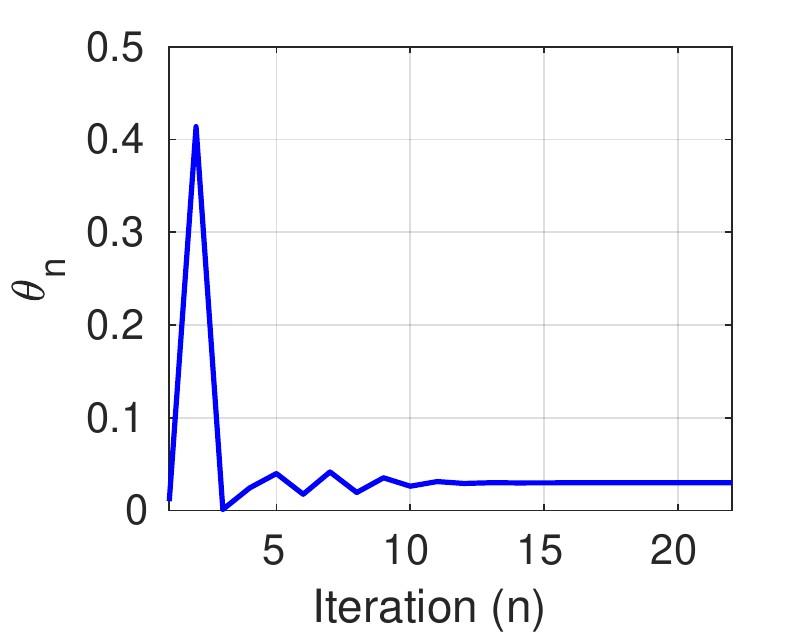}}		
		\centerline{(a) \texttt{man}}
	\end{minipage}\hfill
	\begin{minipage}[l2]{.49\linewidth}
		\centering
		\centerline{\includegraphics[width=0.9\textwidth]{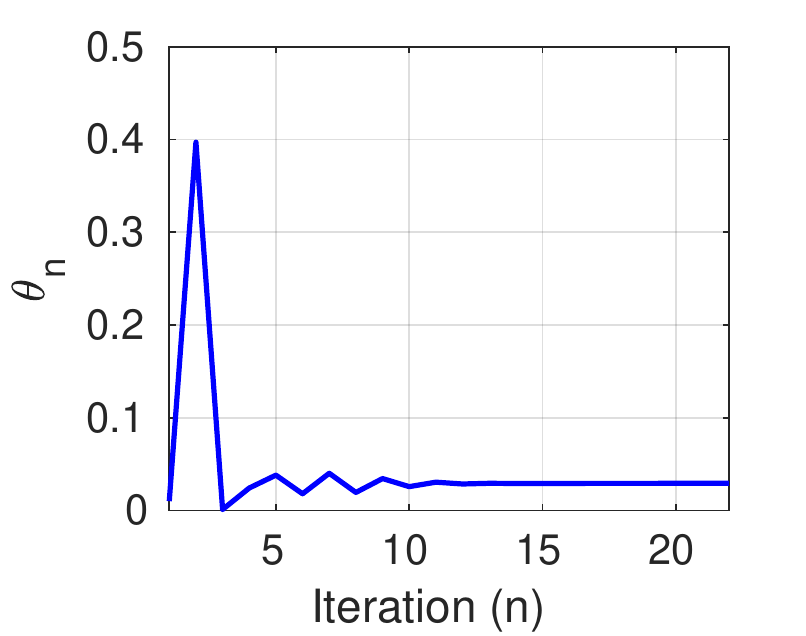}}		
		\centerline{(b) \texttt{goldhill}}	
	\end{minipage}
	\caption{\small Deblurring with \tvname \ prior. Evolution of the sequence of iterates $(\theta_n)_{n \in \nset}$ for the proposed method for \texttt{man} and \texttt{goldhill} test images (\SNRname=\snrTV~$\mathrm{dB}$).  }
	\label{fig:num-res-TV-theta}
	\normalsize
\end{figure}
Observe in \Cref{fig:num-res-TV-MSEplot} that the proposed empirical Bayesian algorithm yields close-to-optimal results, both for high and low \SNRname \ values. The method based on the discrepancy principle and the hierarchical Bayesian method overestimate the amount of regularisation required. Conversely, \SUGARname \ underestimates $\theta$ (this can also be observed in the recovered image in \Cref{fig:num-res-TV-xMAP} (f), where the MAP estimate presents some ringing artefacts due to high-frequency noise amplification); this is in agreement with the results reported in \cite{lucka2018risk}.

\Cref{table_deblur_TV} reports the average \MSEname \ values and average computing times obtained for each method.  
We can see that the proposed method performs close to the oracle performance, generally outperforming the other approaches from the state of the art with  very competitive computing times. In particular, observe that the proposed method performs remarkably for all \SNRname \ values. At high \SNRname \ values (${40 ~\text{$\mathrm{dB}$}}$) discrepancy principle and  joint MAP \cite{EUSIPCO} perform similarly, whereas for low \SNRname \ values (${20 ~\text{$\mathrm{dB}$}}$)  discrepancy principle outperforms joint MAP. Also, \SUGARname \ performs well for low \SNRname, but fails to find good values of $\theta$ when the \SNRname \ is higher. This might be due to the fact that \SUGARname \ minimises a surrogate of the \MSEname \ that works well for denoising but degrades in problems that are ill-posed or ill-conditioned. 
We emphasise at this point that the exact computing times of each algorithm depend on the specific stopping criteria and implementation details, so rather than claiming that one method is faster than the others, what we wish to illustrate is that the computing times are all within the same order of magnitude, with \SUGARname \ being moderately slower for this particular experiment. As we mentioned before, if we had selected a tolerance of $10^{-5}$ to stop our algorithm, the computing times would have increased with almost negligible changes in the \MSEname. Also note that we compute the optimal $\theta$ for the discrepancy principle method by continuation, but one could also use a different proximal splitting strategy (see  \cite{combettes2011proximalsplitting} for instance).

%TABLE TV
\begin{table}[!ht]	
	\renewcommand{\arraystretch}{1.2}
	\begin{minipage}[!t]{.59\linewidth}
		{\small
			\centering
			\begin{tabular}{rcccccc}\hline		
				Method & \multicolumn{2}{c}{\SNRname=20 $\mathrm{dB}$}       & \multicolumn{2}{c}{\SNRname=30 $\mathrm{dB}$}     & \multicolumn{2}{c}{\SNRname=40 $\mathrm{dB}$}                               \\
				&\MSEname &Time& \MSEname   &Time & \MSEname &Time\\
				\hline
				$ \thetaMSE$ & 23.29      &           & 21.39           &       & 19.06            &              \\
				E.B.   & 23.50        & 0.84         & 21.45            & 0.85        & 19.24     & 0.85       \\			
				D.P.   & 23.73         & 0.70	           & 21.87         &    1.52	         & 19.78          & 3.92 \\
				H.B.   & 25.07                   & 0.58                           & 22.84                   & 1.27                           & 19.84                   & 3.27                           \\
				\SUGARname  & 23.66                    & 3.64                           & 23.16                   & 5.00                          & 23.05                   &  5.63  \\   
				\hline                     
			\end{tabular}\vspace{0.4cm}
		}	
		
		\normalsize	
		\centerline{(a)}
	\end{minipage}\hfill	
	\begin{minipage}[!t]{.39\linewidth}
		\vspace{.4cm}
		\centering
		\includegraphics[width=\textwidth]{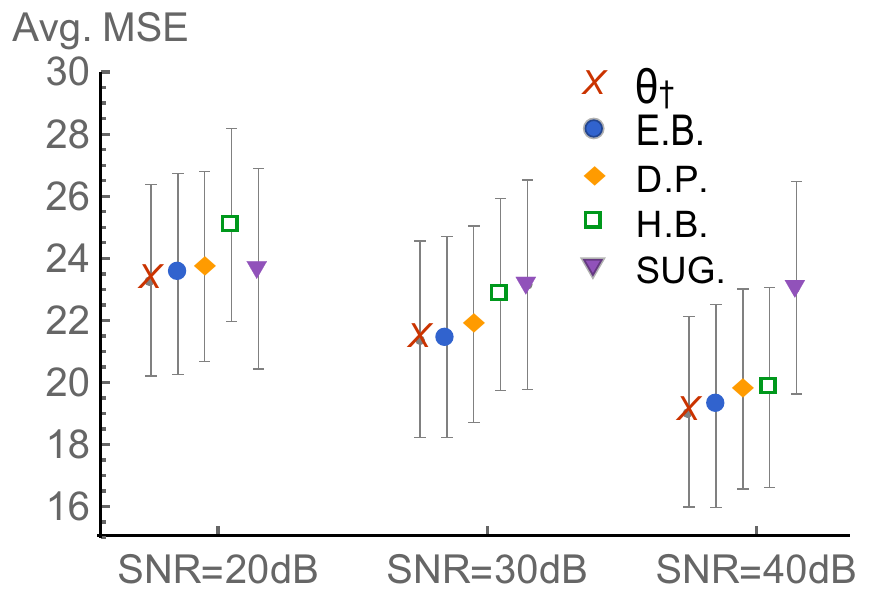}
		\centerline{(b)}
	\end{minipage}
	\caption{\small Deblurring with \tvname \ prior. (a) Table with average mean squared error obtained for ten images with different algorithms. Average execution times expressed in minutes. In (b) we  summarise the content of the table and show the standard deviation with error bars.}\normalsize	
	\label{table_deblur_TV}						
\end{table}		
\normalsize
\subsubsection{Deconvolution with Total Variation prior and unknown noise variance}
\label{sssec:experiments.TV-deblur-unknown-sigma}
In this section we consider the  same experiment as in \Cref{sssec:experiments.TV-deblur}, but we now suppose that the noise variance is unknown and explain how to modify our methodology to estimate this quantity jointly with $\theta$ by marginal MLE. This is beyond the scope of the theoretical results we present in our companion paper \cite{vidal:et:al:2019b}. However, we believe that the theory could be generalised to provide some (albeit weaker) guarantees for this case and other blind and semi-blind problems, and this is an important perspective for future work.

More precisely, we can use the proposed scheme to compute 
\begin{equation}
(\thetaStar,\sigma^2_\star) \in \underset{\theta \in \Theta, \, \sigma^2 \in {[\sigma^2_{min},\sigma^2_{max}]}}{\mathrm{argmax}}~ p(\y|\theta,\sigma^2)\,,
\end{equation}	
where $0 < \sigma^2_{min} < \sigma^2_{max} < \infty$ define a minimum and maximum admissible variance values. To obtain an estimate of $\frac{\textrm{d}}{\textrm{d}{\sigma^2}} \log p(\y|\theta,\sigma^2)$ in \Cref{algo:MCMC_single_chain} we differentiate $\log p(\x,\y|\theta,\sigma^2)$ w.r.t. $\sigma^2$ and obtain
\begin{equation}
\frac{\textrm{d}}{\textrm{d}{\sigma^2}}  \log p(\x,\y|\theta,\sigma^2) =  \dfrac{\norm{\y-\rmA\x}^2_2}{2(\sigma^2)^2}-\dfrac{\dim}{2\sigma^2}\, .
\end{equation}	
We summarise the resulting scheme for jointly estimating $\theta$ and $\sigma^2$ in \Cref{algo:MCMC_single_chain_uknown_sigma} below. 
\begin{algorithm}
	\caption{SAPG algorithm - Scalar $\theta$ and unknown noise variance $\sigma^2$}
	\label{algo:MCMC_single_chain_uknown_sigma}
	\begin{algorithmic}[1]						
		\STATE Input: initial $\{\theta_0, X_0^0\}$, $(\delta_n,\delta'_n,\omega_n,m_n)_{n \in \nset}$, $ \Theta $,  kernel parameters $\gamma,\lambda$, iterations $N$.
		\FOR{$n = 0$ to $N-1$}
		\IF{$n>0$}
		\STATE Set
		$X_0^n = X_{m_{n-1}}^{n-1}$, 
		\ENDIF
		\FOR{$k = 0$ to $m_n-1$}
		\STATE Sample $X^n_{k+1} \sim \Rker_{\gamma,\lambda,\theta_n,\sigma_n^2}(X^n_{k}, \cdot)$,
		\ENDFOR
		\STATE Set $\theta_{n+1} = \Pi_{\Theta}\left[\theta_n + \frac{\delta_{n+1}}{m_n} \sum_{k=1}^{m_n} \defEns{ \frac{\dim}{\alpha \theta_n} - \g(X_k^{n})}\right]$.
		\STATE Set $\sigma^2_{n+1} = \Pi_{[\sigma^2_{min},\sigma^2_{max}]}\left[\sigma^2_n + \frac{\delta'_{n+1}}{m_n} \sum_{k=1}^{m_n}\defEns{ {\norm{\y-\rmA X_{k}^n }^2_2}/{2(\sigma^2_n)^{2})}-{\dim}/{(2\sigma^2_n)}}\right]$.		
		\ENDFOR				
		\STATE Output: $\thetaEB =  \left. \sum_{n=0}^{N-1} \omega_{n} \theta_n \middle/ \sum_{n=0}^{N-1} \omega_n \right.$ and  $\bar{\sigma}^2_N =  \left. \sum_{n=0}^{N-1} \omega_{n} \sigma^2_n \middle/ \sum_{n=0}^{N-1} \omega_n \right..$	
	\end{algorithmic}
\end{algorithm}

One of the complications that stems from working with an unknown noise variance is that the Lipschitz constant $\Ly$ is unknown. This is a problem because $\Ly$ affects the maximum step-size $\gamma$ that we can use in the Markov Kernels while ensuring convergence; $\Ly$ is usually also used to set $\lambda$. To overcome this, we propose to initialise the algorithm by assuming the worst case scenario, i.e. $\sigma^2=\sigma^2_{min}$, which will lead to the largest $\hat{L}_y=(0.99^2/\sigma_{min}^2)$, and in turn lead to the smallest possible step-size $\gamma$ and a small $\lambda$. Since this value is usually very conservative, one can run some iterations of the algorithm until the value of $\sigma^2_n$ begins to stabilise, then refine $\hat{L}_y$ to update the algorithm parameters $\gamma$ and $\lambda$, and continue iterations with those updated values. Here we adopt this approach and run the algorithm in three stages, where we update $\gamma$ and $\lambda$ at the end of each stage by using the estimates of $\bar{\sigma}^2_N$ available at that point to refine $\hat{L}_y$. In accordance with the guidelines provided in \Cref{append:ssec:setting-algo-parameters}, we set $\lambda = \min\left( 5 \hat{L}_y^{-1}, \lambda_{\max}\right)$ with $\lambda_{\max}=2$ and $\gamma = 0.98 \times (\hat{L}_y +1/\lambda)^{-1}$. We have set $\sigma^2_{min}$ and $\sigma^2_{max}$ by assuming prior knowledge that the SNR is between $15$~dB and $45$~dB, but other values could be used without significantly impacting results. In each stage we use $300$ warm-up iterations, set $\btheta_0=0.01$, $\sigma^2_0=(\sigma^2_{min}+\sigma^2_{max})/2$, $X_0^0=y$, $m_n=1$, $\delta_n = 10 \times n^{- 0.8}/\dim$, and  $\delta'_n = 10 \times n^{- 0.8}/\dim$ for any $n \in \nsets$. At each stage, we use the same stopping criteria as in \Cref{sssec:experiments.TV-deblur}, with a tolerance of $10^{-3}$ both for $\theta_n$ and $\sigma_n$ (the algorithm progresses to the next stage (or is stopped) when both iterates meet the criteria).

\newcommand{\snrTVsigma}{30}
\def\widthfig{\linewidth}
\def\widthminipage{0.32\linewidth}	
\begin{figure}[!htb]
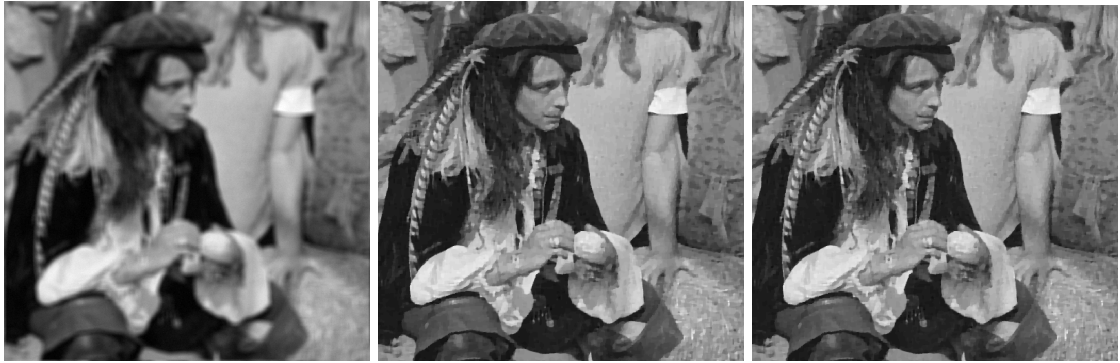

	\centering	
	\begin{minipage}[t]{\widthminipage}
		\centering		
		\centerline{\incManFull{img/num-res-deblurTV/snr\snrTVsigma/EB/man_degraded}\vspace{0.1cm}}
		\centerline{(a) Degraded}
	\end{minipage}	
	\begin{minipage}[t]{\widthminipage}
		\centering
		\centerline{\incManFull{img/num-res-deblurTV/snr\snrTVsigma/EB/man_estim}\vspace{0.1cm}}
		\centerline{(b) EB with known $\sigma$}
	\end{minipage}
	\begin{minipage}[t]{\widthminipage}
		\centering
		\centerline{\incManFull{img/num-res-deblurTV-estimSigma/snr\snrTVsigma/man_estim}\vspace{0.1cm}}
		\centerline{(c) EB with unknown $\sigma$}
	\end{minipage}
	\caption{	\small		
		Deblurring with \tvname \ prior for \texttt{man}: (a) blurred and noisy (\SNRname=\snrTVsigma~$\mathrm{dB}$) observation $ \y $, (b-c) \MAPname \ estimator with $\thetaavg_N $ computed with empirical Bayes using (b) true and (c) estimated $\sigma$.\normalsize}
	\label{fig:num-res-TV-estimSigma-xMAP-full}
\end{figure}

For illustration, \Cref{fig:num-res-TV-estimSigma-xMAP-full} shows the results obtained with \Cref{algo:MCMC_single_chain_uknown_sigma} for the \texttt{man} test image. For comparison, we also show the results of \Cref{sssec:experiments.TV-deblur} obtained by using the true value of $\sigma$. The displayed images correspond to the ${\snrTVsigma~\text{$\mathrm{dB}$}}$ \SNRname \ setup. Observe there is very little difference between the recovered image using the true value of $\sigma^2$ and the marginal MLE estimate $\bar{\sigma}^2_N$ obtained with \Cref{algo:MCMC_single_chain_uknown_sigma}. 

%REDUCED TABLE TV ESTIMATED SIGMA
\begin{table}[!ht]	
	\renewcommand{\arraystretch}{1.2}
	\centering
	{\small
		\centering
		\begin{tabular}{rcccccc}\hline		
			Method & \multicolumn{2}{c}{\SNRname=20 $\mathrm{dB}$}       & \multicolumn{2}{c}{\SNRname=30 $\mathrm{dB}$}     & \multicolumn{2}{c}{\SNRname=40 $\mathrm{dB}$}                               \\
			&\MSEname &Time (min)& \MSEname   &Time (min) & \MSEname &Time (min)\\
			\hline
			$ \thetaMSE$ & 23.29      &           & 21.39           &       & 19.06            &              \\

			E.B. with known $\sigma$   & 23.50        & 0.84         & 21.45            & 0.85        & 19.24     & 0.85       \\			
			E.B. with unknown $\sigma$    & 23.53        & 1.02         & 21.52            & 1.35        & 19.27     & 1.77    \\
			\hline                     
		\end{tabular}\vspace{0.1cm}
	}					
	\caption{\small Deblurring with \tvname \ prior and unknown $\sigma$. Table with average mean squared error obtained for ten images for the experiment where $\sigma$ is estimated jointly with $\theta$. For reference we also include the results obtained with empirical Bayes when $\sigma$ is known and using the oracle value $\thetaMSE$ that minimises the MSE. }\normalsize			
	\label{table_deblur_TV_estimSigma}		
\end{table}	

\Cref{table_deblur_TV_estimSigma} presents a detailed comparison of the results obtained with \Cref{algo:MCMC_single_chain_uknown_sigma}. Again, observe that the quality of the restored images obtained with the marginal MLE estimate $\bar{\sigma}^2_N$ is comparable to that of the images obtained with the true value of $\sigma^2$, with a moderate overhead in the computing times when the three-stage approach is used. We conclude by presenting in   \Cref{fig:num-res-TV-estimSigma-theta-evol} the evolution of the iterates $(\theta_n)_{n \in \nset} $ and $ (\sigma^2_n)_{n \in \nset} $ for the last stage of the algorithm (the first two stages are discarded). Observe that the algorithm converges very quickly, similarly to the case when $\sigma^2$ is known.

\begin{figure}[!h]			
	\begin{minipage}[l1]{.49\linewidth}
		\centering
		\centerline{\includegraphics[width=0.99\textwidth]{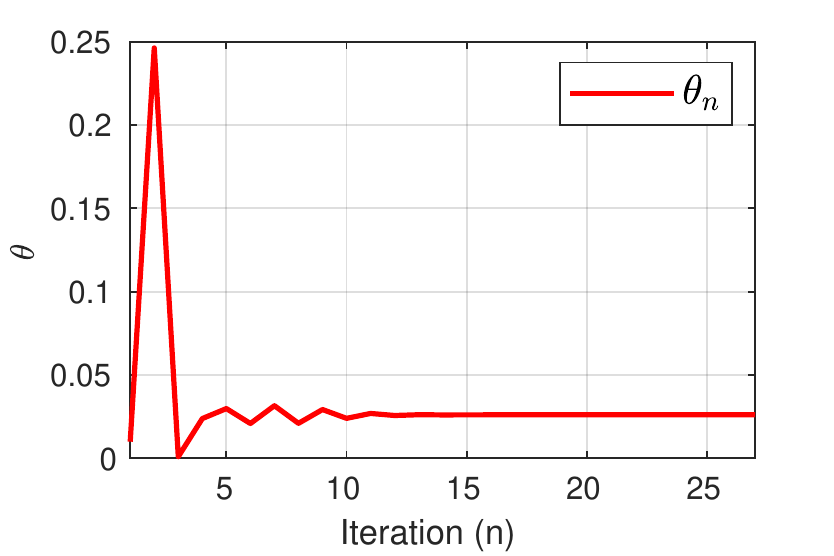}}			
	\end{minipage}
	\begin{minipage}[l2]{.49\linewidth}
		\centering
		\centerline{\includegraphics[width=0.99\textwidth]{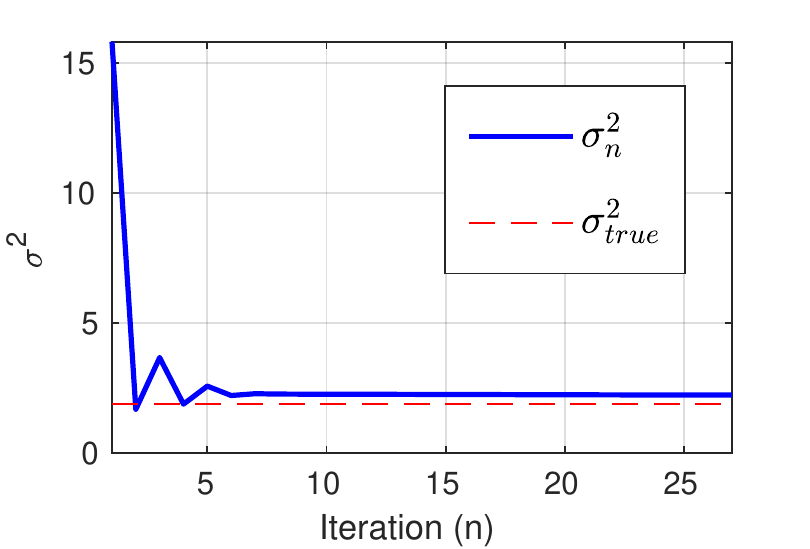}}				
	\end{minipage}
	\caption{\small Deblurring with \tvname \ prior and unknown noise variance $\sigma^2$. Evolution of the sequence of iterates $(\theta_n)_{n \in \nset}$ and  $(\sigma^2_n)_{n \in \nset}$ for the proposed method for the \texttt{man} test images (\SNRname=\snrTV~$\mathrm{dB}$).}
	\label{fig:num-res-TV-estimSigma-theta-evol}
	\normalsize
\end{figure}

\subsubsection{Wavelet deconvolution with synthesis prior}
\label{sssec:experiments.wt-deblur}
We now consider image deblurring under a wavelet synthesis formulation, where we assume that $x \in \rset^d$ represents the unknown image in an redundant 4-level Haar wavelet representation $\Psi$, with dimension $d = 10 \times d_y = 10 \times 512 \times 512$ coefficients. We assume a Laplace prior on the elements of $x$ with unknown parameter $\theta$. Accordingly, the posterior is of the form \eqref{EQ: posterior} with $ f_\y(x)= \norm{\y-\rmA \Psi x}^2_2/(2\sigma^{2}) $, $\g(x) = \Vnorm[1]{x}$. To obtain solutions we map $x$ to pixel domain by computing $\Psi^\top x$.

To compute $\thetaEB$ we use \Cref{algo:MCMC_single_chain}.  The algorithm parameters are chosen following the recommendations provided in \Cref{append:ssec:setting-algo-parameters}; we do not consider any warm-up iterations, and set  $\theta_0=0.01$, $X_0^0=y$, for any $n \in \nsets$, $m_n=1$, $\delta_n = 10 \times n^{- 0.8}/\dim$,  $\lambda = \min\left( 5 \Ly^{-1}, \lambda_{\max}\right)$ with $ \lambda_{\max}=2 $ and $\Ly=(0.98/\sigma)^2$. We use the same stopping criteria as in the previous experiment and we consider two different tolerance levels: i) we stop the algorithm when the relative change $\absLigne{\theta_{N+1} - \theta_N}$ is smaller than $10^{-4}$, and ii) when the relative change is smaller than $10^{-3}$. We set $(\omega_n)_{n \in \nset}$ to have  $N_0 = 20$ burn-in iterations and compute $\thetaavg_N$ using  \eqref{eq:thetaavg}. To compute the MAP estimate we use SALSA with the following parameters: ${\texttt{inneriters} = 1}$, \texttt{outeriters} = $1000$, \texttt{tol} = $10^{-5}$ and \texttt{mu}~=~$\thetaEB$.

%IMAGES FOR SYNTHESIS L1 - Boat and Mandrill \SNRname=20
\newcommand{\snrSynth}{20}
\def\widthfig{\linewidth}
\def\widthminipage{0.24\linewidth}	
\begin{figure}[!h]
	\centering		
	\begin{minipage}[t]{\widthminipage}
		\centering		
		\centerline{\includegraphics[width=\widthfig,trim={0pt 0pt 0pt 0pt},clip]{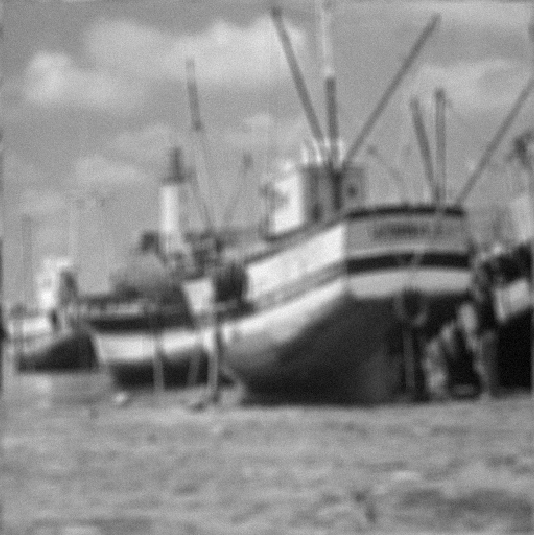}\vspace{0.1cm}}		\centerline{(a) Degraded}
	\end{minipage}
	\begin{minipage}[t]{\widthminipage}
		\centering
		\centerline{\includegraphics[width=\widthfig,trim={0pt 0pt 0pt 0pt},clip]{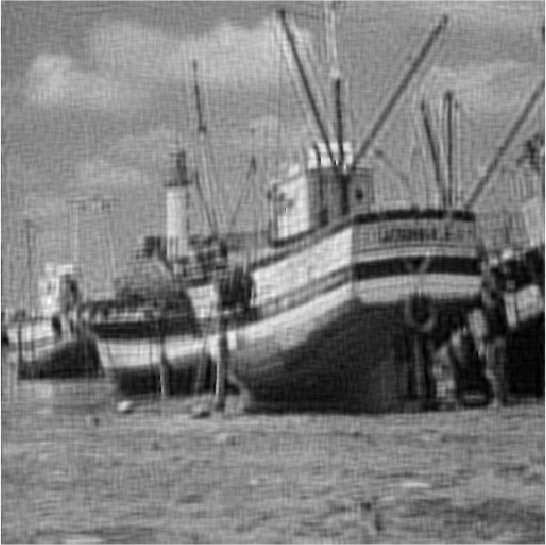}\vspace{0.1cm}}
		\centerline{(b) EB}
	\end{minipage}
	\begin{minipage}[t]{\widthminipage}
		\centering				
		\centerline{\includegraphics[width=\widthfig,trim={0pt 0pt 0pt 2pt},clip]{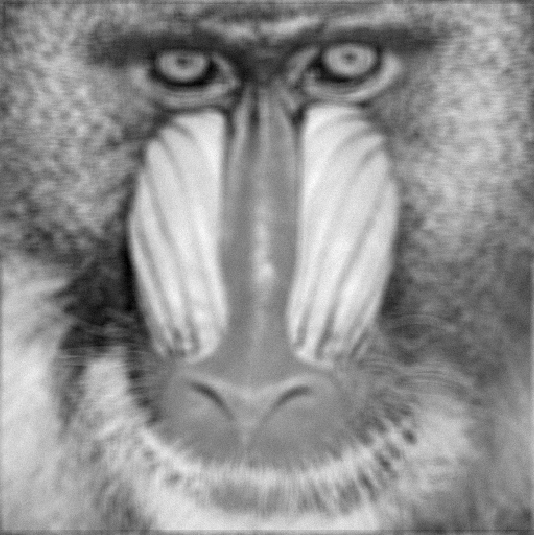}}
		\centerline{(c) Degraded}
	\end{minipage}
	\begin{minipage}[t]{\widthminipage}
		\centering	
		\centerline{\includegraphics[width=\widthfig,trim={0pt 0pt 0pt 0pt},clip]{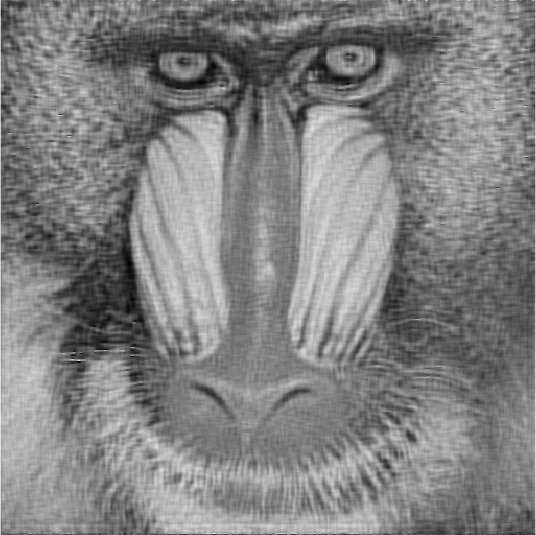}}
		\centerline{(d) EB}
	\end{minipage}
	\caption{	\small		
		Wavelet deconvolution with synthesis-$\ell_1$ prior for \texttt{boat} and \texttt{mandrill}: (a),(c)~blurred and noisy (\SNRname=\snrSynth~$\mathrm{dB}$) observation $ \y $, (b),(d) \MAPname \ estimator obtained with empirical Bayes.\normalsize}
	\label{fig:num-res-synthL1-xMAP-full}
\end{figure}

In \Cref{fig:num-res-synthL1-xMAP-full} we show the results obtained for two of the test images (\texttt{boat}  and \texttt{mandrill}) using the proposed method. The displayed images correspond to the ${20 ~\text{$\mathrm{dB}$}}$ \SNRname \ setup. In  \Cref{fig:num-res-synthL1-theta} we provide further details for the \texttt{boat} image, showing the evolution of the iterates $(\theta_n)_{n \in \nset}$ and the relative differences on its running average value $(\thetaavg_N)_{N \in \nset}$ throughout iterations.

%EVOLUTION THETA SYNTHESIS L1 - Boat and Mandrill SNR=20
\begin{figure}[!h]	
	\centering		
	\begin{minipage}[l1]{.47\linewidth}
		\centering
		\centerline{\includegraphics[width=\textwidth]{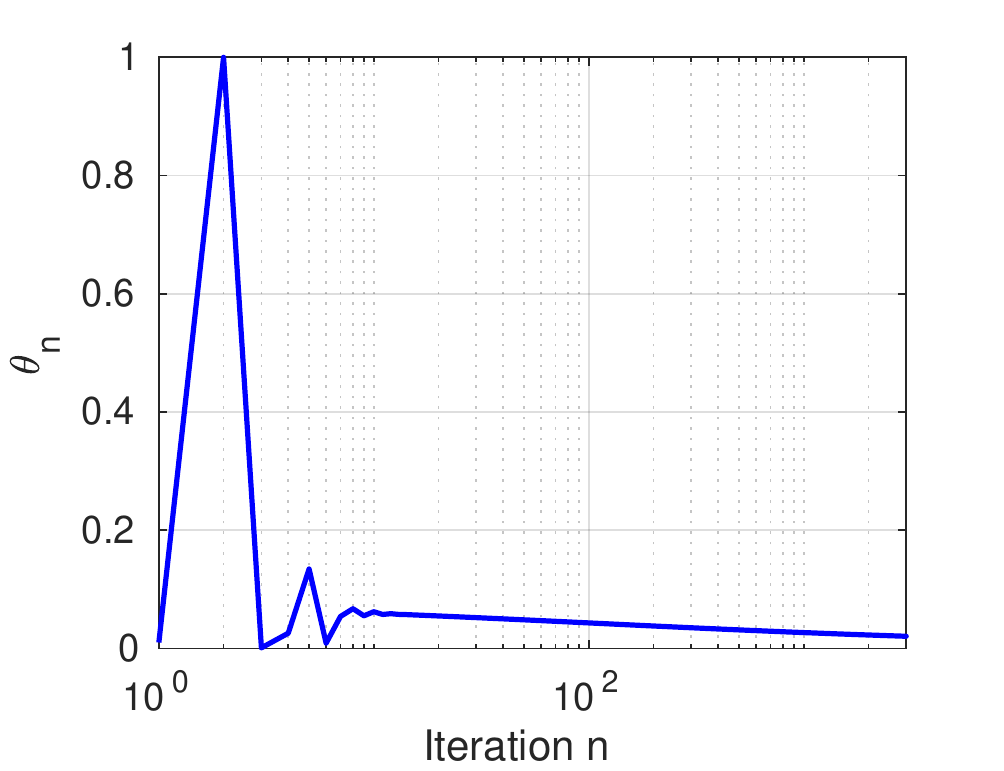}}		
		\centerline{(a)}
	\end{minipage}
	\begin{minipage}[l2]{.42\linewidth}
		\centering
		\centerline{\includegraphics[width=\textwidth]{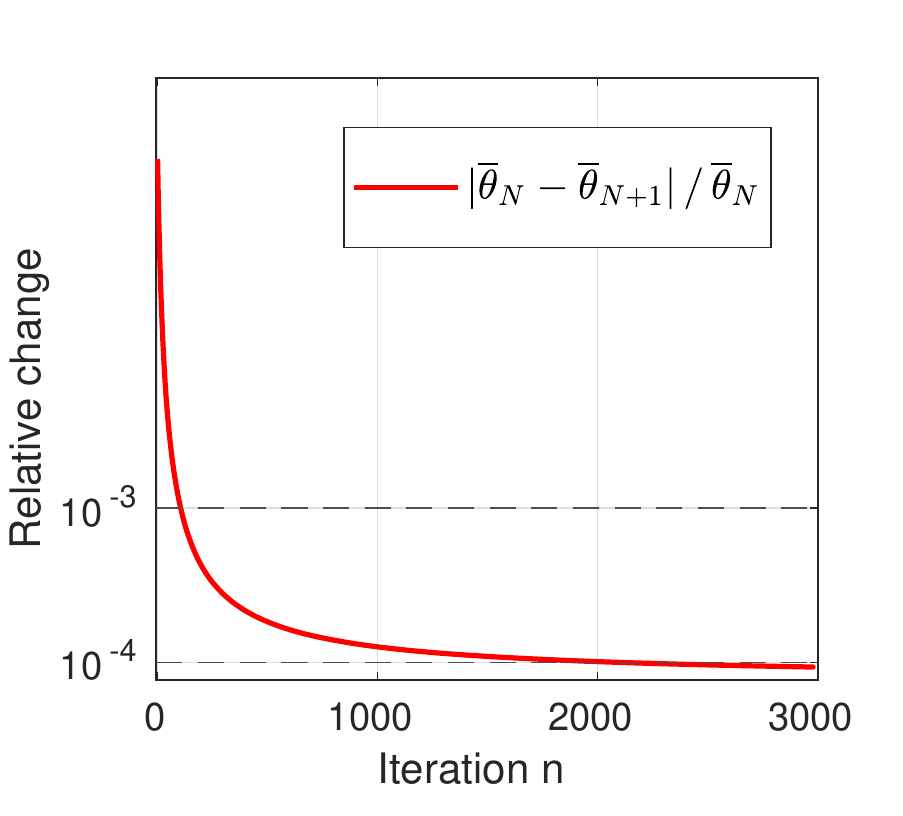}}		
		\centerline{(b)}	
	\end{minipage}
	\caption{\small Wavelet deconvolution with synthesis-$\ell_1$ prior for \texttt{boat}  image (\SNRname=\snrSynth~$\mathrm{dB}$). Evolution of (a) the iterates $ (\theta_n)_{n \in \nset} $ in log-scale and (b) the relative change in $(\thetaavg_N)_{N \in \nset}$ for the proposed method.}
	\label{fig:num-res-synthL1-theta}
	\normalsize
\end{figure}

In \Cref{fig:num-res-synthL1-xMAP}  we compare the results obtained by using each of the considered methods, showing a close-up on an image region that contains fine details and sharp edges.  \Cref{fig:num-res-synthL1-MSE} shows a plot of the \MSEname \ obtained with each method for the same two test images.
%IMAGES FOR SYNTHESIS L1 - Boat and Mandrill \SNRname=20

\def\widthfig{0.99\linewidth}
\def\widthminipage{0.165\linewidth}	
\begin{figure}[!h]		
	\newcommand\incBoat[1]{\includegraphics[width=\widthfig,trim={3cm 2cm 4cm 5cm},clip]{#1}\vspace{0.1cm}}
	\newcommand\incMandrill[1]{\includegraphics[width=\widthfig,trim={8cm 11.2cm 4cm 0.8cm},clip]{#1}}
	\begin{minipage}[t]{\widthminipage}
		\centering
		\centerline{\incBoat{img/num-res-deblurSynthL1/snr\snrSynth/EB/boat_orig}}
		\centerline{\incMandrill{img/num-res-deblurSynthL1/snr\snrSynth/EB/mandrill_orig}}
		\centerline{(a) Original}
	\end{minipage}\hfill
	\begin{minipage}[t]{\widthminipage}
		\centering		
		\centerline{\incBoat{img/num-res-deblurSynthL1/snr\snrSynth/EB/boat_degraded}}
		\centerline{\incMandrill{img/num-res-deblurSynthL1/snr\snrSynth/EB/mandrill_degraded}}
		%  \vspace{1.5cm}
		\centerline{(b) Degraded}
	\end{minipage}\hfill
	\begin{minipage}[t]{\widthminipage}
		\centering
		\centerline{\incBoat{img/num-res-deblurSynthL1/snr\snrSynth/EB/boat_estim}}
		\centerline{\incMandrill{img/num-res-deblurSynthL1/snr\snrSynth/EB/mandrill_estim_clamp}}
		%  \vspace{1.5cm}
		\centerline{(c) EB \tiny{tol $10^{-4}$}}
	\end{minipage}\hfill
	\begin{minipage}[t]{\widthminipage}
		\centering		
		\centerline{\incBoat{img/num-res-deblurSynthL1/snr\snrSynth/EB/boat_estim_tol}}
		\centerline{\incMandrill{img/num-res-deblurSynthL1/snr\snrSynth/EB/mandrill_estim_tol}}
		\centerline{(d) EB \tiny{tol $10^{-3}$}}
	\end{minipage}\hfill
	\begin{minipage}[t]{\widthminipage}
		\centering
		\centerline{\incBoat{img/num-res-deblurSynthL1/snr\snrSynth/HB/boat_estim}}
		\centerline{\incMandrill{img/num-res-deblurSynthL1/snr\snrSynth/HB/mandrill_estim}}
		\centerline{(e) HB}
	\end{minipage}\hfill
	\begin{minipage}[t]{\widthminipage}
		\centering
		\centerline{\incBoat{img/num-res-deblurSynthL1/snr\snrSynth/discPrinciple/boat_estim}}
		\centerline{\incMandrill{img/num-res-deblurSynthL1/snr\snrSynth/discPrinciple/mandrill_estim}}	
		\centerline{(f) DP}
	\end{minipage}			
	\caption{	\small		
		Wavelet deconvolution with synthesis-$\ell_1$ prior. Close-up on \texttt{boat} and \texttt{mandrill} images: (a)~True image, (b) blurred and noisy (\SNRname=\snrSynth~$\mathrm{dB}$) observation $ \y $, (c)-(f) MAP estimators obtained with Empirical Bayes (tol. $10^{-4}$ and $10^{-3}$), hierarchical Bayes and discrepancy  principle,  respectively.\normalsize}
	\label{fig:num-res-synthL1-xMAP}
\end{figure}

%MSE SYNTHESIS L1 - Boat and Mandrill SNR=20
\begin{figure}[!h]
	\begin{minipage}[l1]{.48\linewidth}
		\centering
		\centerline{\includegraphics[width=\textwidth]{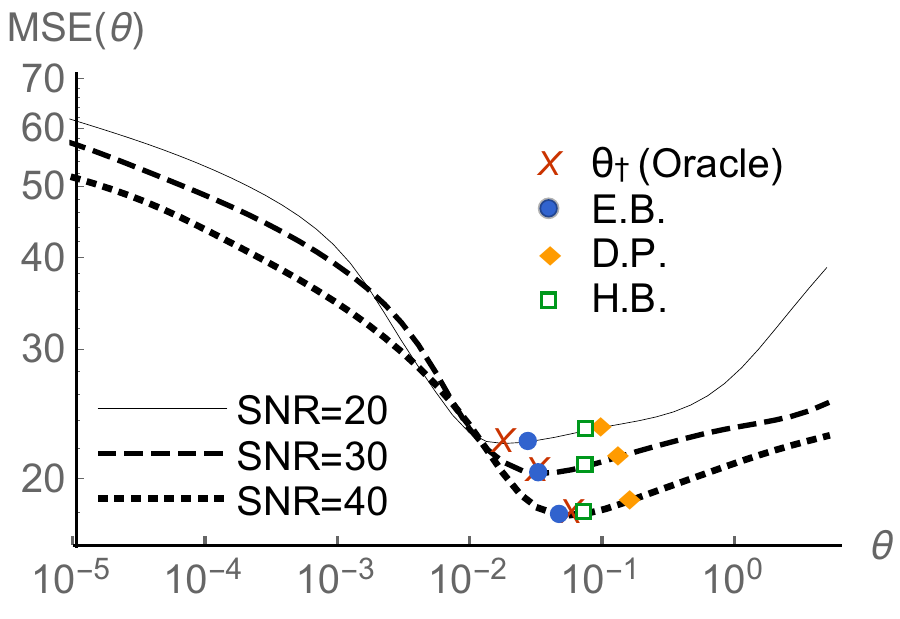}}
		\centerline{(a) \texttt{boat}}
	\end{minipage}
	\begin{minipage}[l2]{.48\linewidth}
		\centering
		\centerline{\includegraphics[width=\textwidth]{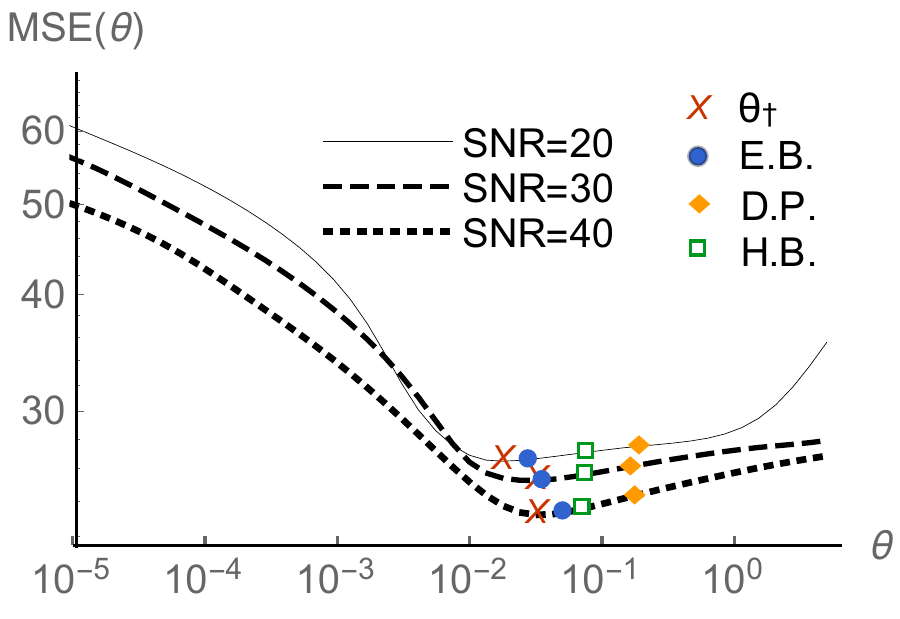}}
		\centerline{(b) \texttt{mandrill}}		
	\end{minipage}
	\caption{\small Wavelet deconvolution with synthesis-$\ell_1$ prior - Mean squared error (\MSEname) obtained for (a)~\texttt{boat}  and (b)~\texttt{mandrill} for different values of $\theta$. We compare the values obtained with empirical Bayes with tolerance $10^{-4}$, discrepancy principle,  hierarchical Bayes, and the optimal value $\thetaMSE$.} \label{fig:num-res-synthL1-MSE}
\end{figure}

\Cref{table_synth_l1} shows the average \MSEname \ values and average computing times obtained for each method. We observe once again that the empirical Bayesian method achieves the best results for all \SNRname \ values and is very close to the oracle performance. Reducing the tolerance leads to a small improvement in \MSEname, at the expense of a higher computing time. The discrepancy principle consistently overestimates the parameter leading to over-smoothed solutions.

%SYNTHESIS L1 REDUCED TABLE
\begin{table}[!h]
	\label{table_synth_l1}
	\renewcommand{\arraystretch}{1.3}
	\begin{minipage}[!t]{.61\linewidth}
		{\small
			\begin{tabular}{rcccccc}\hline
				Method & \multicolumn{2}{c}{\SNRname=20 $\mathrm{dB}$}   & \multicolumn{2}{c}{\SNRname=30 $\mathrm{dB}$}      & \multicolumn{2}{c}{\SNRname=40 $\mathrm{dB}$}     \\
				&\MSEname &Time & \MSEname   &Time & \MSEname &Time\\
				\hline
				$ \thetaMSE$ & 24.23      &        & 22.70     &       & 20.56     &           \\
				\tiny{$tol\, 10^{-4}$} \small E.B.   	  & 24.40     & 4.48    & 22.80    & 3.59      & 20.70   & 2.44    \\	
				\tiny{$tol\, 10^{-3}$} \small E.B.   & 24.70      & 0.36           & 22.90     & 0.25          & 20.80     & 0.09    \\				
				D.P.   & 25.09      &   13.93	      & 23.57     & 28.64	       & 21.38     & 61.03          \\   
				H.B.   & 25.01      & 11.61           & 23.23     & 23.87          & 20.89     & 50.86          \\
				\hline     	                  
			\end{tabular}
			
		}\normalsize
		\vspace{0.3cm}
		\centerline{(a)}	
	\end{minipage}
	\begin{minipage}[!t]{.37\linewidth}
		\vspace{.5cm}
		\centering
		\includegraphics[width=\textwidth]{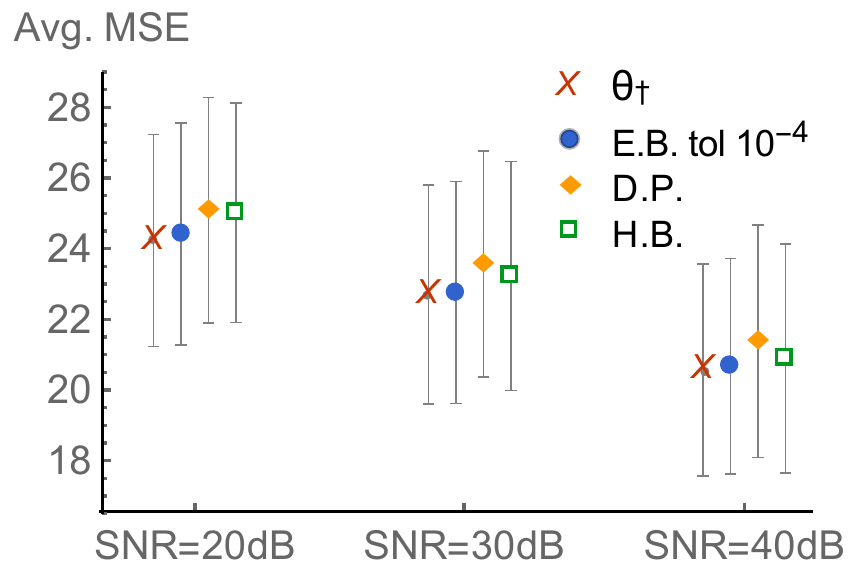}
		\centerline{(b)}
	\end{minipage}
	\caption{\small Wavelet deconvolution with synthesis-$\ell_1$ prior. (a) Table with average mean squared error obtained for ten images with different algorithms. Average execution times expressed in minutes. (b)  summarises the content of the table and shows the standard deviation with error bars.}\normalsize					
\end{table}	
For high \SNRname \ values, both Bayesian methods attain similar values of \MSEname, but the proposed empirical Bayes methodology is five times faster. We want to point out that these general conclusions depend a lot on the parameters used for the solver of the MAP estimation problem (in this case SALSA \cite{salsa2010fast}). We included a detailed analysis of this in \Cref{append:method-comparison}. 

\subsection{Hyperspectral Unmixing with \tvname-SUnSAL}
\label{ssec:experiments.hyperunmix}
Hyperspectral sensors acquire hundreds of narrow band spectral images in different frequency bands. These images are collected in a three-dimensional hyperspectral data cube for processing and analysis.  Although the  spectral resolution is high,  the spatial resolution is usually low, leading to the existence of “mixed” spectra in the acquired image pixels \cite{iordache2012total}. Hyperspectral unmixing is a source separation problem that aims at  decomposing each mixed pixel into its constituent spectra (the so-called end-members) and their corresponding fractional abundances or proportions. This is normally done under the assumption of a linear mixing model \cite{simoes2015convex}. In particular, linear unmixing techniques assume the availability of a library of spectral signatures and use a model $\y=\rmA \x + \rmw$, where  $\y \in \mathbb{R}^{\dimf \times \dimp}$ is the hyperspectral image with $\dimf$ frequency channels and $\dimp$ pixels; $\x \in \mathbb{R}^{\dimm \times \dimp} $ is the fractional abundance matrix compatible with the library $\rmA \in \mathbb{R}^{\dimf\times \dimm} $ containing the pure spectral signatures for $\dimm$ different materials; and $\rmw$ is a $\dimf \times \dimp$ Gaussian random variables with zero mean and covariance matrix $\sigma \Id$ and $\sigma >0$. In \cite{iordache2012total}, the unmixing problem is solved by using the regulariser $g$ given for any $x \in \rset^{\dimm} \times \rset^{\dimp}$ by
\begin{equation}
g(x)=(\mathrm{TV}(x), \norm{\x}_{1}) \quad \text{s.t.}  \quad x\geq0 \eqsp ,
\end{equation}
which is associated with a two dimensional regularisation parameter $\theta = (\theta^{\mathrm{TV}}, \theta^1) \in \rset^2$. $\theta^{\mathrm{TV}} \in \rset$ controls the spatial cohesion of the objects, and $\theta^1 \in \rset$ enforces sparsity on $\x$. In this experiment, \tvname \ is the vectorial isotropic total variation pseudo-norm given for any $x \in \rset^{\dimm} \times \rset^{\dimp}$ by
\begin{equation} \mathrm{TV}(x) = \sum_{i = 1}^{\dimp} \sum_{j \in \mathcal{V}_i} \norm{x_i - x_j}_1 \eqsp ,
\end{equation}
where for any $i \in \{1,\ldots,\dimp\}$, $x_i \in \mathbb{R}^{\dimm}$ denotes the $i$-th image pixel and $\mathcal{V}_i$ its vertical and horizontal neighbour pixels.

Although this regulariser is not separable and we would therefore have to use \Cref{algo:MCMC_double_chain} with two MCMC chains, we can use a pseudo-likelihood approximation estimate $\theta$ using a single MCMC chain together with the expression of $\nabla_{\theta} \log \rmZ(\theta)$ for the homogeneous case.  In this way we can achieve highly competitive computing times as well as compare our results with the hierarchical Bayesian method from \cite{EUSIPCO}, which we would otherwise not be able to apply to this problem. 

More precisely, we consider $[\partial \log \rmZ / \partial \theta^1](\theta) = \dim /  \theta^1$	and $[\partial \log \rmZ / \partial \theta^{TV}](\theta) = \dim /  \theta^{TV}$. 
{Although} $x \mapsto \mathrm{TV}(\x)$ and $x \mapsto \norm{\x}_{1} $ are not acting on independent subsets of $\x$, we have empirically observed that this  provides a good approximation and delivers excellent results.

We consider the experiment A-\textit{Simulated Data Sets} case 1) \textit{Simulated Data Cube 1} presented in \cite[Section 4]{iordache2012total}, particularly the case where $\rmw$ is a white Gaussian noise. In this experiment a synthetic hyperspectral image is generated by using five randomly selected spectral signatures. The image has $\dimp= 75 \times 75= 5625$ pixels and $\dimf=224$  frequency bands per pixel. For full details see \cite{iordache2012total}. We follow the exact same procedure as presented there, except for a modification in the spectral signature dictionary $\rmA$. In \cite{iordache2012total} they consider a dictionary $\rmA \in \mathbb{R}^{224 \times 240} $, which is a library generated from a random selection of 240 materials  from the USGS library\footnote{Available online: http://speclab.cr.usgs.gov/spectral.lib06}. Here we consider a simplified version where we only select $\dimm=12$ random materials, thus having $\rmA \in \mathbb{R}^{224 \times 12}$. Out of these 12 materials, only 5 are present in the synthetic image. 
The synthetic fractional abundances $\xGT$ are displayed in the first row of \Cref{fig:num-res-hypers-xMAP} (only the 5 present end-members are shown)

We use the proposed algorithm to estimate $ \theta^{\mathrm{TV}} $ and $ \theta^{1} $ for this setup using  \Cref{algo:MCMC_single_chain_separable} under three different noise levels: we consider a \SNRname \ of ${20 ~\text{$\mathrm{dB}$}}$, ${30 ~\text{$\mathrm{dB}$}}$ and ${40 ~\text{$\mathrm{dB}$}}$. For comparison, we also report the results obtained with the joint MAP method from \cite{EUSIPCO} and by using the oracle value $\thetaMSE$ that maximises the estimation signal-to-reconstruction-error (SRE) given by $\|\xGT\|_2^2/\|\xGT-\hat{\x}_{\MAP}\|_2^2$.

We evaluated the proximal operator of $x \mapsto \theta^{\mathrm{TV}} \mathrm{TV}(x)+ \theta^1\norm{\x}_{1}$ using SUnSAL solver from \cite{iordache2012total} with 20 iterations. We address the positivity constraint separately by using its Moreau-Yosida envelope, leading to the additional term $x \mapsto (x-\Pi_{+}(x))/\lambda$ where $\Pi_{+}$ is the projection operator onto $\coint{0,+\infty}^{\dimm} \times \coint{0,+\infty}^{\dimp}$, and $\lambda$ is the same smoothing parameter used for the other proximal operators.

To speed up the convergence, we use a gradient preconditioning technique explained in \Cref{append:ssec:convergence-speed}. Since we use the preconditioned gradient of $\f$ instead of the gradient of $\f$, the Lipschitz constant becomes $\Lt=1/\sigma^2$.
The algorithm parameters are chosen following the recommendations provided in \Cref{append:ssec:setting-algo-parameters}; we set  $\theta_0^1=10$, $\theta_0^{\mathrm{TV}}=10$, we initialised $X_0^0$ using the pseudo-inverse of $A$ and projecting on the space of positive matrices. In addition, we perform $200$ warm-up iterations and set for any $n \in \nsets$, $m_n=1$ and  $\delta_n = ~ n^{- 0.8}/(\dimp \dimm)$.

Special care was taken when setting $\gamma >0$ and $\lambda >0$ due to the preconditioning. We set $\gamma=1/(\Lt+2/\lambda)$ for any $n \in \nset$ 
and  $\lambda = 0.9  \times \lambda_{\rmA}/\Lt $, where $\lambda_{\rmA}$ is the largest eigenvalue of $(\rmA^{\transpose} \rmA)^{-1}$.
We run the algorithm for 50  iterations and compute $(\thetaavg)_{N \in \nset}$ as defined in \eqref{eq:thetaavg} with $(\omega_n)_{n \in \nset}$ set to have $N_0 = 30$ burn-in iterations.

\begin{figure}[h!]
	\label{fig:num-res-hypers-xMAP}
	\centering
	\includegraphics[width=0.9 \textwidth]{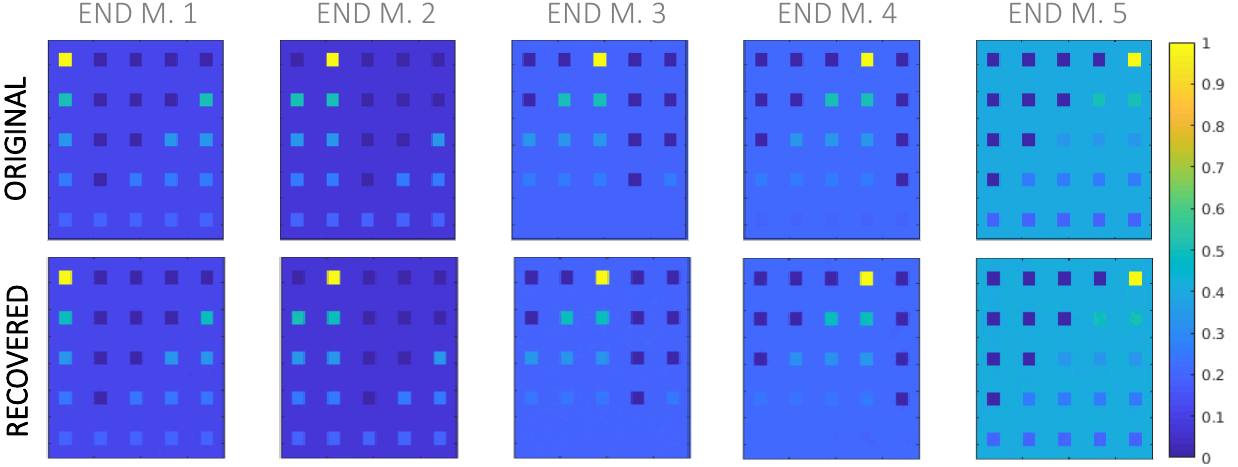}
	\caption{Hyperspectral Unmixing - Synthetic fractional abundances for 5 end-members. Original and \MAPname \  estimates for \SNRname=30 $\mathrm{dB}$ using the empirical Bayes posterior \eqref{EQ: EBposterior}.}
\end{figure}
In \Cref{fig:num-res-hypers-xMAP} we display the MAP recovery of the synthetic fractional abundances using the estimated values of $\theta^{\mathrm{TV}}$ and $\theta^1$ with the SUnSAL solver for \SNRname=30 $\mathrm{dB}$.
\Cref{fig:num-res-hypers-theta} shows the evolution of the iterates $(\theta^1_n)_{n \in \nset}$ and $(\theta^{\mathrm{TV}}_n)_{n \in \nset}$ and the relative change in the running averages $(|\bar{\theta}_{N+1}-\bar{\theta}_{N}|/\bar{\theta}_N)_{N \in \nset}$ throughout iterations for \SNRname=30 $\mathrm{dB}$. Observe the excellent convergence properties of the proposed scheme, which stabilises in as little as $25$ iterations.

\begin{figure}[h!]
	\centering
	\includegraphics[width=0.4\textwidth]{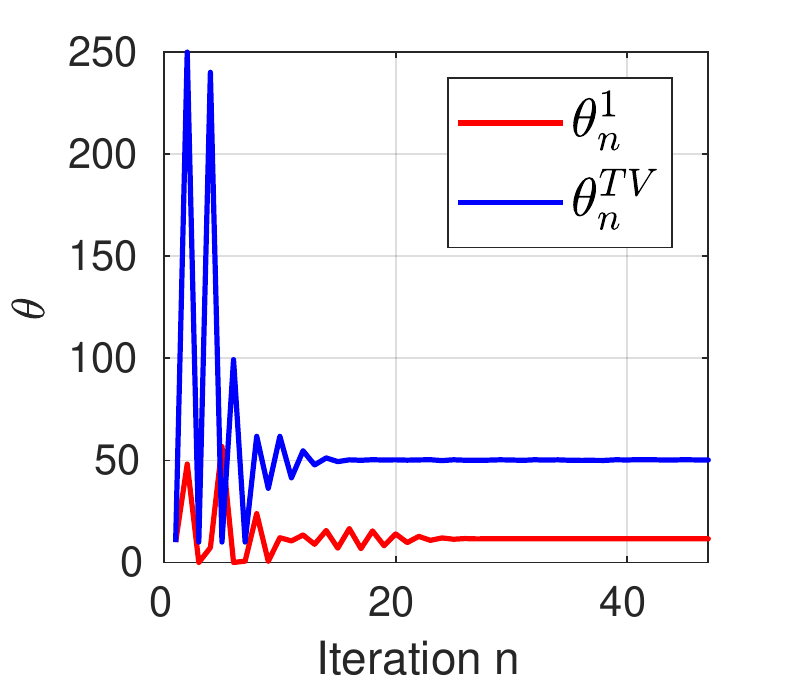}
	\includegraphics[width=0.42\textwidth]{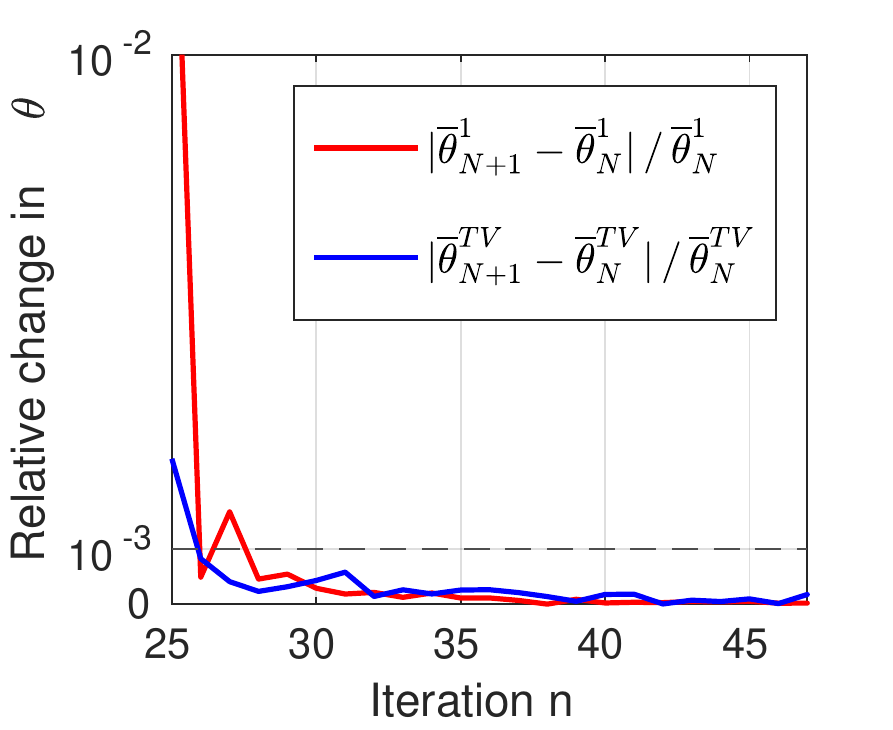}
	\caption{\small Hyperspectral Unmixing - Evolution of the iterates $(\theta^1_n)_{n \in \nset}$ and $(\theta^{\mathrm{TV}}_n)_{n \in \nset}$  (left) and of the relative successive differences $(|\bar{\theta}_{N+1}-\bar{\theta}_{N}|/\bar{\theta}_N)_{N \in \nset}$  (right) for the proposed method  with \SNRname=30 $\mathrm{dB}$. The relative change is computed after 25 burn-in iterations. }
	\label{fig:num-res-hypers-theta}
\end{figure}

The obtained results are reported in \Cref{table_hyper_unmix} and summarised in \Cref{fig:num-res-hypers-resultsCompare}, which shows the signal to reconstruction error (SRE) surfaces for different values of the regularisation parameters.  Observe that the empirical Bayesian method yields good results for all \SNRname 
%Hyperspectral unmixing results table
\begin{table}[!h]
	\label{table_hyper_unmix}
	\centering
	\renewcommand{\arraystretch}{1.3}
	{\small
		\begin{tabular}{rccccccc}
			\hline  Method  &  & \multicolumn{2}{c}{\SNRname=20 $\mathrm{dB}$ } & \multicolumn{2}{c}{\SNRname=30 $\mathrm{dB}$ } & \multicolumn{2}{c}{\SNRname=40 $\mathrm{dB}$ }\\
			& Stop\,criteria &  SRE  &  Time (s)  &  SRE  &  Time (s)  &  SRE  &  Time (s)\\
			\hline  $\thetaMSE$ (Oracle)  & -- &  29.38  &  --  &  38.61  &  --  &  47.64  &  -- \\
			E.B.  & 50\,iters. &  27.46  &  36  &  38.42  &  37  &  45.68  &  42 \\			
			H.B. \cite{EUSIPCO} & 15\,iters. &  18.33  &  76  &  31.72  &  77  &  47.36  &  76\\
			\hline  
		\end{tabular}			
	}
	\caption{Hyperspectral unmixing - Signal to reconstruction error (SRE) obtained for different \SNRname \ values along with computing times expressed in seconds. }
\end{table}
\def\widthfig{\linewidth}
\def\widthminipage{0.33333\linewidth}
\newcommand\incHUplot[1]{\includegraphics[width=\widthfig,trim={0cm 0cm 0cm 0.6cm},clip]{#1}}	
\newcommand\incHUplotB[1]{\includegraphics[width=\widthfig,trim={3.6cm 8cm 4cm 1cm},clip]{#1}}	
\begin{figure}[!h]	
	\centering		
	\begin{minipage}[t]{\widthminipage}
		\centering
		\centerline{	\incHUplot{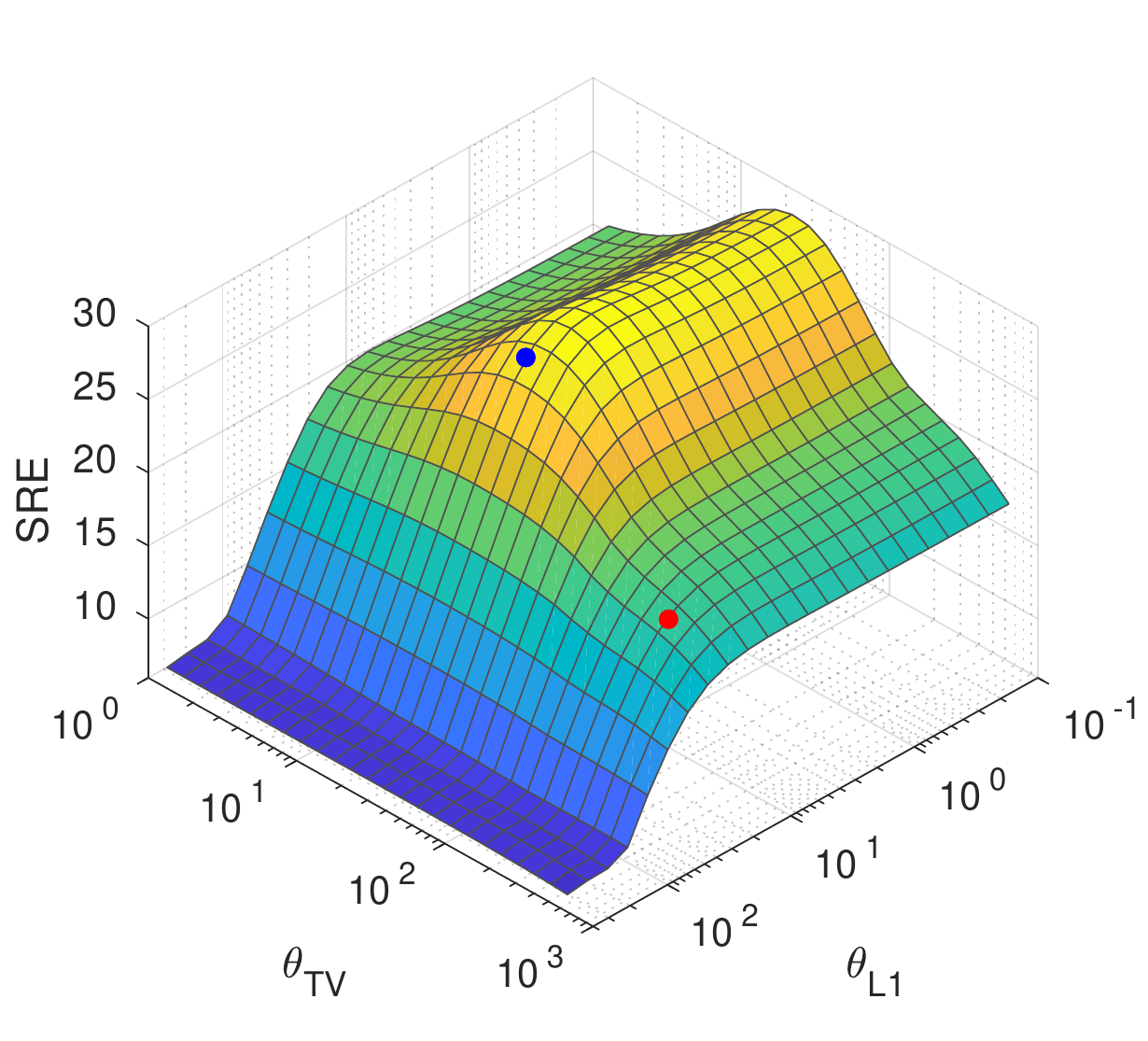}}
		\centerline{(a) \SNRname=20 $\mathrm{dB}$}
	\end{minipage}\hfill
	\begin{minipage}[t]{\widthminipage}
		\centering		
		\centerline{	\incHUplot{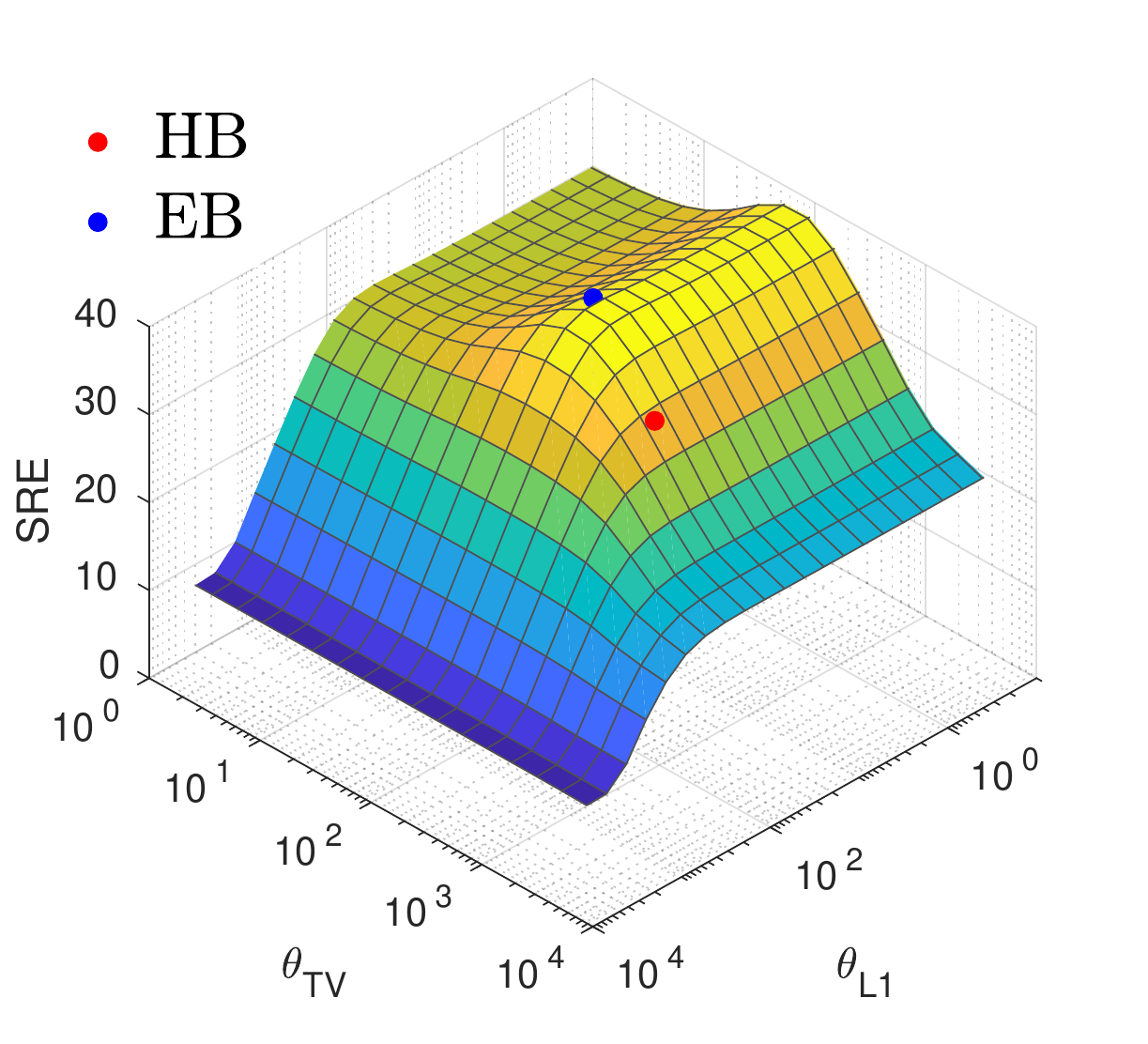}}		
		\centerline{(b) \SNRname=30 $\mathrm{dB}$}
	\end{minipage}\hfill
	\begin{minipage}[t]{\widthminipage}
		\centering
		\centerline{	\incHUplot{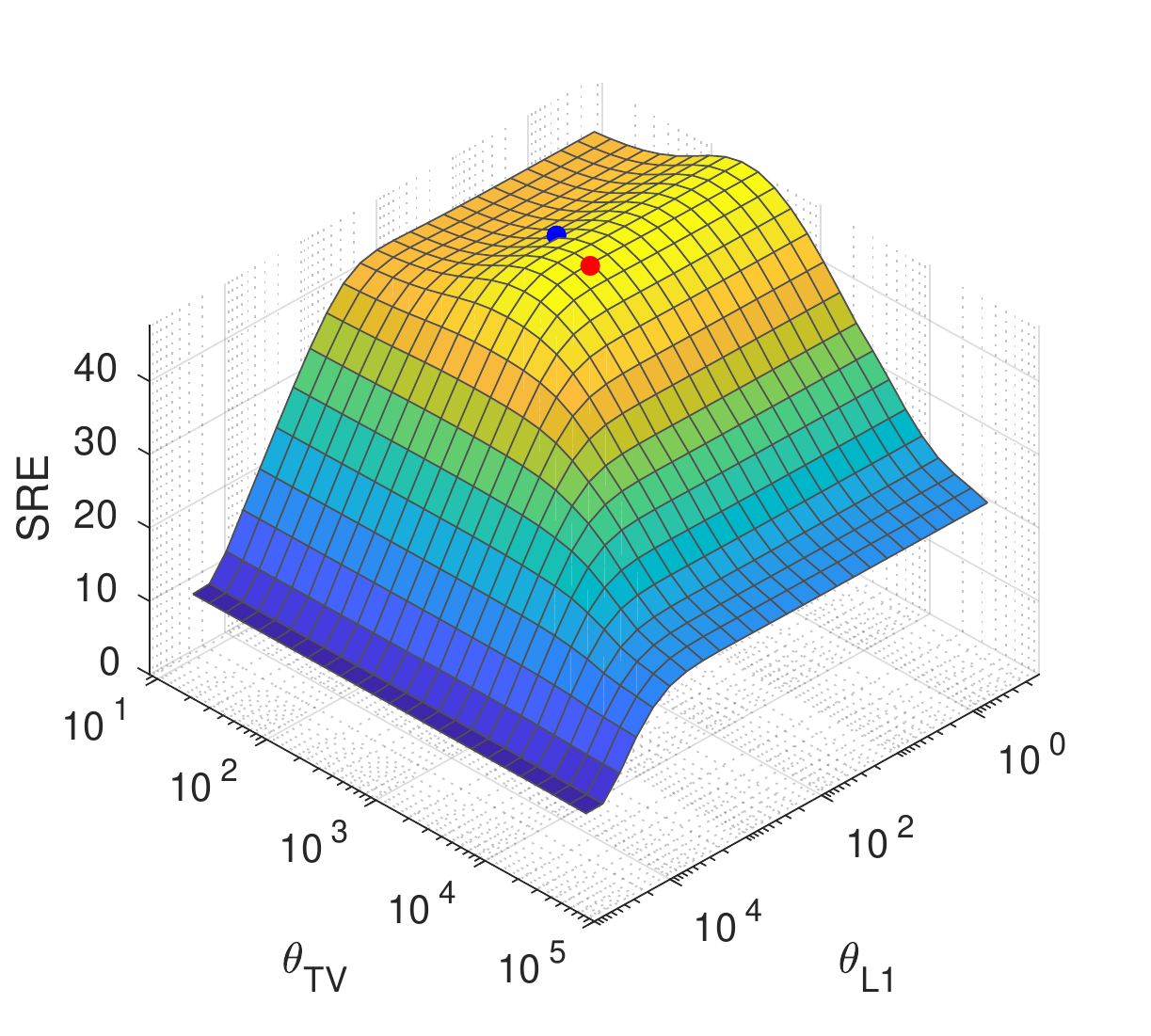}}		
		\centerline{(c) \SNRname=40 $\mathrm{dB}$}
	\end{minipage}
	\caption{	\small		
		Hyperspectral Unmixing - Signal to reconstruction error (SRE) surfaces for different \SNRname \ values expressed in $\mathrm{dB}$.  Comparison between parameters estimated with our empirical Bayesian algorithm (EB) and with the hierarchical Bayesian method (HB) from \cite{EUSIPCO}.\normalsize}
	\label{fig:num-res-hypers-resultsCompare}
\end{figure}
\noindent values, and clearly outperforms the hierarchical Bayesian method for low \SNRname \ values. For high \SNRname \ values the hierarchical method achieved slightly better results. As discussed in \Cref{ssec:connectBayesApproach}, we believe that this is due to the fact that, at high \SNRname \ values,  the likelihood $x \mapsto p(\y|\x)$ dominates the posterior  and mitigates errors related to the misspecification of the prior. More precisely, if the hyperprior that we set on $\theta$ assigns a high weight to values of $\theta$ that lead to bad models, \ie \ a misspecified prior $x \mapsto p(\x|\theta)$, the impact of this misspecification on the recovered estimates depends on the degree of concentration of the likelihood. At high SNR, the likelihood dominates the posterior thus concealing the possible prior misspecification and leading to good results.  Conversely, at low \SNRname \ values, the performance of the hierarchical model is degraded by model misspecification.

Also note in \Cref{table_hyper_unmix} that the  computing times for the empirical Bayesian method are approximately two times faster than the ones for the hierarchical method. 
\subsection{Denoising with a total generalised variation prior}
\label{sssec:experiments.tgv-denois}
In this last experiment, we apply the proposed methodology to a challenging problem that is beyond the scope of the considered class of models and our theoretical guarantees. We consider an image denoising problem where $y \sim \mathcal{N}(x,\sigma^2 \Id)$ with  $\sigma^2 > 0$ and where we use the following prior
$$
p(x|\theta^1,\theta^2) = \frac{1}{\mathrm{Z}(\theta^1,\theta^2)} \exp\{-\mathrm{TGV}_{\theta^1,\theta^2}^{2} (x) - \varepsilon \|x\|_2^2\} \eqsp ,
$$
where $\vareps >0$ and where $\mathrm{TGV}_{\theta^1,\theta^2}^{2} (x)$ is a second-order generalisation of the conventional total variation regulariser, given, for any $(\theta^1, \theta^2) \in \coint{0,+\infty}^2$ and $x \in \rset^{\dim}$, by
\begin{equation}\label{eq:Experiments-TGV-norm}
\mathrm{TGV}_{\theta^1,\theta^2}^{2} (x) =\underset{r \in \mathbb{R}^{2\dim}}{\mathrm{min}} \defEnsLigne{ \theta^1 \norm{r}_{1,2} + \theta^2 \norm{J(\Delta x-r)}_{1,\Frob.} } \eqsp .
\end{equation}
where $\Delta = (\Delta^v, \Delta^h)$ is the discrete image-gradient operator that computes the first-order vertical and horizontal pixel differences, and $J$ computes the Jacobian matrix of the image-gradient vector field to capture second-order information (i.e., $(J\Delta) (x)$ is a discrete image-Hessian operator) \cite{codeTGV}. This generalisation was first considered in \cite{chambolle97} and further studied in \cite{bredies2010total} as a means of incorporating second-order derivative information to eliminate the common staircasing artifacts associated with the conventional \tvname \ regulariser. 

A main difficulty associated with using the \tgvname \ regulariser is the need to correctly set the parameters $\theta^1$ and $\theta^2$, which control the strength as well as the characteristics of the regularisation enforced (as explained in \cite{codeTGV}, the \tgvname \ regularisation behaves like the standard \tvname \ regularisation for large $\theta^2$ values, whereas for small values it behaves like the $\ell_1$-Frobenius norm of the discrete image-Hessian). \Cref{fig:num-res-TGV-parrot} below illustrates the dramatic effect that these two parameters have on the quality of the recovered MAP estimate. Observe the strong coupling between $\theta^1$ and $\theta^2$, which makes setting their values particularly challenging.

\begin{figure}[h!]\centering
	\begin{minipage}[t]{0.49\textwidth}\vspace{0pt}
		
		{\includegraphics[width= \textwidth,trim={0cm 1.5cm 0cm 1cm},clip]{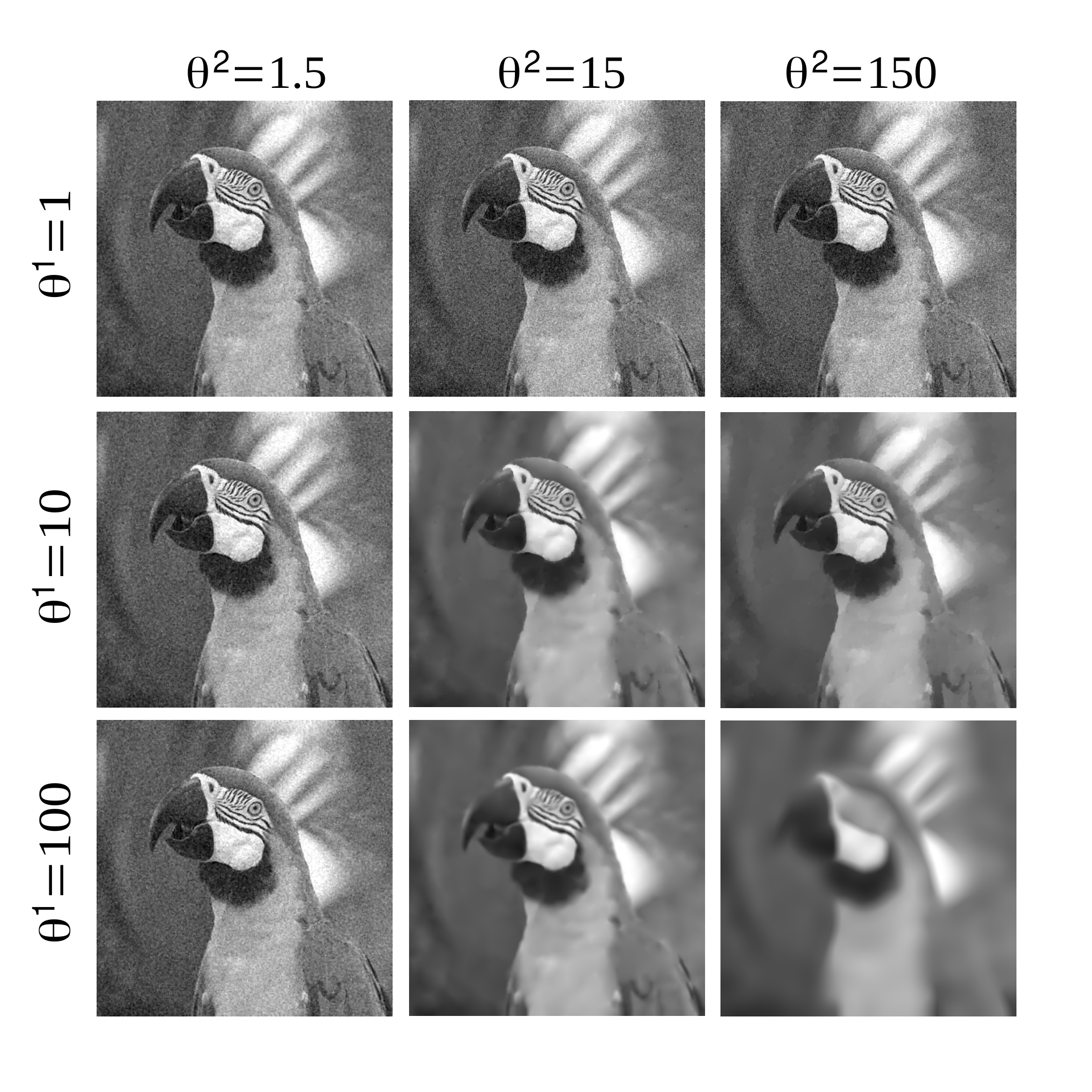}}			
	\end{minipage}
	\begin{minipage}[t]{0.49\textwidth}	\vspace{5pt}	
		\centerline{\includegraphics[width= 1.1\textwidth,trim={0cm 0cm 0cm 0cm},clip]{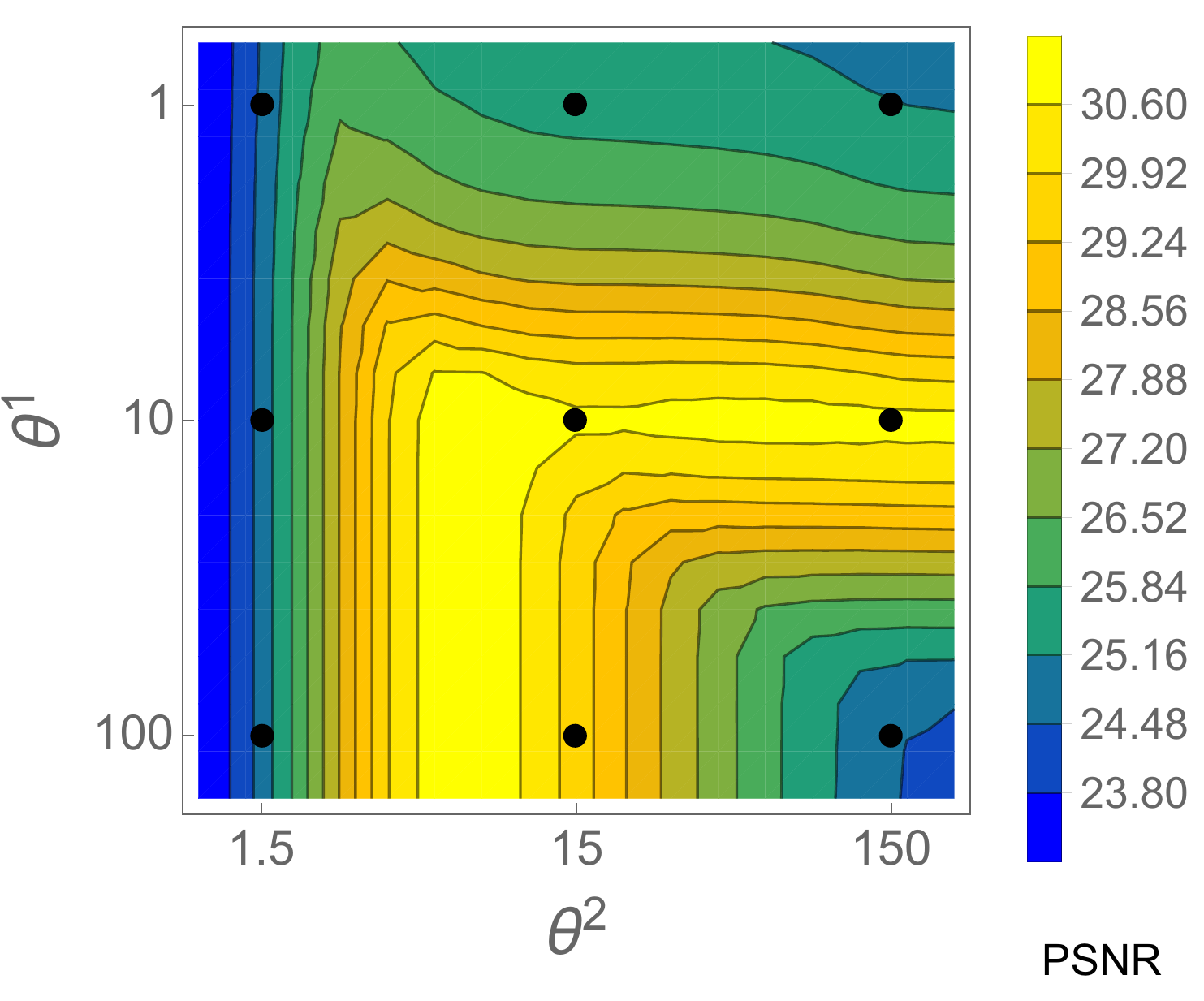}}								
	\end{minipage}
	\caption{Denoising with \tgvname \ prior. \MAPname \ estimates for different values of $\theta^1$ and $\theta^2$ for \texttt{parrot} image with \SNRname $=5.6~\mathrm{dB}$ (left).  \PSNRname \ for different values of $\theta^1$ and $\theta^2$ (right). The 9 black points on the right plot show the location of the parameter combinations used to compute the \MAPname \ estimates on the left. }
	\label{fig:num-res-TGV-parrot}
\end{figure}

However, this prior is not in the exponential family because $\theta^1$ and $\theta^2$ play a role in the definition of the statistic $\mathrm{TGV}_{\theta^1,\theta^2}^{2} (x)$. Therefore, our methodology and theory do not directly apply. Also note that the additional regularisation $\varepsilon \|x\|_2^2$ with $\varepsilon > 0$ is necessary to guarantee that $p(x)$ is proper, which is potentially important in order to apply the proposed methodology with two Markov chains (otherwise the auxiliary chain targeting $p(x)$ would not be ergodic - two chains are required because \eqref{eq:Experiments-TGV-norm} is not separable and homogeneous). We use $\varepsilon = 10^{-10}$.

In order to apply the proposed methodology to the estimation of $\theta^1$ and $\theta^2$ we use an approximation of the gradient  $\nabla_\theta \log p(x|\theta^1,\theta^2)$. More precisely, we express $p(x)$ as follows for any $x \in \rset^{d}$ and $\theta^1, \theta^2 >0$
$$
p(x|\theta^1,\theta^2) = \frac{1}{\mathrm{Z}(\theta^1,\theta^2)} \exp\parentheseDeux{-\theta^1 g_1(x,\theta^1,\theta^2) - \theta^2 g_2(x,\theta^1,\theta^2) - \varepsilon \|x\|_2^2} \eqsp ,
$$
with 
$$
g_1(x,\theta^1,\theta^2) =\norm{r(x,\theta^1,\theta^2)}_{1,2} \eqsp ,
$$ 
$$
g_2(x,\theta^1,\theta^2) = \norm{J(\Delta x-r(x,\theta^1,\theta^2))}_{1,\Frob.} \eqsp ,
$$
$$
r(x,\theta^1,\theta^2)=\underset{s \in \mathbb{R}^{2\dim}}{\mathrm{argmin}} \defEnsLigne{ \theta^1 \norm{s}_{1,2} + \theta^2 \norm{J(\Delta x-s)}_{1,\Frob.} } \eqsp ,
$$
and approximate the partial derivatives $\frac{\partial}{\partial \theta^1} \log p(x|\theta^1,\theta^2)$ and $\frac{\partial}{\partial \theta^2} \log p(x|\theta^1,\theta^2)$ by
$$
\frac{\partial}{\partial \theta^1} \log p(x|\theta^1,\theta^2) \approx \textrm{E}_{x|\theta^1,\theta^2}[g_1(x,\theta^1,\theta^2)] - g_1(x,\theta^1,\theta^2) \, ,
$$
$$
\frac{\partial}{\partial \theta^2} \log p(x|\theta^1,\theta^2) \approx \textrm{E}_{x|\theta^1,\theta^2}[g_2(x,\theta^1,\theta^2)] - g_2(x,\theta^1,\theta^2)\, .
$$
This approximation of the gradient, which arises from omitting the terms 
$$
\textrm{E}_{x|\theta^1,\theta^2}\left[\theta^1 \frac{\partial}{\partial \theta^1} g_1(x,\theta^1,\theta^2) + \theta^2 \frac{\partial}{\partial \theta^1} g_2(x,\theta^1,\theta^2)\right] -\theta^1 \frac{\partial}{\partial \theta^1} g_1(x,\theta^1,\theta^2) - \theta^2 \frac{\partial}{\partial \theta^1} g_2(x,\theta^1,\theta^2)
$$
and
$$
\textrm{E}_{x|\theta^1,\theta^2}\left[\theta^1 \frac{\partial}{\partial \theta^2} g_1(x,\theta^1,\theta^2) + \theta^2 \frac{\partial}{\partial \theta^2} g_2(x,\theta^1,\theta^2)\right] - \theta^1 \frac{\partial}{\partial \theta^2} g_1(x,\theta^1,\theta^2) - \theta^2 \frac{\partial}{\partial \theta^2} g_2(x,\theta^1,\theta^2)
$$
in the calculation of the partial derivatives $\frac{\partial}{\partial \theta^1} \log p(x|\theta^1,\theta^2)$ and $\frac{\partial}{\partial \theta^2} \log p(x|\theta^1,\theta^2)$, introduces an additional bias in the stochastic gradients driving \Cref{algo:MCMC_double_chain}\footnote{A rigorous analysis of this bias should also consider the points where $\mathrm{TGV}_{\theta^1,\theta^2}^{2} (x)$ is not differentiable w.r.t. $\theta^1$ and $\theta^2$. This can be achieved by using similar techniques to \cite{de2019efficient}.}. However, the numerical experiments reported below suggest that the algorithm is robust to this additional bias, in the sense that we empirically observe good convergence to useful estimates of $\theta^1$ and $\theta^2$.

In our experiments, we implement \Cref{algo:MCMC_double_chain} with this approximate gradient and follow the recommendations provided in \Cref{append:ssec:setting-algo-parameters} to set the algorithm parameters; we perform $25$ warm-up iterations and set  $\btheta^1_0=\btheta^2_{0}=10$, $X_0^0=\bX_0^0=y$,  for any $n \in \nsets$, $m_n=1$ and $\delta_n = 20 \times  n^{- 0.8}/\dim$, and we set  $\lambda = \min\left( 5 \Ly^{-1}, \lambda_{\max}\right)$ with $ \lambda_{\max}=2 $ and $\Ly=(0.95/\sigma)^2$. To stop the algorithm we consider three different cases: we stop the algorithm i) after $N=2000$ fixed iterations ii) when the relative change in $ \thetaavg_{N} $ is  $\normLigne{\thetaavg_{N+1} - \thetaavg_{N}}_{\infty} \leq 10^{-4}$ and iii) $\normLigne{\thetaavg_{N+1} - \thetaavg_{N}}_{\infty} \leq 10^{-3}$. Again, we compute $\thetaavg_N$ using  \eqref{eq:thetaavg}, setting $(\omega_n)_{n \in \nset}$ to have  $N_0 = 20$ burn-in iterations.

We also considered a thinning of $6$ iterations in the chain associated with the prior as its samples were roughly $6$ times more correlated than those coming from the chain targeting the posterior (i.e., we discard $5$ every $6$ samples as explained in \Cref{append:ssec:tips-for-2-chains}). To compute the $\mathrm{TGV}_{\theta^1,\theta^2}^{2}$ norm and proximal operator, we use the iterative primal-dual algorithm \cite{codeTGV}.

Applying \Cref{algo:MCMC_double_chain} to the entire image is too computationally expensive because of the complexity associated with evaluating the proximal operator of the TGV regulariser.  Therefore, in this experiment we estimate $\thetaEB$ from a representative patch of size $255 \times 255$ pixels, and then use the estimated $\theta^1$ and $\theta^2$ values to compute the MAP estimate of the entire image\footnote{For homogeneous regularisers,  $\theta$ is asymptotically independent of the dimension of $\x$ when $\dim$ is large, suggesting that it is possible to estimate its value from a representative image patch. Our empirical results suggest that this might hold for other models as well.}. We consider the same ten test images used in \Cref{ssec:experiments.nat-img-deblur} and we set the noise variance $\sigma^2$, such that the signal-to-noise-ratio (\SNRname) is ${8 ~\text{$\mathrm{dB}$}}$, ${12 ~\text{$\mathrm{dB}$}}$, or ${20 ~\text{$\mathrm{dB}$}}$. For each image and noise level, we first obtain an estimate for $\theta^1$ and $\theta^2$ and then use them to compute the MAP estimator $\hat{\x}_{\MAP}$ (given by \eqref{EQ: mapEstim}) using the same solver \cite{codeTGV} we use for the proximal operator. We measure estimation performance by computing the peak-signal-to-noise-ratio (PSNR) given by
$\textrm{PSNR}(x,\hat{\x}_{\MAP}) = - 10 \log_{10} {\|x - \hat{\x}_{\MAP}\|_2^2}/{d}$. All the \PSNRname \ plots shown in \Cref{fig:num-res-TGV-psnr-boat}, \Cref{fig:num-res-TGV-psnr-lake} and \Cref{fig:num-res-TGV-thetaInitSurf} were computed with the entire image.

\Cref{table_tgv} below summarises the average PSNR values and average computing times obtained for each SNR value for the three different stopping criteria. We observe that the proposed empirical Bayesian method achieves very good results for all SNR values and is very close to the oracle performance. Crucially, the stopping criteria has a strong impact on the computing times but not on the resulting PSNR values. Therefore, although convergence can take close to one hour with a strict convergence criterion, good results can be obtained in the order of a minute by using a weaker convergence criterion. 
\begin{table}[!hb]	
	\label{table_tgv}
	\centering
	\renewcommand{\arraystretch}{1.3}
	\begin{tabular}{rllllll}\hline
		Method & \multicolumn{2}{c}{\SNRname=8 $\mathrm{dB}$}                             & \multicolumn{2}{c}{\SNRname=12 $\mathrm{dB}$}                               & \multicolumn{2}{c}{\SNRname=20 $\mathrm{dB}$}                               \\
		&P\SNRname &Time  & PSNR   &Time  & PSNR &Time (min)\\
		\hline
		$ \thetaMSE$ (Oracle)               & 27.80 $\pm$ 2.35 &        & 30.21 $\pm$ 2.12 &       & 35.60 $\pm$ 1.77 &             \\
		\tiny{2000 iter} \small E.B.        & 27.11 $\pm$ 2.81 & 131.10 & 29.69 $\pm$ 2.33 & 96.41 & 35.48 $\pm$ 1.81  & 95.06       \\	
		\tiny{$tol\, 10^{-4}$} \small E.B.  & 27.09 $\pm$ 2.84 & 24.61  & 29.72 $\pm$ 2.33 & 23.27 & 35.47 $\pm$ 1.81  & 44.70         \\	
		\tiny{$tol\, 10^{-3}$} \small E.B.  & 27.00 $\pm$ 2.96 & 3.04   & 29.50 $\pm$ 2.71 & 2.18  & 35.57 $\pm$ 1.79  & 5.03         \\					
		\hline   		                
	\end{tabular}
	\caption{\small Denoising with \tgvname \ prior. Average mean squared error $ \pm $ standard deviation obtained for ten different images.  We show results for different stopping criteria, either with a fixed number of iterations or with a maximum tolerance for the relative change in the mean $\theta^{1}$ and $\theta^{2}$ estimates.}\normalsize
\end{table}

For illustration, \Cref{fig:num-res-TGV-xMAP} depicts the original image, the noisy observation and the recovered MAP estimates for the \texttt{boat} and \texttt{lake} test images with \SNRname \  $={8 ~\text{$\mathrm{dB}$}}$. 

%Images denoising TGV with BOAT and LAKE \SNRname 8 
\newcommand{\snrGTV}{8}
\def\widthfig{0.99\linewidth}
\def\widthminipage{0.30\linewidth}	
\begin{figure}[!h]	
	\newcommand\incBoatGTV[1]{\includegraphics[width=\widthfig,trim={0cm 0cm 0cm 0cm},clip]{#1}\vspace{0.1cm}}
	\newcommand\incLake[1]{\includegraphics[width=\widthfig,trim={0cm 0cm 0cm 0cm},clip]{#1}\vspace{0.1cm}}
	\centering
	\begin{minipage}[t]{\widthminipage}
		\centering
		\centerline{\incBoatGTV{img/num-res-gtv/snr\snrGTV/boat_orig}}
		\centerline{\incLake{img/num-res-gtv/snr\snrGTV/lake_orig}}
		\centerline{(a) Original}
	\end{minipage}
	\begin{minipage}[t]{\widthminipage}
		\centering		
		\centerline{\incBoatGTV{img/num-res-gtv/snr\snrGTV/boat_degraded}}
		\centerline{\incLake{img/num-res-gtv/snr\snrGTV/lake_degraded}}		  
		\centerline{(b) Degraded}
	\end{minipage}
	\begin{minipage}[t]{\widthminipage}
		\centering
		\centerline{\incBoatGTV{img/num-res-gtv/snr\snrGTV/boat_estim}}
		\centerline{\incLake{img/num-res-gtv/snr\snrGTV/lake_estim}}
		\centerline{(c) Empirical Bayes}
	\end{minipage}	
	\caption{	\small		
		Denoising with \tgvname \ prior for \texttt{boat} and \texttt{lake} test images: (a) True image, (b) noisy observation $y$ (\SNRname=\snrGTV~$\mathrm{dB}$), (c) \MAPname \ estimators obtained with empirical Bayes.\normalsize}
	\label{fig:num-res-TGV-xMAP}
\end{figure}
More interestingly, \Cref{fig:num-res-TGV-psnr-boat} and \Cref{fig:num-res-TGV-psnr-lake} show the landscape of the \PSNRname \ as a function of $\theta^1$ and $\theta^2$ for the two test images, with the obtained solutions highlighted as a blue dot. Observe that the estimated solutions are extremely close to the optimal ones, which is remarkable given the difficulty of the problem and the fact that solutions are derived directly from statistical inference principles, without any form of ground truth.
%PSNR BOAT
\def\widthfig{0.99\linewidth}
\def\widthminipage{0.49\linewidth}	
\begin{figure*}[!h]	
	\centering	
	\begin{minipage}[t]{\widthminipage}
		\centerline{	\includegraphics[width=\textwidth]{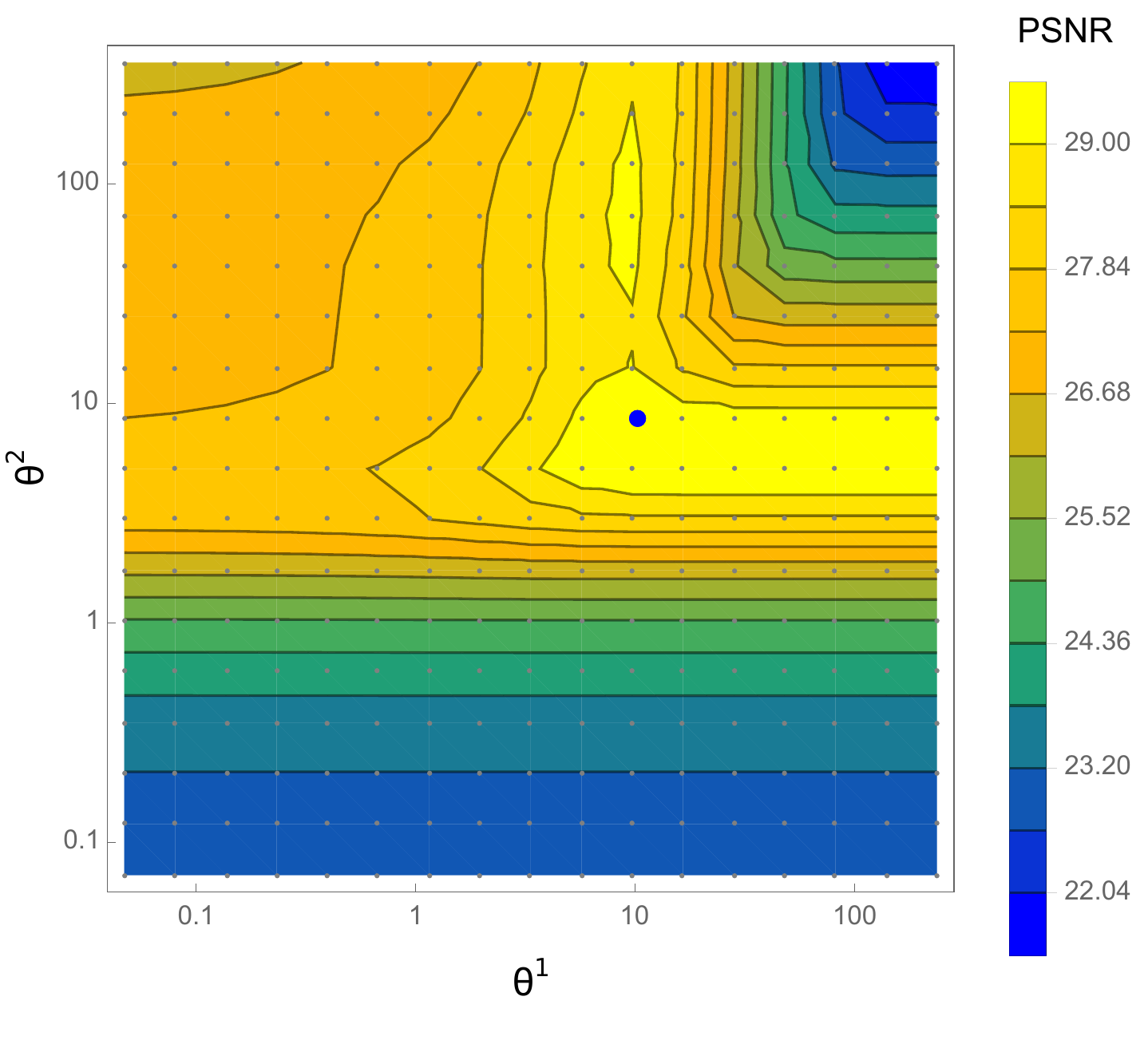}}
	\end{minipage}\hfill
	\begin{minipage}[t]{\widthminipage}
		\centerline{	\includegraphics[width=\textwidth]{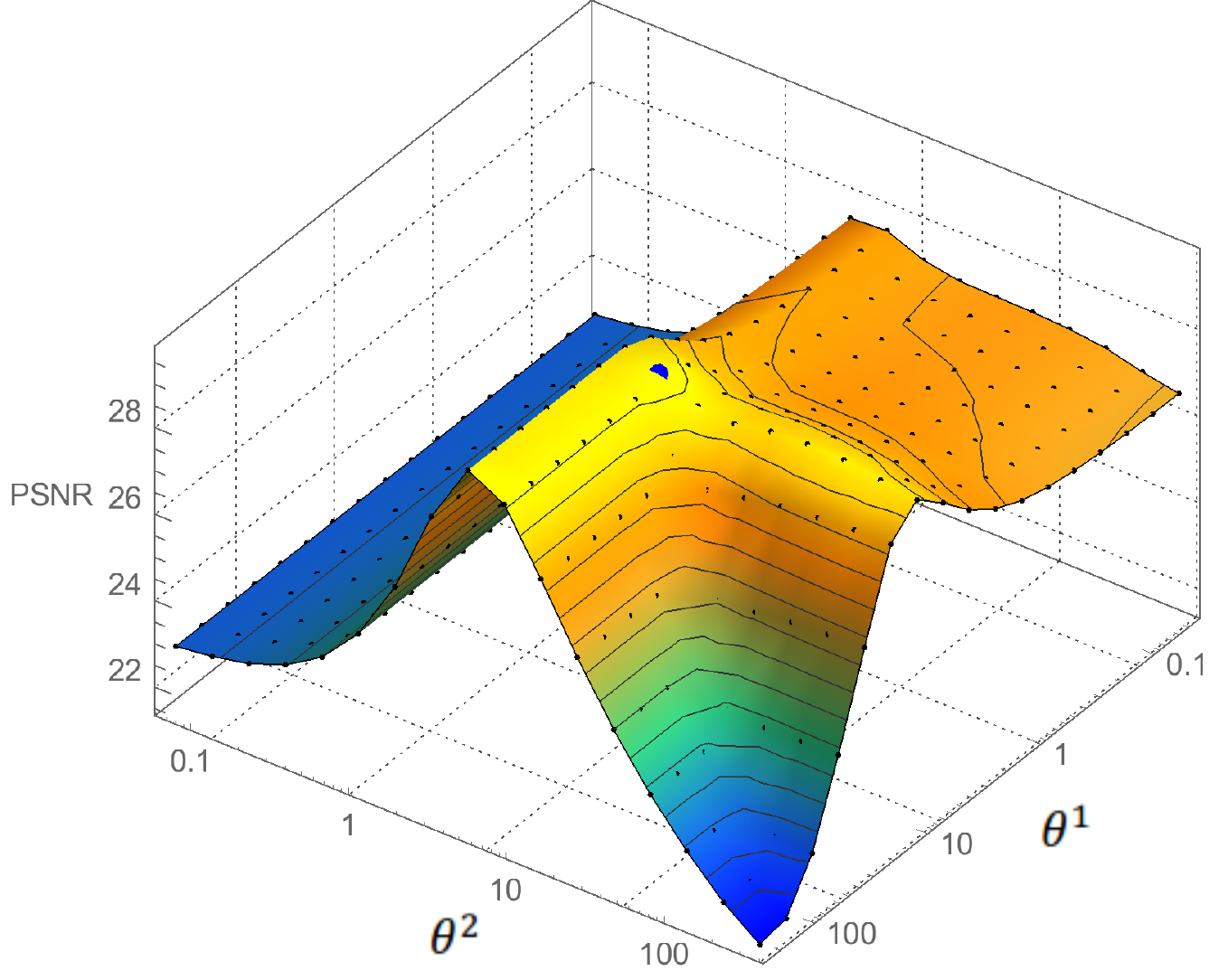}}	
	\end{minipage}
	\caption{	\small		
		Denoising with \tgvname \ prior on  \texttt{boat} image (\SNRname=\snrGTV~$\mathrm{dB}$). \PSNRname \ for different values of $\theta^1$ and $\theta^2$. Blue marker shows the location of $\thetaEB$ estimated with empirical Bayes using 2000 iterations.\normalsize}
	\label{fig:num-res-TGV-psnr-boat}
\end{figure*}

%PSNR LAKE
\begin{figure*}[!h]\centering		
	\begin{minipage}[t]{\widthminipage}
		\centerline{	\includegraphics[width=\textwidth]{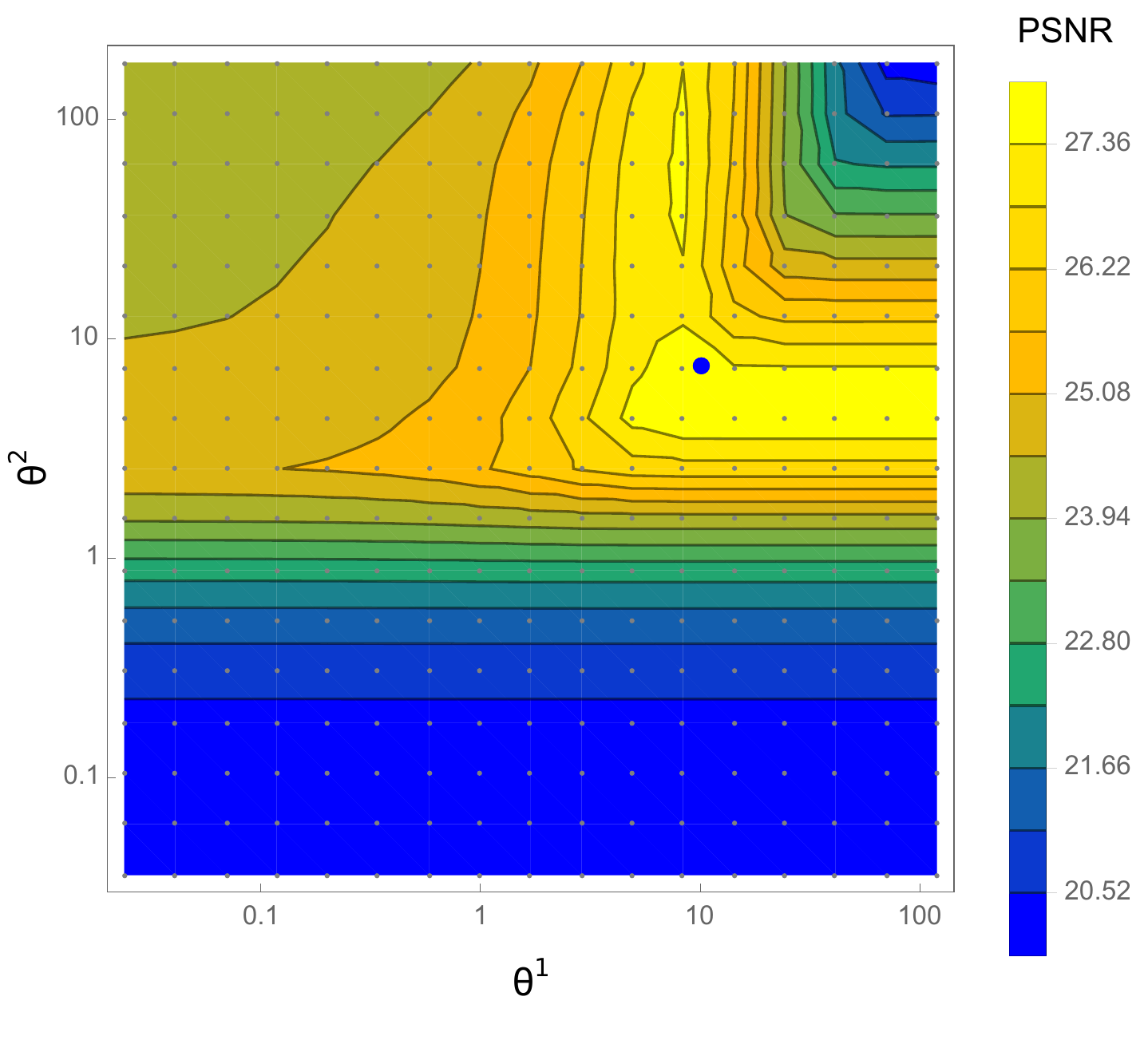}}
	\end{minipage}\hfill
	\begin{minipage}[t]{\widthminipage}
		\centerline{	\includegraphics[width=\textwidth]{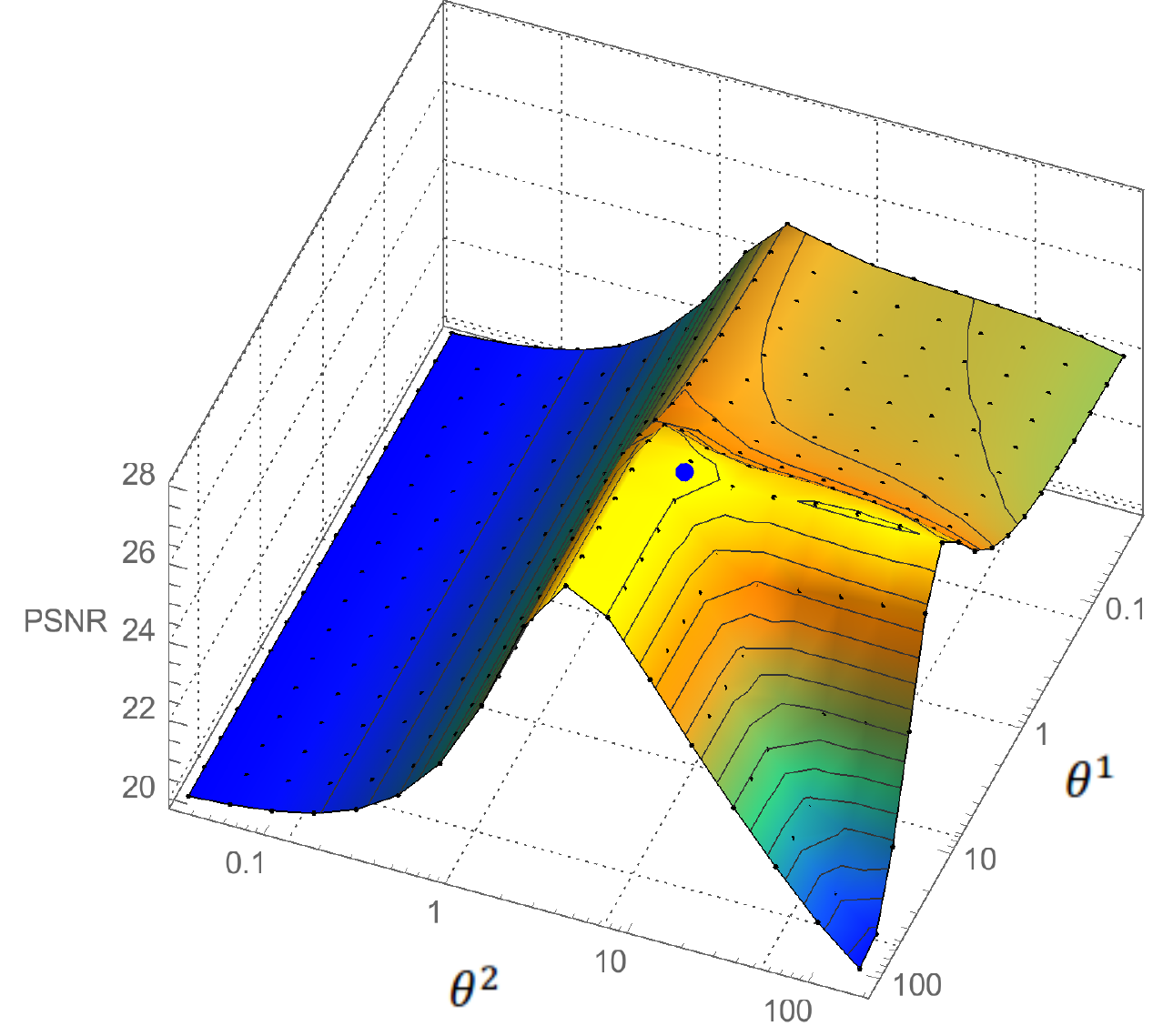}}	
	\end{minipage}
	\caption{	\small		
		Denoising with \tgvname \  prior on \texttt{lake} image (\SNRname=\snrGTV~$\mathrm{dB}$). \PSNRname \ for different values of $\theta^1$ and $\theta^2$. Blue marker shows the location of $\thetaEB$ estimated with empirical Bayes using 2000 iterations.\normalsize}
	\label{fig:num-res-TGV-psnr-lake}
\end{figure*}

Following on from this, \Cref{fig:num-res-TGV-theta} and \Cref{fig:num-res-TGV-relError} show, respectively, the evolution of the iterates and the relative change in the estimated values of $\theta^1$ and $\theta^2$, for the \texttt{lake} test image, and for \SNRname~$={8 ~\text{$\mathrm{dB}$}}$, \SNRname~$={12 ~\text{$\mathrm{dB}$}}$, and \SNRname~$={20 ~\text{$\mathrm{dB}$}}$. Observe that the algorithm converges quickly and can deliver a useful solution in approximately $50$ iterations if the weaker convergence criterion is used, or in approximately $500$ iterations if one uses a stricter convergence criterion.
\begin{figure}[h!]
	\centering
	\begin{minipage}[t]{0.3\textwidth}
		\centering
		\centerline{		\includegraphics[width=\textwidth]{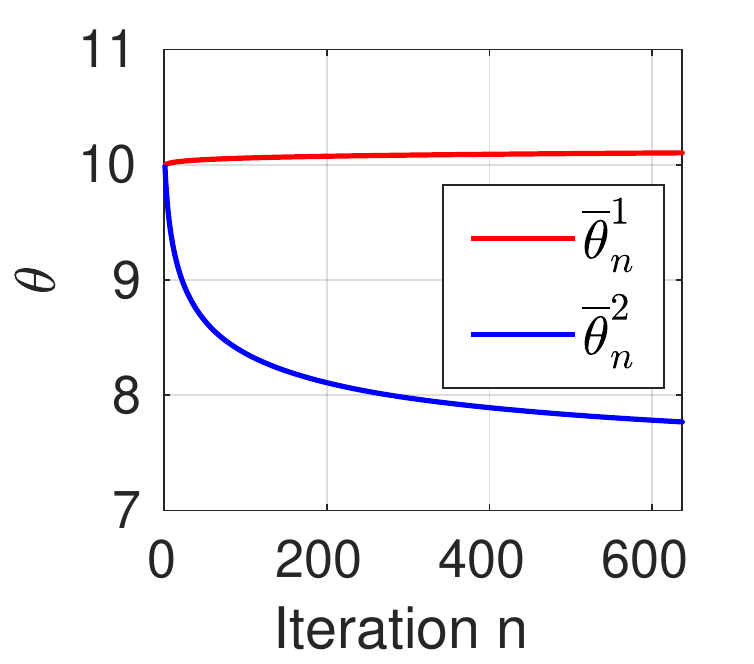}}
		\centerline{(a) \SNRname \ 8$\mathrm{dB}$}%\medskip
	\end{minipage}
	\begin{minipage}[t]{0.3\textwidth}
		\centering
		\centerline{		\includegraphics[width=\textwidth]{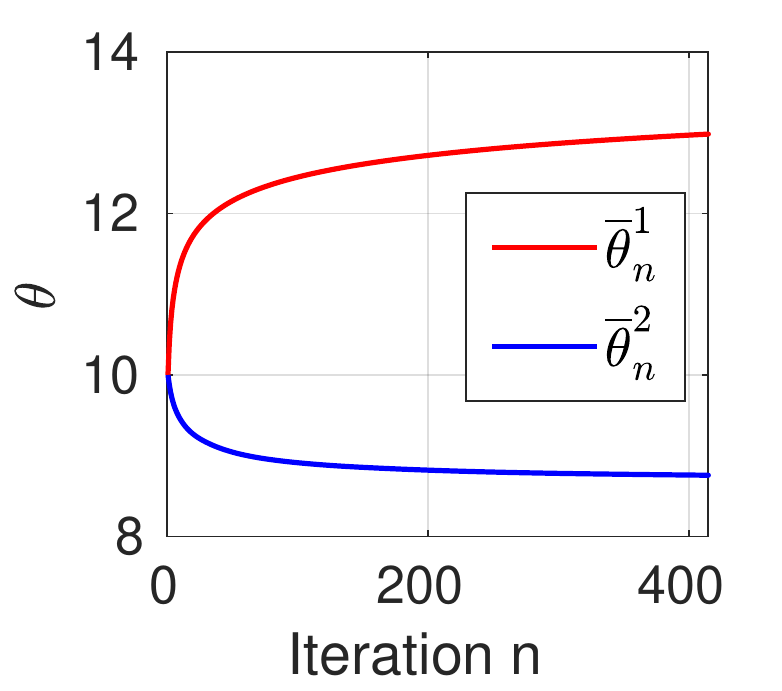}}
		\centerline{(b) \SNRname \ 12$\mathrm{dB}$}%\medskip
	\end{minipage}
	\begin{minipage}[t]{0.3\textwidth}
		\centering
		\centerline{		\includegraphics[width=\textwidth]{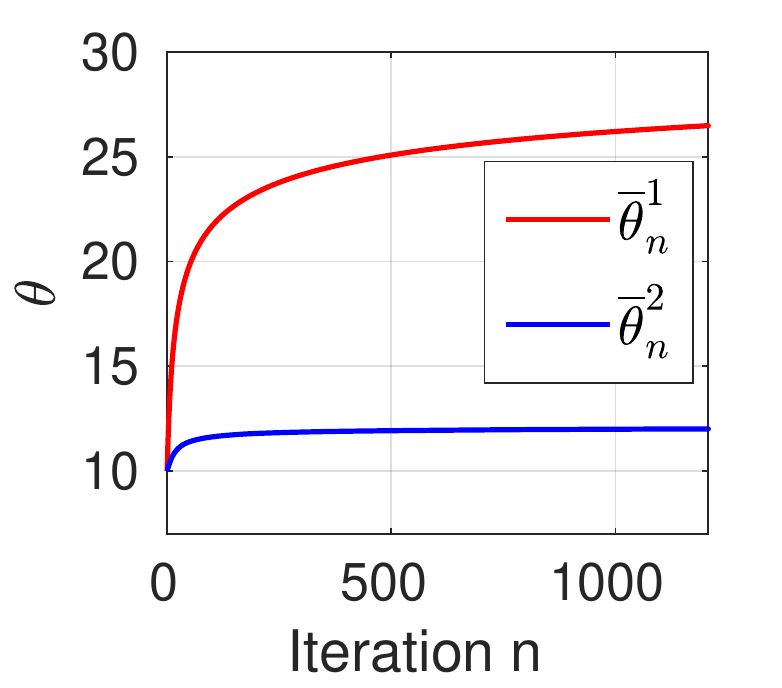}}
		\centerline{(c) \SNRname \ 20$\mathrm{dB}$}%\medskip
	\end{minipage}
	\label{fig:num-res-TGV-theta}
	\caption{\small 	Denoising with \tgvname \ prior. Evolution of the iterates $(\theta^1_n)_{n \in \nset} $ and $ (\theta^2_n)_{n \in \nset}$ for the \texttt{lake} test image for different  \SNRname \ values.}
\end{figure}
%Relative error
\def\widthfig{0.99\linewidth}
\def\widthminipage{0.33\linewidth}	
\begin{figure*}[!h]		
	\newcommand\incRelErrorGTV[1]{\includegraphics[width=\widthfig]{#1}\vspace{0.1cm}}	
	\centering
	\begin{minipage}[t]{\widthminipage}
		\centering	
		\centerline{\incRelErrorGTV{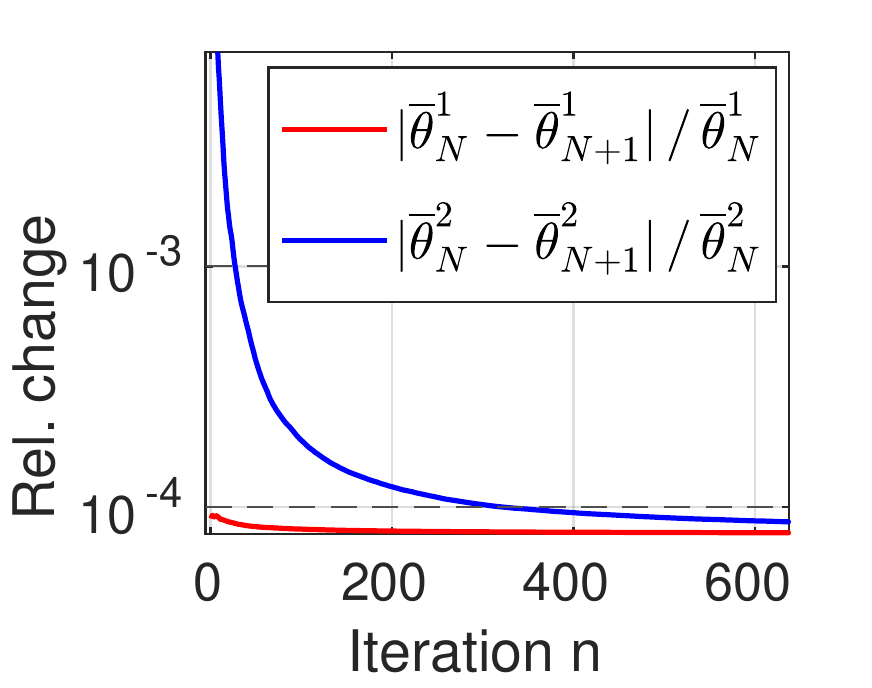}}
		\centerline{(a) \SNRname \ 8$\mathrm{dB}$}%\medskip
	\end{minipage}\hfill
	\begin{minipage}[t]{\widthminipage}
		\centering			
		\centerline{\incRelErrorGTV{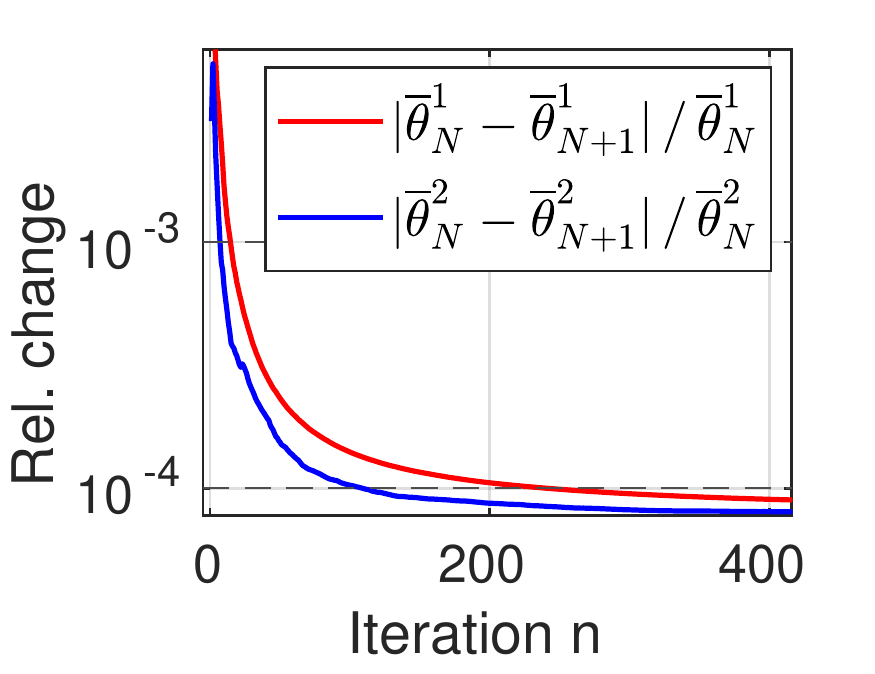}}	
		\centerline{(b) \SNRname \ 12$\mathrm{dB}$}
	\end{minipage}\hfill
	\begin{minipage}[t]{\widthminipage}
		\centering
		\centerline{\incRelErrorGTV{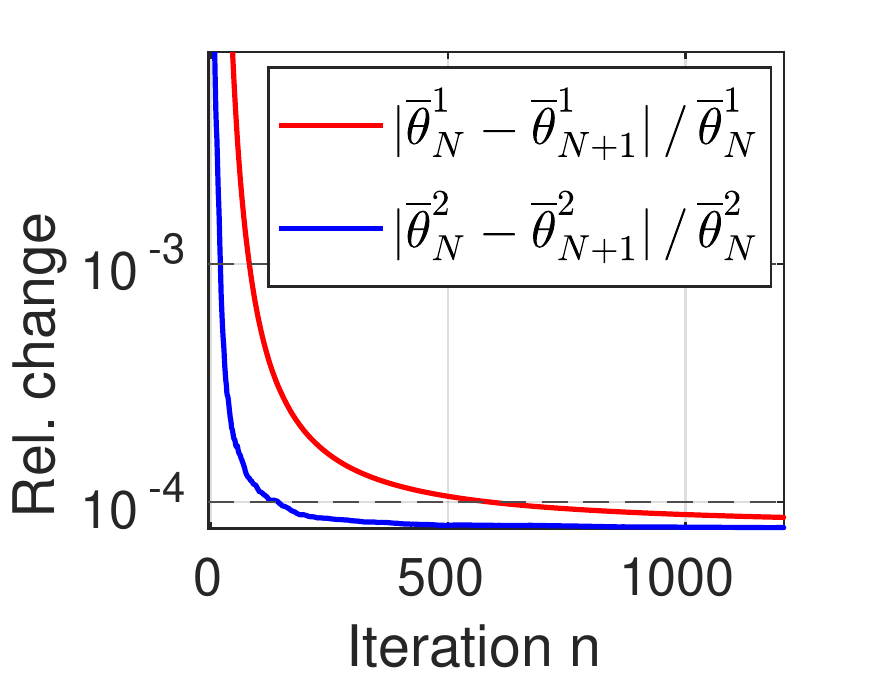}}		
		\centerline{(c) \SNRname \ 20$\mathrm{dB}$}
	\end{minipage}
	\caption{	\small		
		Denoising with \tgvname \ prior. Relative successive differences $|\thetaavg^i_N-\thetaavg^i_{N+1}|/\thetaavg^i_N$ with $i=1,2$ for the proposed method with the \texttt{lake} test image for different  \SNRname \ values.\normalsize}
	\label{fig:num-res-TGV-relError}
\end{figure*}

Lastly, \Cref{fig:num-res-TGV-thetaInitSurf} below explores the robustness to different initialisations by showing the evolution of the iterates on the landscape of PSNR values for the \texttt{flintstones} image with \SNRname~$={12 ~\text{$\mathrm{dB}$}}$. We consider three different initialisations, highlighted in colours red, green, and blue, and observe that in the three cases the algorithm quickly converges to values for the parameters $\theta^1$ and $\theta^2$ that are close-to-optimal in terms of the resulting PSNR. However, the algorithm is not fully robust to bad initialisation because of the non-convexity and the approximations involved. For example, initialising the algorithm in the corner of the PSNR landscape (e.g., $\theta^1_0=\theta^2_0=100$) does not lead to a satisfactory solution, indicating that a careful initialisation is required. Alternatively, one could also initialise the algorithm by performing a certain number of updates on $\theta^1$ with $\theta^2$ fixed to a small value - e.g. $\theta^2 = 1$ - to keep the model close to the conventional total variation regulariser, and then update both $\theta^1$ with $\theta^2$ until the convergence criterion is satisfied.

\begin{figure}[h!]
	\centering
	\includegraphics[width=\widthfig,trim={0cm 0.3cm 0cm 0cm},clip]{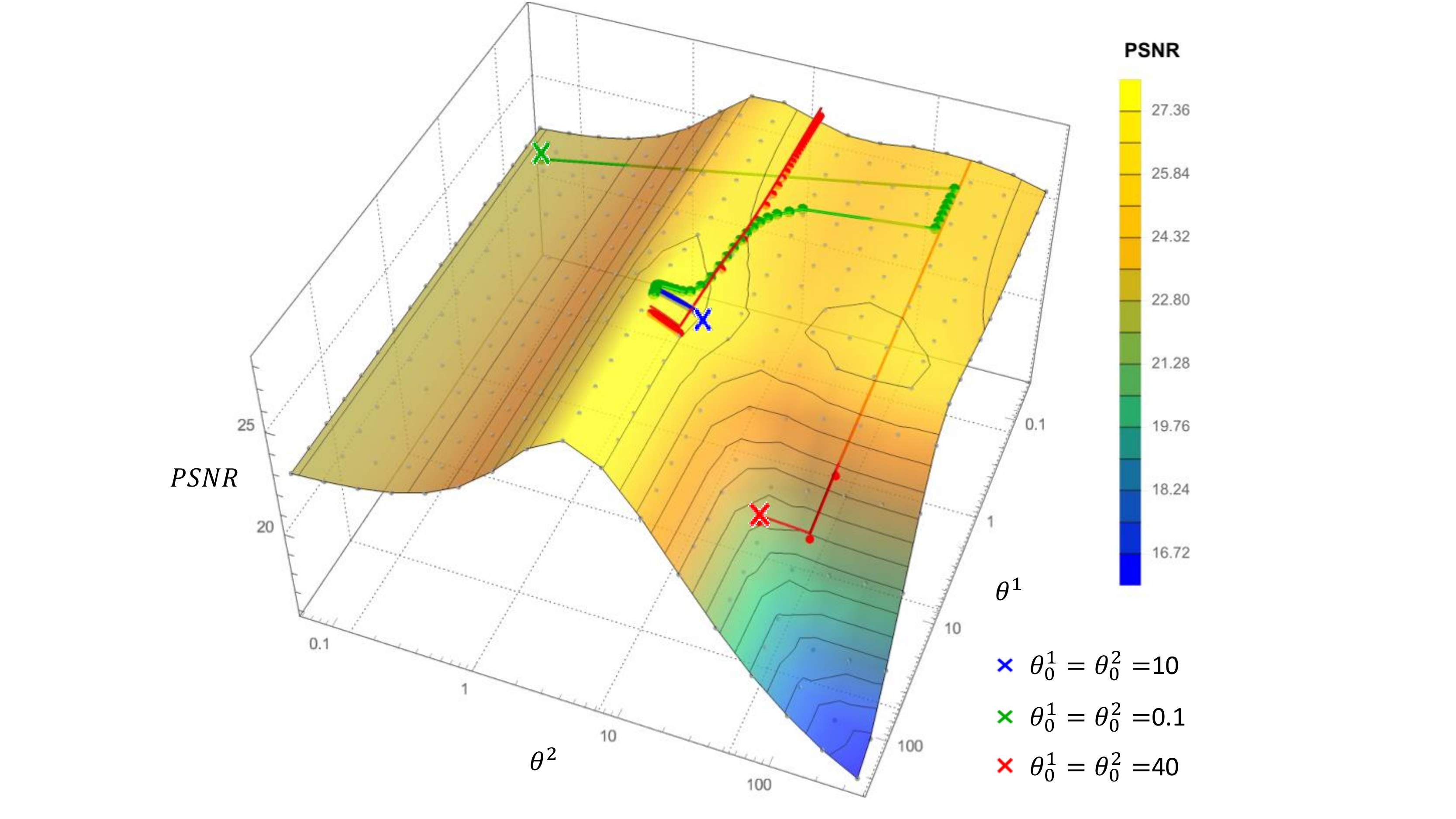}
	\label{fig:num-res-TGV-thetaInitSurf}
	\caption{Denoising with \tgvname \ prior on the \texttt{flintstones} image (\SNRname=12~$\mathrm{dB}$). Evolution of the iterates $ (\theta^1_n)_{n \in \nset} $ and $ (\theta^2_n)_{n \in \nset} $ for different initial values  $\theta^1_0$ and $\theta^2_0$. When initialising with $\theta^1_0=\theta^2_0=40$ (red) the algorithm converges to a different point with similar PSNR.}
\end{figure}

To conclude, we note that there are several other generalisations of the total variation regularisation (see \cite{bredies2010total}). We have chosen to perform our experiments with \eqref{eq:Experiments-TGV-norm} because of the availability of the efficient MATLAB implementation \cite{codeTGV}. However, we expect that \Cref{algo:MCMC_double_chain} will
also perform well for other generalisations of the total variation norm, particularly the second-order generalisation proposed in \cite{bredies2010total} that is very similar to \eqref{eq:Experiments-TGV-norm}.

%%%%%%%%%%%%%%%%%%%%%%%%%%%%%%%%%%%%%%%%%%%%%%%%%%%%%
%% CONCLUSIONS
%%%%%%%%%%%%%%%%%%%%%%%%%%%%%%%%%%%%%%%%%%%%%%%%%%%%%

\section{Conclusions}
\label{sec:conc}
This paper considered the automatic selection of regularisation parameters in imaging inverse problems, with a particular focus on problems that are convex w.r.t. the unknown image and possibly non-smooth, and which would be typically solved by maximum-a-posteriori estimation by using modern proximal optimisation techniques. We adopted an empirical Bayesian approach and proposed a computational method to efficiently and accurately estimate regularisation parameters by maximum marginal likelihood estimation. The considered marginal likelihood function is computationally intractable and we proposed to address this difficulty by using a stochastic proximal gradient optimisation algorithm that is driven by two proximal MCMC samplers, and which tightly combines the strengths of modern high-dimensional optimisation and Monte Carlo sampling techniques. Because the proposed method uses the same basic operators as proximal optimisation algorithms, namely gradient and proximal operators, it is straightforward to apply to problems that are currently solved by proximal optimisation. Moreover, we provided a detailed theoretical analysis of the proposed methodology, including easily verifiable conditions for convergence. In addition to being highly computational efficient and having strong theoretical underpinning, the proposed methodology is very general and can be used to simultaneously estimate multiple regularisation parameters, unlike some alternative approaches from the literature that can only handle a single or scalar parameter. 

We demonstrated the methodology with a range of imaging problems and models. We first considered image denoising and non-blind deblurring problems involving scalar regularisation parameters and showed that the method achieved close-to-optimal performance in terms of MSE and outperformed alternative approaches from the literature. We then successfully applied the method to two challenging problems involving bivariate regularisation parameters: a sparse hyperspectral unmixing problem with a total-variation plus sparsity prior, and a challenging denoising problem using a second-order total generalised variation regulariser. Again, the method delivered close-to-optimal results, as measured by estimation MSE.

Future work will focus on relaxing the convexity assumptions to provide theoretical convergence guarantees for non-convex problems, and on improving computational efficiency by using the recently proposed accelerated proximal Markov kernels \cite{Vargas2019Arxiv}. The application of the proposed methodology to challenging problems arising in medical and astronomical imaging is currently under investigation. Another important perspective for future work is to extend this methodology to semi-blind and blind imaging problems, as well as to problems involving space-varying regularisation parameters \cite{lanza2017space}.

%%%%%%%%%%%%%%%%%%%%%%%%%%%%%%%%%%%%%%%%%%%%%%%%%%%%%
%% ACKNOWLEDGEMENTS
%%%%%%%%%%%%%%%%%%%%%%%%%%%%%%%%%%%%%%%%%%%%%%%%%%%%%
\section{Acknowledgements} We are grateful to Dr. Charles Deledalle for
providing us with a SUGAR implementation for an ADMM solver available at
\url{https://github.com/deledalle/sugar/blob/master/solvers/admm.m}. AD acknowledges financial support from Polish National Science Center grant: NCN UMO-2018/31/B/ST1/00253. The work of MP is supported by UKRI/EPSRC under grant EP/T007346/1.
%%%%%%%%%%%%%%%%%%%%%%%%%%%%%%%%%%%%%%%%%%%%%%%%%%%%%
%% BIBLIOGRAPHY
%%%%%%%%%%%%%%%%%%%%%%%%%%%%%%%%%%%%%%%%%%%%%%%%%%%%%
\bibliographystyle{plain}
\bibliography{refs}
	
 \appendix
 \section{Fisher's identity}
 \label{sec:fishers-identity}
 Fisher's identity is a standard result in the probability literature (e.g. see \cite[Proposition D.4]{douc2014nonlinear}). We reproduce its proof here for completeness.
 \begin{proposition}
 	\label{prop:fisher}  
 	For any $\theta \in \Theta \in \rset^{\dimtheta}$ and $\tx \in \rset^{\dim}$, let  $(x,y) \mapsto p(x,y | \theta)$  and $y \mapsto p(y|\tx)$ be positive probability density functions on $\rset^{\dim} \times \rset^{\dim_y}$ and $\rset^{\dimy}$. Assume that for any $x \in \rset^{\dim}$ and $\theta \in \interior(\Theta)$, $\theta \mapsto p(y, x|\theta)$ is differentiable. In addition, assume that for any $y \in \rset^{\dimy}$ and $\theta \in \interior(\Theta)$, there exist $\vareps >0$ and $\tg$ such that for any $\tilde{\theta} \in \cball{\theta}{\vareps}$ and $x \in \rset^{d}$, $\normLigne{\nabla_{\theta} p(y, x | \tilde{\theta})} \leq \tg(x)$ with $\int_{\rset^{\dim}} \tg(x) p(y|x) \rmd x < +\infty$. Then, for any $y \in \rset^{\dimy}$, $\theta \mapsto p(y|\theta)$ is differentiable over $\interior(\Theta)$ and we have for any $y \in \rset^{\dimy}$ and $\theta \in \interior(\Theta)$,
 	\begin{equation}
 	\nabla_{\theta} \log p(y | \theta) = \int_{\rset^{\dim}} p(x | y,  \theta) \nabla_{\theta} \log p(y, x | \theta)  \rmd x \eqsp .
 	\end{equation}
 \end{proposition}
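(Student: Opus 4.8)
The plan is to differentiate the marginalisation identity $p(\y|\theta)=\int_{\rset^{\dim}}p(\tx,\y|\theta)\,\rmd\tx$ under the integral sign and then recast the outcome as a conditional expectation. First I would fix $\y\in\rset^{\dimy}$ and $\theta\in\interior(\Theta)$, invoke the $\vareps$-ball $\cball{\theta}{\vareps}$ furnished by the hypotheses, and reduce to differentiating in one coordinate $\theta^i$ at a time, so that the problem becomes a one-dimensional differentiation-under-the-integral statement.

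The core step is justifying $\nabla_\theta\int_{\rset^{\dim}}p(\tx,\y|\theta)\,\rmd\tx=\int_{\rset^{\dim}}\nabla_\theta p(\tx,\y|\theta)\,\rmd\tx$. For this I would form the difference quotient $t^{-1}\bigl[p(\tx,\y|\theta+t\ebf_i)-p(\tx,\y|\theta)\bigr]$ and, using that $\theta\mapsto p(\tx,\y|\theta)$ is differentiable by assumption, apply the mean value theorem to rewrite it as $\partial_{\theta^i}p(\tx,\y|\theta+\xi t\ebf_i)$ for some $\xi\in(0,1)$; for $|t|$ small enough the intermediate point stays in $\cball{\theta}{\vareps}$. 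The domination assumption then supplies an integrable envelope independent of $t$, and the dominated convergence theorem lets me pass to the limit $t\to0$ inside the integral. This simultaneously proves that $\theta\mapsto p(\y|\theta)$ is differentiable on $\interior(\Theta)$ and yields
\[
\nabla_\theta p(\y|\theta)=\int_{\rset^{\dim}}\nabla_\theta p(\tx,\y|\theta)\,\rmd\tx\eqsp.
\]

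To conclude, I would exploit positivity of the joint density to write $\nabla_\theta p(\tx,\y|\theta)=p(\tx,\y|\theta)\,\nabla_\theta\log p(\tx,\y|\theta)$, divide through by $p(\y|\theta)>0$, and use $\nabla_\theta\log p(\y|\theta)=\nabla_\theta p(\y|\theta)/p(\y|\theta)$ on the left-hand side. Bayes' rule then gives $p(\tx,\y|\theta)/p(\y|\theta)=p(\tx|\y,\theta)$, producing the claimed identity. The single delicate point is the interchange of $\nabla_\theta$ and the integral; everything after it is purely algebraic. In executing the domination step I would keep track of the factorisation $p(\tx,\y|\theta)=p(\y|\tx)p(\tx|\theta)$, which clarifies why the integrability condition in the hypothesis carries the weight $p(\y|\tx)$: the partial derivative $\partial_{\theta^i}p(\tx,\y|\theta)$ inherits the factor $p(\y|\tx)$ from the likelihood, so that $\tg$ multiplied by $p(\y|\tx)$ is the natural dominating function driving the dominated convergence argument.
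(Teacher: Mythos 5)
Your proposal follows essentially the same route as the paper's proof: the paper simply invokes the Leibniz integral rule to differentiate $p(y|\theta)=\int_{\rset^{\dim}} p(\tx,y|\theta)\,\rmd\tx$ under the integral sign and then performs the identical algebraic rewriting via $\nabla_\theta p = p\,\nabla_\theta\log p$ and $p(\tx,y|\theta)/p(y|\theta)=p(\tx|y,\theta)$. Your mean-value-theorem plus dominated-convergence justification of the interchange, together with the observation that the factorisation $p(\tx,y|\theta)=p(y|\tx)p(\tx|\theta)$ makes $\tg(\tx)\,p(y|\tx)$ the operative dominating function (which is exactly what the hypothesis $\int_{\rset^{\dim}}\tg(\tx)p(y|\tx)\,\rmd\tx<+\infty$ is calibrated for), merely fills in the details the paper leaves implicit behind the phrase ``Leibniz integral rule,'' and is correct.
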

 
 \begin{proof}
 	Let $y \in \rset^{\dimy}$. It is clear using the Leibniz integral rule that $\theta \mapsto p(y|\theta)$ is differentiable over $\interior(\Theta)$ and we have for any $\theta \in \interior(\Theta)$
 	\begin{align}
 		\nabla_{\theta} \log p(y|\theta) &= \left . \int_{\rset^{\dim}} p(y|x) \nabla_{\theta} p(y, x|\theta) \rmd x \middle/ p(y|\theta) \right. \\ &= \left . \int_{\rset^{\dim}} p(y, x |\theta) \nabla_{\theta} \log p(y, x|\theta) \rmd x \middle/ p(y|\theta) \right. = \int_{\rset^{\dim}} p(x | y , \theta) \nabla_{\theta} \log p(y, x|\theta) \rmd x \eqsp ,
 	\end{align}
 	which concludes the proof.
 \end{proof}
 
 \section{Practical implementation guidelines}
 \label{append:sec:practicalImplementGuidelines}
 
 In this section we provide some additional guidelines regarding the implementation and troubleshooting of the proposed methodology.  
 \subsection{Testing the MCMC sampler}
 \label{append:ssec:setup-mcmc-sampler}
 Before trying to adjust the value of $\theta \in \Theta$ with the algorithm, we strongly recommend starting by testing the MCMC sampler with a fixed value of $\theta$. A simple way to see if the Markov chain is working as expected, is to  plot the value of the log-probability of the samples.  
 
 As mentioned in \Cref{sec:probStatement}, there is a useful concentration phenomenon  studied in \cite[Theorem 1.2]{Bobkov2011} which implies that for high-dimensional log-concave densities $\pi$, a Markov chain targeting $\pi$ eventually start generating samples $X_n$ for which $\log \pi(X_n)$ is approximately constant (and close to the entropy). Therefore, if the MCMC sampling is successful the log-probability stabilises after some iterations and remains more or less constant. 
 \def\widthfig{0.99\linewidth}
 \def\widthminipage{0.49\linewidth}	
 \begin{figure*}[!htb]		
 	\begin{minipage}[t]{\widthminipage}
 		\centerline{	\includegraphics[width=\textwidth]{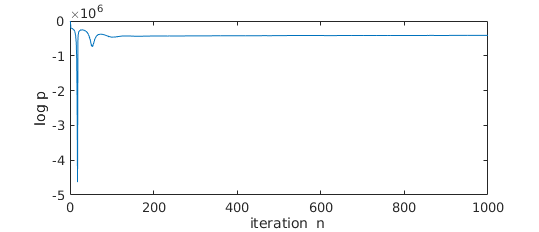}}
 		\centerline{(a) TV-deblurring}
 	\end{minipage}\hfill
 	\begin{minipage}[t]{\widthminipage}
 		\centerline{	\includegraphics[width=\textwidth]{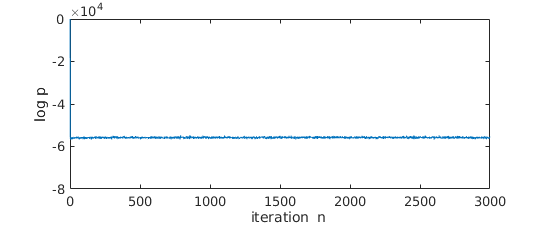}}	
 		\centerline{(b) TGV-denoising}
 	\end{minipage}
 	\caption{	\small Evolution of $(\log p(X_1^n|\y,\theta))_{n \in \nset}$ with $(X_1^n)_{n \in \nset}$ sampled using MYULA and targeting $p(\cdot|y, \theta)$. Results for (a) TV-deblurring with $\mathrm{SNR} = 40 \mathrm{dB}$ and (b) TGV-denoising with   $\mathrm{SNR} = 8 \mathrm{dB}$.\normalsize}
 	\label{fig:logPiCompare}
 \end{figure*}
 
 Conversely, if plots show that the chain is divergent or very unstable, then there might be a problem with the sampler. A common cause for divergence is setting a discretisation step-size that is too large. We would advise not to proceed with the estimation of $\theta$ until the sampler shows a stable behaviour similar to the ones shown on \Cref{fig:logPiCompare}.

 \subsection{Monitoring convergence in \Cref{algo:MCMC_single_chain}, \Cref{algo:MCMC_single_chain_separable} and \Cref{algo:MCMC_double_chain}}
 \label{append:ssec:checking-convergence}
 \paragraph{Lack of convergence due to bound saturation} If one observes that the iterate $\theta_n$ saturates the limits of the projection interval $\Theta$,  one should first verify that the Markov kernels are working properly (see recommendations in \ref{append:ssec:setup-mcmc-sampler}). If they are, then the problem might be that the solution lies outside $\Theta$. If $\btheta$ is multivariate and only some components are saturating the bounds, then check the scale and projection bounds for those specific components.  
 \paragraph{Verifying proper convergence}
 As the algorithm converges, the iterates $\theta_n$ get closer to a maximiser of $p(y|\theta)$ and the gradient estimates $\Delta_{m_n, \theta_n}$ vanish in expectation. Hence, the residual $\normLigne{\Delta_{m_n, \theta_n}}$ should become small (on average) as $n$ increases, i.e., $\g(X^n_k)$ will become close to $\dim/(\alpha\theta_n)$ in \Cref{algo:MCMC_single_chain}, or close to $ \g(\bX^n_k)$ in \Cref{algo:MCMC_double_chain}\footnote{For \Cref{algo:MCMC_single_chain_separable} use a component-wise comparison between $ \frac{\abs{\msa_i}}{\alpha_i \theta_n^i}$ and $\tg_i\left({X^n_{k}}_{[\msa_i]}\right)$ for every $i \in \{1, \dots, \dimtheta\}$}. It is therefore useful to plot the traces of $(\g(X^n_{k=k_0}))_{n \in \nset}$ together with $(\g(\bX^n_{k=k_0}))_{n \in \nset}$ or $(\dim/(\alpha\theta_n))_{n \in \nset}$ as appropriate, to check that the algorithm is converging. The trace can be plotted for a fixed value of $k=k_0$ as this is enough to monitor the convergence. This is illustrated for \Cref{algo:MCMC_double_chain} in \Cref{fig:coupling-g-2mcmc} below, where we observe how these terms become closer as the number of iterations increases.
 \begin{figure}[h!]
 	\centering
 	\includegraphics[width= 0.9 \textwidth]{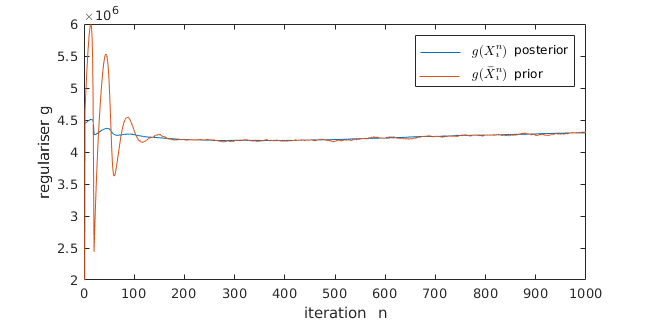}
 	\caption{\small Evolution of the iterates $(\g(X^n_{1}))_{n \in \nset}$ and $ (g(\bX^n_{1}))_{n \in \nset}$ for the proposed method in a deblurring experiment with a \tvname \  prior and $\mathrm{SNR}=40 \mathrm{dB}$. }
 	\label{fig:coupling-g-2mcmc}
 \end{figure}
 
 If $\normLigne{\Delta_{m_n, \theta_n}}$ does not vanish as $n$ increases this could indicate a problem with the choice of $\delta_n$ or that the two MCMC kernels have very different speed (See \Cref{append:ssec:tips-for-2-chains}).

 \subsection{Working with two MCMC chains in \Cref{algo:MCMC_double_chain}}
 \label{append:ssec:tips-for-2-chains}	
 Using two MCMC kernels simultaneously can be problematic if their
 convergence speed, or effective sample size per iteration, is very
 dissimilar as this will deteriorate the convergence properties of the
 SAPG
 algorithm. 
 
 This kind of imbalance can be detected by plotting the sample autocorrelation for each chain using $g$ as a summary statistic. If the autocorrelation plots decay at significantly different rates, it is necessary to reduce the correlation within the slower chain by either introducing some thinning (which essentially amounts to concatenating several iterations of the kernel to  improve its convergence speed) or by increasing the step-size $\gamma$ (see \Cref{append:ssec:convergence-speed}).
 
 \subsection{Working with multivariate $\theta$} 
 \label{append:ssec:vectorial-theta} 
 When $\btheta$ is multivariate each component of the solution might have a different order of magnitude. In this case, we recommend using different step-size scales for each component of $\btheta$. For example, we can compute  $\btheta_n = \Pi_{\Theta}\left[\theta_n + D~ \delta_n \Delta_{m_n, \theta_n}\right]$, where $D \in \mathbb{\rset}^{\dimtheta \times \dimtheta}$ is a diagonal matrix, and each element of the diagonal scales one component of $\theta$.
 It is also helpful to remember that one can run the algorithm with some components of $\btheta$ fixed. This allows isolating components and verifying convergence for subsets of $\btheta$.
 \subsection{Convergence speed} 
 \label{append:ssec:convergence-speed}
 The bottleneck in convergence speed is the correlation between the samples generated by the MCMC kernels. To increase the convergence speed, one has two main alternatives: a) to reduce the correlation between samples, or b) to reduce the computational cost of each iteration in order to afford more iterations.
 
 \paragraph{Reducing sample correlation}
 To reduce the correlation between samples, the step-size $\gamma$ must be as large as possible. If running the algorithm with two chains, and the kernel sampling from the prior distribution is the limiting factor, one can consider increasing the smoothing parameter $\lambda'$ of this particular kernel, in order to be able to increase the value of the discretisation step-size $\gamma'$. In more general cases where the limiting factor for $\gamma$ is $\Ly$ 
 there are a few strategies that might help overcome this difficulty. The first strategy is to use preconditioning 
 (see the hyperspectral unmixing experiment in \Cref{ssec:experiments.hyperunmix}) to reduce gradient anisotropy and improve the condition number of the problem.
 \paragraph{Speeding up each iteration}
 The most computationally heavy step in a MYULA iteration is usually the evaluation of the proximal operator. If the proximal operator is being approximated by an iterative solver, it is worth trying to improve efficiency by either using better solver, by warm starting iterations, or by using a weaker convergence criterion.
 
 \subsection{Estimation Bias} 
 \label{append:ssec:estimation-bias}
 If the algorithm converges but towards a poor value of $\theta \in \Theta$ it might be due to the bias in the MCMC kernels. As mentioned previously, there are many levels of approximation and the bias is mostly affected by the discretisation step $\gamma$ and the smoothing parameter $\lambda$. However, based on what we have observed in practice, the limiting factor tends to be $\lambda$. If there is a bias issue, we recommend trying to reduce $\lambda$ to obtain a better approximation of the target distribution, at the expense of some deterioration in convergence speed. When convergence is slowed down, special attention has to be paid in the case of the double MCMC chain algorithm. If the effective sample size of the two chains becomes too dissimilar, the algorithm might have difficulty converging. In this case, it is possible to do some thinning in the slower chain, as suggested in \Cref{append:ssec:tips-for-2-chains}.
 
 \section{Fair comparison of different methodologies}
 \label{append:method-comparison}
 Comparing different techniques for selecting the value of the regularisation parameter is highly non-trivial. Methods such as SUGAR are solver dependent and try to find the best value of $\theta$ for a given solver, with a given setup (number of iterations, parameters, etc.). Other algorithms such as the hierarchical one proposed in \cite{EUSIPCO}, depend on the solver but do not seek to optimise $\theta$ for that particular solver. The algorithm we propose does not depend directly on the solver. 
 
 When running statistics on our experiments we noticed an interesting phenomenon. For the deblurring experiments, we use the solver SALSA \cite{salsa2010fast}, which is an efficient implementation of the alternating direction method of multipliers (ADMM). When running the hierarchical Bayesian algorithm, we implement it with SALSA and set up the tolerance to $ 10^{-3} $ and 150 iterations which seemed sufficient to render very good results. However, when we build the MSE($\theta$) curves for \Cref{fig:num-res-synthL1-MSE} (by sampling many points and interpolating), we use SALSA with tolerance $ 10^{-5} $ and 1000 iterations as there were some values of $\theta$ for which SALSA did not converge well with tolerance $ 10^{-3} $. See in \Cref{fig:method-compare} that the position of the minimum MSE changes for the two different SALSA configurations. 
 When computing the average results for 10 images, the parameters obtained with the hierarchical method fell closer to the minimum of the red curve, and the ones obtained with the proposed empirical method fell closer to the minimum of the blue curve. Running the hierarchical method with tolerance $ 10^{-5}$ produced similar results but significantly increased the computing times.
 
 The criterion we opt for was to use SALSA with the strictest tolerance and highest number of iterations, because this configuration gives the overall best estimations.

 \subsection{Comparing with solver-dependent methods}
 As mentioned previously, algorithms like SUGAR  try to find the best value of $\theta$ for a given solver, with a given number of iterations, and specific parameters. This means that unless SUGAR is implemented with the exact same solver used to construct the $\mathrm{MSE}(\theta)$ curves as the ones in \Cref{fig:method-compare}, the values of $\theta$ computed with SUGAR might yield poor results according to the $\mathrm{MSE}(\theta)$ curve but good results with the specific solver used in SUGAR. For this reason, to achieve a fairer comparison, we compute an equivalent $\theta_{\EQ}$ in the following way. The SUGAR algorithm returns an estimated $\theta_{\SUG}$ and a corresponding $\mathrm{MSE}_{\SUG}$ obtained with that $\theta_{\SUG}$. Given an $\mathrm{MSE}(\theta)$ curve, we define the equivalent  $\theta_{\EQ}$ as 
 $\theta_{\EQ} = \underset{\theta \in \Theta}{\mathrm{argmin}}~ |\theta-\theta_{\SUG}|\quad s.t. \quad \mathrm{MSE}(\theta)=\mathrm{MS}E_{\SUG}\,$,
 which we plot in \Cref{fig:num-res-TV-MSEplot}. 
 \begin{figure}[h!]
 	\centering
 	\includegraphics[width= 0.65 \textwidth]{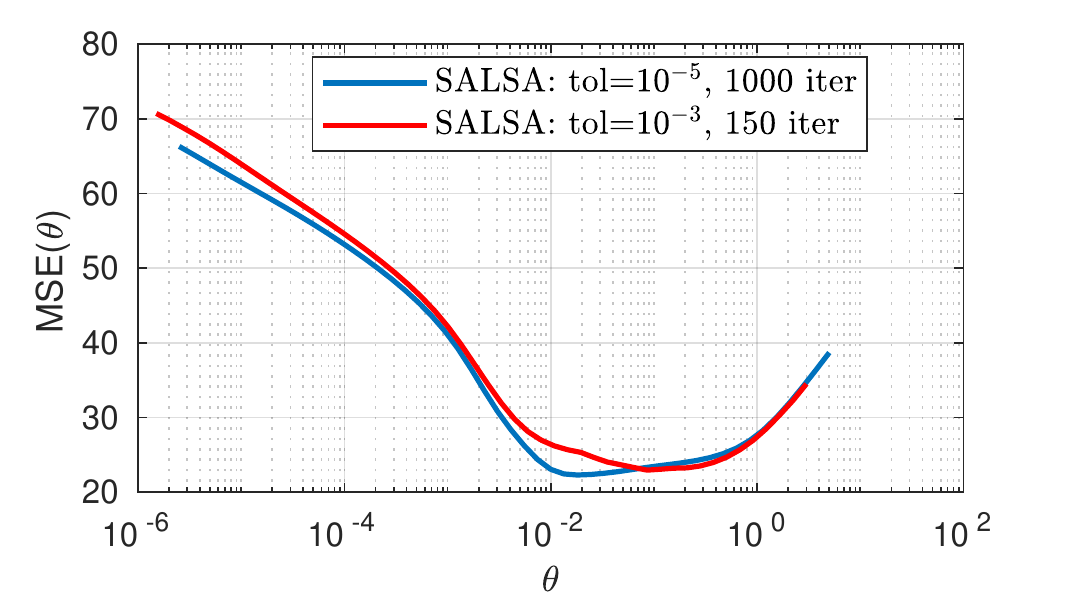}
 	\caption{\small \MSEname($\theta$) for wavelet synthesis-$\ell_1$ deconvolution for $\mathrm{SNR}=20 \mathrm{dB}$ with \texttt{boat} test image. The curves are computed with different tolerance and maximum iterations using $\mathrm{SALSA}$ solver.  }
 	\label{fig:method-compare}
 \end{figure}
  
\end{document}